\newtheorem{theorem}{Theorem}[section]
\newtheorem{conj}[theorem]{Conjecture}
\newtheorem{proposition}[theorem]{Proposition}
\newtheorem{corollary}[theorem]{Corollary}
\newtheorem{definition}{Definition}
\newtheorem{remark}{Remark}
\definecolor{Red}{rgb}{1,0,0}
\definecolor{Blue}{rgb}{0,0,1}
\definecolor{Olive}{rgb}{0.41,0.55,0.13}
\definecolor{Green}{rgb}{0,1,0}
\definecolor{MGreen}{rgb}{0,0.8,0}
\definecolor{DGreen}{rgb}{0,0.55,0}
\definecolor{Yellow}{rgb}{1,1,0}
\definecolor{Cyan}{rgb}{0,1,1}
\definecolor{Magenta}{rgb}{1,0,1}
\definecolor{Orange}{rgb}{1,.5,0}
\definecolor{Violet}{rgb}{.5,0,.5}
\definecolor{Purple}{rgb}{.75,0,.25}
\definecolor{Brown}{rgb}{.75,.5,.25}
\definecolor{Grey}{rgb}{.5,.5,.5}
\definecolor{Pink}{rgb}{1,0,1}
\definecolor{DBrown}{rgb}{.5,.34,.16}
\definecolor{Black}{rgb}{0,0,0}
\def\cP{{\cal P}}
\def\cF{{\cal F}}
\def\normal{{\sf N}}
\def\ve{\varepsilon}
\def\prob{{\mathbb P}}
\def\hx{\widehat{x}}
\def\hr{\widehat{r}}
\def\argmin{{\rm argmin}}
\def\seF{{\sf F}}
\def\cost{{\cal C}}
\def\sign{\mbox{\rm sign}}
\def\eps{\varepsilon}
\def\stMSE{{\sf mse}}
\def\laMSE{{\sf MSE}}
\def\MSE{{\rm MSE}}
\def\MMSE{{\rm MMSE}}
\def\sMMSE{\mbox{\rm\footnotesize MMSE}}
\def\reals{{\mathbb R}}
\def\naturals{{\mathbb N}}
\def\<{\langle}
\def\>{\rangle}
\def\E{{\mathbb E}}
\def\identity{{\mathbf I}}
\def\AMP{{\textrm{\rm AMP}}}
\def\tauinf{{\tau_*}}
\def\de{{\rm d}}
\def\LASSO{{\rm LASSO}}
\def\dr{{\delta r}}
\def\htau{\widehat{\tau}}
\def\ttau{\widetilde{\tau}}
\def\xh{\widehat{x}}
\def\sa{{\sf a}}
\def\sb{{\sf b}}
\def\dx{{\delta x}}
\newcommand{\id}{{\rm\bf I}}
\newcommand{\bs}{\backslash}
\def\da{{\partial a}}
\def\di{{\partial i}}
\def\hJ{\widehat{J}}
\def\normeq{\cong}
\begin{document}

\title{Graphical Models Concepts in Compressed Sensing}

\author{Andrea Montanari\thanks{Department of Electrical Engineering and
Department of Statistics, Stanford University}}

\date{}

\maketitle

\begin{abstract}
This paper surveys recent work in applying 
ideas from graphical models and message passing
algorithms to solve large scale regularized regression problems.
In particular, the focus is on compressed sensing reconstruction via
$\ell_1$ penalized least-squares (known as LASSO or BPDN). 
We discuss how to derive fast
approximate message passing algorithms to solve this problem. 
Surprisingly, the analysis of such algorithms allows to 
prove exact high-dimensional limit results for the  LASSO risk.

This paper will appear as a chapter in a book on `Compressed Sensing'
edited by Yonina Eldar and Gitta Kutyniok.
\end{abstract}
%
%
\section{Introduction}\label{sec:intro}

The problem of reconstructing a high-dimensional vector $x\in\reals^n$ 
from a collection of observations
$y\in\reals^m$ arises in a number of contexts, 
ranging from statistical learning
to signal processing.
It is often assumed that the measurement process is approximately 
linear, i.e. that
\begin{eqnarray}
y = Ax+w\, ,\label{eq:FirstModel}
\end{eqnarray}
where $A\in\reals^{m\times n}$ is a known measurement matrix,
and $w$ is a noise vector. 

The graphical models approach to such reconstruction problem 
postulates a joint probability distribution on $(x,y)$ which takes, without 
loss of generality, the form
\begin{eqnarray}
p(\de x,\, \de y) = p(\de y|x)\, p(\de x)\, .\label{eq:Generic}
\end{eqnarray}
The conditional distribution $p(\de y|x)$ models the noise process, 
while the prior $p(\de x)$ encodes information on the vector $x$.
In particular, within compressed sensing, it can describe its sparsity
properties. Within a \emph{graphical models}
approach, either of these distributions (or both) factorizes according
to a specific graph structure. The resulting
posterior distribution $p(\de x|y)$ is used for inferring $x$ given $y$.

There are many reasons to be skeptical about the idea that 
the joint probability distribution $p(\de x,\,\de y)$ 
can be determined, and  used for reconstructing $x$.
To name one such reason for skepticism, 
any finite sample will allow to determine the prior 
distribution of $x$, $p(\de x)$ only within limited accuracy.
A reconstruction algorithm based on the posterior distribution
$p(\de x| y)$ might be sensitive with respect to changes in the prior
thus leading to systematic errors.

One might be tempted to drop the whole approach as a consequence.
We argue that sticking to this point of view is instead fruitful for
several reasons:
\begin{enumerate}
\item \emph{Algorithmic.} Several existing reconstruction methods 
are in fact M-estimators, i.e. they are defined by minimizing an 
appropriate cost function $\cost_{A,y}(x)$ over $x\in\reals^n$
\cite{VanDerWaart}.
Such estimators can be 
derived as Bayesian estimators (e.g. maximum a posteriori probability) 
for specific forms of $p(\de x)$ and $p(\de y|x)$
(for instance by letting $p(\de x| y)\propto 
\exp\{-\cost_{A,y}(x))\, \de x\, $).
The connection is useful both in interpreting/comparing 
different methods, and in adapting known algorithms 
for Bayes estimation.
A classical example of this cross-fertilization
is the paper \cite{NowakEM}.
This review discusses several other examples
in that build on graphical models inference algorithms.
\item \emph{Minimax.} When the prior $p(\de x)$ 
or the noise distributions, and therefore the conditional 
distribution $p(\de y|x)$,  `exist' but
are unknown, it is reasonable to assume that they belong to specific 
structure classes. By this term we refer generically
to a class of probability distributions characterized by a specific
property. For instance,  
within compressed sensing one often assumes that $x$ has at most
$k$ non-zero entries. One can then take $p(\de x)$ 
to be a distribution
supported on $k$-sparse vectors $x \in\reals^n$.
If $\cF_{n,k}$ denotes the class of such distributions, the minimax 
approach strives to achieve the best uniform guarantee over
$\cF_{n,k}$. In other words, the minimax estimator achieves the 
\emph{smallest} expected error (e.g. mean square error) for the `worst' 
distribution in $\cF_{n,k}$. 

It is a remarkable fact in statistical decision theory 
\cite{LehmannCasella} (which follows from a generalization of Von Neumann 
minimax theorem) that
the minimax estimator coincides with the Bayes estimator
for a specific (worst case) prior $p\in\cF_{n,k}$. In one dimension
considerable information is available about the worst case distribution 
and asymptotically optimal estimators (see Section \ref{sec:Scalar}). 
The methods developed here allow to develop similar insights in 
high-dimension.
\item \emph{Modeling.} In some applications it is 
possible to construct fairly accurate models both of the 
prior distribution 
$p(\de x)$ and of the measurement process $p(\de y|x)$.
This is the case for instance in some communications problems, whereby
$x$ is the signal produced by a transmitter (and generated 
uniformly at random according to a known codebook), and 
$w$ is the noise produced by a well-defined physical process
(e.g. thermal noise in the receiver circuitry).
A discussion of some families of practically interesting priors
$p(\de x)$ can be found in \cite{CevherCompressible}.
\end{enumerate} 
Further, the question of modeling the prior in compressed sensing is
discussed from the point of view of Bayesian theory in \cite{Carin}.

The rest of this chapter is organized as follows.
Section \ref{sec:BasicModel} describes a graphical model naturally 
associated to the compressed sensing reconstruction problem. 
Section \ref{sec:Scalar} provides important background on the one-dimensional 
case. Section \ref{sec:MessagePassing} describes a standard message
passing algorithm  --the min-sum algorithm-- and how it can be simplified 
to solve the LASSO optimization problem.
The algorithm is further simplified in Section \ref{sec:AMP}
yielding the AMP algoritm. The analysis of this algorithm is 
outlined in Section \ref{sec:LargeSystem}. As a consequence
of this analysis, it is possible to compute exact high-dimensional
limits for the behavior of the LASSO estimator. Finally in Section
\ref{sec:Generalizations} we discuss a few examples of how the approach 
developed here can be used to address reconstruction problems in which a 
richer structural information is available.
%
%
\subsection{Some useful notation}

Throughout this review, probability measures over 
the real line $\reals$ or the euclidean space $\reals^K$ 
play a special role. It is therefore useful to be careful 
about the probability-theory notation. The less careful reader who 
prefers to pass directly to the `action' is invited to 
skip these remarks at a first reading.

We will use the notation $p$ or
$p(\de x)$ to indicate  probability measures (eventually with subscripts).
Notice that, in the last form, the $\de x$ is only a reminder
of which variable is distributed with measure $p$. 
(Of course one is tempted to think of $\de x$ as an infinitesimal interval
but this intuition is accurate only if $p$ admits a density.)

A special measure (positive, but not normalized and hence not 
a probability measure) is the Lebesgue measure 
for which we reserve the special notation $\de x$
(something like $\mu(\de x)$ would be more consistent but, in 
our opinion, less readable). 
This convention is particularly convenient 
for expressing  in formulae statements of the form   
\emph{`$p$ admits a density $f$ with respect to Lebesgue measure $\de x$,
with $f:x\mapsto f(x) \equiv \exp(-x^2/(2a))/\sqrt{2\pi a}$
a Borel function'}, which we write simply
\begin{eqnarray}
p(\de x) = \frac{1}{\sqrt{2\pi a}}\, e^{-x^2/2a}\, \de x\, .
\end{eqnarray}
It is well known that expectations are defined as integrals with 
respect to the probability measure which we denote as
\begin{eqnarray}
\E_p\{f\} = \int_{\reals}\! f(x) \, p(\de x)\, ,
\end{eqnarray}
sometimes omitting the subscript $p$ in $\E_p$ and $\reals$
in $\int_{\reals}$. 
Unless specified otherwise, we do not assume such probability 
measures to have a density with respect to Lebesgue measure.
The probability measure $p$
is a set function defined on the Borel $\sigma$-algebra,
see e.g. \cite{Billingsley,Williams}.
Hence it makes sense to write $p((-1,3])$ (the probability of the interval
$(-1,3]$ under measure $p$) or $p(\{0\})$ (the probability of the point $0$).
Equally valid would be expressions such as 
 $\de x((-1,3])$ (the Lebesgue measure of $(-1,3]$) 
or $p(\de x)((-1,3])$  (the probability of the interval
$(-1,3]$ under measure $p$) but we avoid them as somewhat clumsy.

A (joint) probability measure over $x\in\reals^K$ and 
$y\in \reals^L$ will be denoted by $p(\de x,\de y)$
(this is just a probability measure over $\reals^{K}\times\reals^L=
\reals^{K+L}$). The corresponding conditional probability measure 
of $y$ given $x$ is denoted by $p(\de x|y)$
(for a rigorous definition we refer to \cite{Billingsley,Williams}).

Finally, we will not make use of cumulative distribution functions 
--commonly called distribution functions in probability theory--
and instead use `probability distribution' interchangeably with
`probability measure'. 

Some fairly standard discrete mathematics notation will also
be useful. The set of first $K$ integers is to be denoted by
$[K] = \{1,\dots,K\}$.
Order of growth of various functions will be characterized 
by the standard `big-$O$' notation. Recall in particular that,
for $M\to\infty$, one writes $f(M) = O(g(M))$ if $f(M)\le C\, g(M)$ 
for some finite constant $C$, $f(M) = \Omega(g(M))$
if $f(M)\ge g(M)/C$ and $f(M) = \Theta(g(M))$ if
$g(M)/C\le f(M)\le Cg(M)$. Further $f(M) = o(g(M))$
if $f(M)/g(M)\to 0$. Analogous notations are used when the argument
of $f$ and $g$ go to $0$.
%
%
\section{The basic model and its graph structure}
\label{sec:BasicModel}

Specifying the conditional distribution of $y$ given 
$x$ is equivalent to specifying the 
distribution of the noise vector $w$. In most of this chapter we shall 
take $p(w)$ to be a Gaussian distribution of mean $0$ and variance
$\beta^{-1}\id$, whence
\begin{eqnarray}
p_{\beta}(\de y|x) = \Big(\frac{\beta}{2\pi}\Big)^{n/2}\, \exp\Big\{
-\frac{\beta}{2}\|y-Ax\|_2^2\Big\}\, \de y\, .
\end{eqnarray}
The simplest choice for the prior consists  in taking 
$p(\de x)$ to be a product distribution with identical
factors $p(\de x) = p(\de x_1)\times\dots\times p(\de x_n)$. 
We thus obtain the joint distribution
\begin{eqnarray}
p_{\beta}(\de x,\, \de y) = \Big(\frac{\beta}{2\pi}\Big)^{n/2}\, \exp\Big\{
-\frac{\beta}{2}\|y-Ax\|_2^2\Big\}\, \de y\, \prod_{i=1}^n
p(\de x_i)\, .\label{eq:GeneralFactorized}
\end{eqnarray}
It is clear at the outset that generalizations 
of this basic model can be 
easily defined, in such a way to incorporate further information 
on the vector $x$ or on the measurement process. 
As an example, consider the case of
block-sparse signals: The index set $[n]$ is partitioned into
blocks $B(1)$, $B(2)$, \dots $B(\ell)$ of equal length $n/\ell$,
and only a small fraction of the blocks is non-vanishing.
This situation can be captured  by assuming that the prior 
$p(\de x)$ factors over blocks. One thus obtains the joint distribution
\begin{eqnarray}
p_{\beta}(\de x,\, \de y) = \Big(\frac{\beta}{2\pi}\Big)^{n/2}\, \exp\Big\{
-\frac{\beta}{2}\|y-Ax\|_2^2\Big\}\, \de y\, \prod_{j=1}^{\ell}
p(\de x_{B(j)})\, ,
\end{eqnarray}
where $x_{B(j)} \equiv (x_{i}:\, i\in B(j))\in\reals^{n/\ell}$.
Other examples of structured priors will be discussed in 
Section \ref{sec:Generalizations}.

The posterior distribution of $x$
given observations $y$ admits an explicit expression, that can be derived from 
Eq.~(\ref{eq:GeneralFactorized}):
\begin{eqnarray}
p_{\beta}(\de x|\, y) = \frac{1}{Z(y)}\, \exp\Big\{
-\frac{\beta}{2}\|y-Ax\|_2^2\Big\}\, \, \prod_{i=1}^n
p(\de x_i)\, ,
\end{eqnarray}
where $Z(y) = (2\pi/\beta)^{n/2}p(y)$ ensures the normalization 
$\int p(\de x|y) = 1$. Let us stress that while this expression is explicit,
computing expectations or marginals of this distribution is a hard 
computational task.

 Finally, the square residuals 
$\|y-Ax\|_2^2$ decompose in a sum of $m$ terms yielding 
\begin{eqnarray}
p_{\beta}(\de x|\, y) = \frac{1}{Z(y)}\, \prod_{a=1}^m\exp\Big\{
-\frac{\beta}{2}\big(y_a-A_a^Tx\big)^2\Big\}\, \, \prod_{i=1}^n
p(\de x_i)\, ,\label{eq:GeneralModelFinal}
\end{eqnarray}
where $A_a$ is the $a$-th row of the matrix $a$. 
This factorized structure is conveniently described 
by a \emph{factor graph}, i.e. a bipartite graph including 
a `variable node' $i\in [n]$ for each variable $x_i$, and a `factor node'
$a\in [m]$ for each term $\psi_a(x)=\exp\{-\beta(y_a-A_a^Tx)^2/2\}$. 
Variable $i$ and factor $a$ are connected by an edge if 
and only if $\psi_a(x)$ depends non-trivially on $x_i$,
i.e. if $A_{ai}\neq 0$. One such factor graph is reproduced in 
Fig.~\ref{fig:FactorGraph}. 
\begin{figure}
\begin{center}
\includegraphics[width=4cm,angle=90]{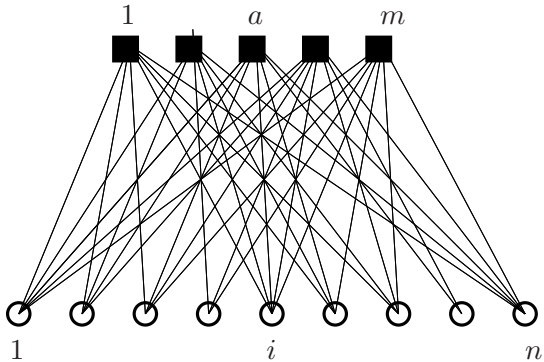}
\put(-200,-12){$1$}
\put(-103,-12){$i$}
\put(-5,-12){$n$}
\put(-158,115){$1$}
\put(-110,115){$a$}
\put(-60,115){$m$}
\end{center}
\caption{{\small Factor graph associated to the probability distribution 
(\ref{eq:GeneralModelFinal}). Empty circles correspond to variables $x_i$,
$i\in [n]$ and squares correspond to measurements $y_a$, $a\in [m]$.}}
\label{fig:FactorGraph}
\end{figure}

An estimate of the signal can be extracted from the 
posterior distribution (\ref{eq:GeneralModelFinal})
in various ways. One possibility is to use conditional expectation
\begin{eqnarray}
\xh_{\beta}(y;p) \equiv \int_{\reals^n} x \; p_{\beta}(\de x|y) \, .
\label{eq:P-Estimator}
\end{eqnarray}
Classically, this estimator is justified by the fact that it achieves
the minimal mean square provided the $p_{\beta}(\de x,\de y)$ 
is the \emph{actual}
joint distribution of $(x,y)$. In the present context
we will not assume that `postulated' prior 
$p_{\beta}(\de x)$ coincides with the actual distribution f $x$, 
and hence $\xh_{\beta}(y;p)$ is not necessarily optimal (with respect
to mean square error).
The best justification for 
$\xh_{\beta}(y;p)$ is that a broad class of estimators 
can be written in the form (\ref{eq:P-Estimator}).

An important problem with the estimator (\ref{eq:P-Estimator})
is that it is in general hard to compute. In order to obtain 
a tractable proxy, we assume that $p(\de x_i)=
p_{\beta,h}(\de x_i) =c\, f_{\beta, h}(x_i)
\, \de x_i$ for $f_{\beta, h}(x_i) = e^{-\beta h(x_i)}$ an un-normalized
probability density function. 
As $\beta$ get large, the integral in Eq.~(\ref{eq:P-Estimator})
becomes dominated by the vector $x$ with the highest 
posterior probability $p_{\beta}$.
One can then replace the integral in 
$\de x$ with a maximization over $x$ and define
\begin{align}
\xh(y;h) &\equiv  
\argmin_{z\in\reals^n} \cost_{A,y}(z;h)\, ,
\label{eq:OptEstimator}\\
\cost_{A,y}(z;h) & \equiv  \frac{1}{2}\|y-Az\|_2^2 +  \sum_{i=1}^n 
h(z_i)\, ,\nonumber
\end{align}
where we assumed for simplicity that $\cost_{A,y}(z;h)$
has a unique minimum.

According to the above discussion, the estimator
$\xh(y;h)$ can be thought of as the $\beta\to\infty$
limit of the general estimator (\ref{eq:P-Estimator}). Indeed, it is
easy to check that, provided $x_i\mapsto h(x_i)$ is upper
semicontinuous, we have
\begin{eqnarray*}
\lim_{\beta\to\infty} \xh_{\beta}(y;p_{\beta, h}) =
 \xh(y;h)\, . 
\end{eqnarray*}
In other words, the posterior mean converges to the mode
of the posterior in this limit. 
Further, $\xh(y;h)$ takes the familiar form of a regression estimator with 
separable regularization. If $h(\,\cdot\,)$ is convex, 
the computation of $\xh$ is tractable. Important 
special cases include $h(x_i) = \lambda x_i^2$, which corresponds to 
ridge regression \cite{HastieBook}, and $h(x_i) = \lambda|x_i|$ which 
corresponds to the LASSO \cite{Tibs96} or basis pursuit denoising (BPDN)
\cite{BP95}. Due to the special role it plays in compressed sensing, we 
will devote special attention to the latter case, that we rewrite explicitly 
below with a slight abuse of notation
\begin{align}
\xh(y) &\equiv 
\argmin_{z\in\reals^n} \cost_{A,y}(z)\, ,
\label{eq:LASSO}\\
\cost_{A,y}(z) & \equiv \frac{1}{2}\|y-Az\|_2^2 +  \lambda
\|z\|_1\, .\nonumber
\end{align}
%
 
%
%
\section{Revisiting the scalar case}
\label{sec:Scalar}

Before proceeding further, it is convenient to 
pause for a moment and consider the special case 
of a single measurement of a scalar quantity, i.e. the case
$m=n=1$.
We therefore have
\begin{eqnarray}
y=x+w\, ,
\end{eqnarray}
and want to estimate $x$ from $y$. 
Despite the apparent simplicity, there exists a copious literature
on this problem with many open problems
\cite{DJHS92,DJ94a,DJ94b,JohnstoneBook}. Here we only want to clarify 
a few points that will come up again in what follows. 

In order to compare various estimators we will assume 
that $(x,y)$ are indeed random variables with some 
underlying probability distribution $p_0(\de x,\de y) = 
p_0(\de x)p_0(\de y|x)$. It is important to stress that this 
distribution is conceptually distinct from
the one used in inference, cf. Eq.~(\ref{eq:P-Estimator}).
In particular we cannot assume to know the actual prior 
distribution of $x$, at least not exactly,
and hence $p(\de x)$ and $p_0(\de x)$  do not coincide. 
The `actual' prior $p_0$ is the distribution of the vector to be inferred,
while the `postulated' prior $p$ is a device used for designing inference 
algorithms.

For the sake of simplicity we also consider Gaussian noise
$w\sim\normal(0,\sigma^2)$ with known noise level
$\sigma^2$.
Various estimators will be compared
with respect to the resulting mean square error
\begin{eqnarray*}
\MSE = \E\{|\xh(y)-x|^2\} = \int_{\reals\times \reals}
|\xh(y)-x|^2\, p_0(\de x,\de y)\, .
\end{eqnarray*}

We can distinguish two  cases:
\begin{itemize}
\item[I.] The signal distribution $p_0(x)$ is known as well. This can 
be regarded as an `oracle' setting. To make contact 
with compressed sensing, we consider distributions that generate
sparse signals, i.e.  that put mass at least $1-\ve$ on $x=0$.
In formulae $p_0(\{0\})\ge 1-\ve$. 
\item[II.] The signal distribution is unknown but it is known
that it is `sparse', namely that it belongs to the class
\begin{eqnarray}
\cF_{\ve}\equiv\big\{\,p_0\, :\;\; p_0(\{0\})\ge 1-\ve\,\big\}\, .
\end{eqnarray}
\end{itemize}

The \emph{minimum mean square error}, is the minimum MSE 
achievable by any estimator $\xh:\reals\to\reals$:
\begin{eqnarray*}
\MMSE(\sigma^2;p_0) = \inf_{\xh:\reals\to\reals} \E\{|\xh(y)-x|^2\}\, .
\label{eq:MMSE_Simple}
\end{eqnarray*}
It is well known that the infimum is achieved 
by the conditional expectation
\begin{eqnarray*}
\xh^{\sMMSE}(y) = \int_\reals x\; p_0(\de x|y)\, .
\end{eqnarray*}
However, this estimator assumes that we are in situation I above, i.e. 
that the prior $p_0$ is known.

In Figure \ref{fig:mmse_vs_st} we plot the resulting MSE 
for a 3 point distribution,
\begin{eqnarray}
p_0 = \frac{\ve}{2}\,\delta_{+1} +(1-\ve)\, \delta_0 +\frac{\ve}{2}\,
\delta_{-1} \, .\label{eq:ThreePoint}
\end{eqnarray}
The MMSE is non-decreasing in $\sigma^2$ by construction, converges to 
$0$ in the noiseless limit $\sigma\to 0$
(indeed the simple rule $\xh(y) =y$ achieves 
MSE equal to $\sigma^2$) and to $\ve$ in the large noise limit
$\sigma\to\infty$ (MSE equal to $\ve$ is achieved by $\xh =0$).
\begin{figure}
\begin{center}
\includegraphics[width=9cm,angle=0]{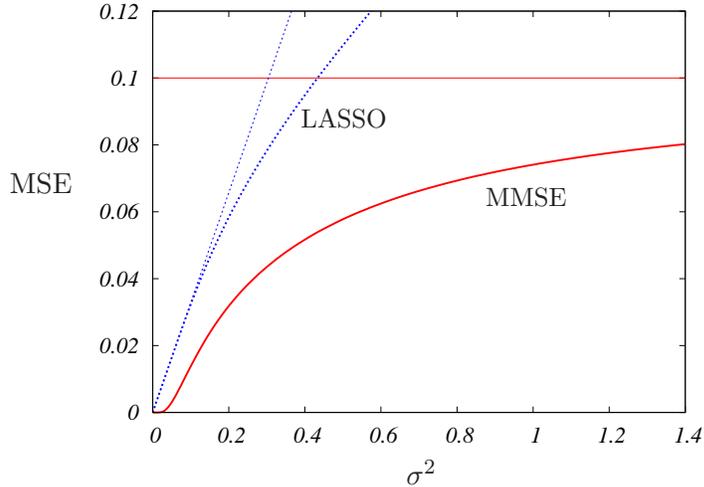}
\put(-120,-10){$\sigma^2$}
\put(-270,100){$\MSE$}
\put(-160,125){{\small LASSO}}
\put(-90,95){{\small MMSE}}
\end{center}
\caption{
{\small Mean square error for estimating a three points random variable,
with probability of non-zero $\ve=0.1$,
in Gaussian noise.
Red line: Minimal mean square error achieved by conditional expectation
(thick) and its large noise asymptote (thin). Blue line: 
Mean square error for LASSO or equivalently
for soft thresholding (thick) and its small noise asymptote 
(thin).}}\label{fig:mmse_vs_st}
\end{figure}

In the more realistic situation II, we do not know the prior
$p_0$. A principled way to deal with this ignorance
would be to minimize the MSE for the worst case 
distribution in the class $\cF_{\ve}$, i.e. to replace the minimization
in Eq.~(\ref{eq:MMSE_Simple}) with the following minimax problem
\begin{eqnarray}
\inf_{\xh:\reals\to\reals}\sup_{p_0\in\cF_{\ve}}\E\{|\xh(y)-x|^2\}\, .\label{eq:MinimaxProblem}
\end{eqnarray}
A lot is known about this problem \cite{DJHS92,DJ94a,DJ94b,JohnstoneBook}.
In particular general statistical decision theory 
\cite{LehmannCasella,JohnstoneBook} implies that the optimum estimator is 
just the posterior expectation for a specific worst case
prior. Unfortunately, even a superficial discussion of
this literature goes beyond the scope of the present review.

Nevertheless, an interesting 
exercise (indeed not a trivial one)
is to consider the LASSO estimator (\ref{eq:LASSO}), which in this
case reduces to 
\begin{eqnarray}
\xh(y;\lambda) = \argmin_{z\in\reals}\Big\{ 
\frac{1}{2}(y-z)^2 +\lambda\, |z|\Big\}\, .\label{eq:OneDLASSO}
\end{eqnarray}
Notice that this estimator is insensitive to the details of the 
prior $p_0$. Instead of the full minimax problem (\ref{eq:MinimaxProblem}),
one can then simply optimize the MSE over $\lambda$.

The one-dimensional optimization problem (\ref{eq:OneDLASSO})
admits an explicit solution
in terms of the  \emph{soft thresholding function}
$\eta:\reals\times\reals_+\to\reals$
defined as follows
\begin{eqnarray}
\label{eq:eta-def}
\eta(y;\theta) = \left\{
\begin{array}{ll}
y-\theta & \mbox{ if $y>\theta$,}\\
0 &  \mbox{ if $-\theta\le y\le\theta$,}\\
y+\theta & \mbox{ if $y<-\theta$.}
\end{array}\right.
\end{eqnarray}
The \emph{threshold} value $\theta$ has to be chosen equal
to the regularization parameter $\lambda$ yielding the simple
solution 
\begin{eqnarray}
\xh(y;\lambda) = \eta(y;\theta)\, ,\;\;\;\;
\mbox{for $\lambda=\theta$}\, .
\end{eqnarray}
(We emphasize the identity of $\lambda$ and $\theta$ in the scalar case, 
because it breaks down in the vector case.)

How should the parameter $\theta$ (or equivalently $\lambda$) be fixed?
The rule is conceptually simple: $\theta$ should minimize the maximal 
mean square error for the class $\cF_{\ve}$. 
Remarkably this complex saddle point problem can be solved 
rather explicitly. The key remark is that the 
worst case distribution over the class $\cF_{\ve}$ can be identified 
and takes the form $p^{\#} = (\ve/2)\delta_{+\infty}+(1-\ve)\delta_0+
(\ve/2)\delta_{-\infty}$
\cite{DJ94a,DJ94b,JohnstoneBook}.

Let us outline how the solution follows from this key fact. 
First of all, it makes sense
to scale $\lambda$ as the noise standard deviation,
because the estimator is supposed to filter out the noise.
We then let $\theta =\alpha\sigma$. 
In Fig.~\ref{fig:mmse_vs_st} we plot the resulting MSE 
when $\theta = \alpha\sigma$, with $\alpha \approx 1.1402$. 
We 
denote the LASSO/soft thresholding mean square error by 
$\stMSE(\sigma^2;p_0,\alpha)$ when the noise variance 
is $\sigma^2$, $x\sim p_0$,  and the regularization
parameter is $\lambda=\theta =\alpha\sigma$. The worst case mean square error
is given by $\sup_{p_0\in\cF_{\ve}}\stMSE(\sigma^2;p_0,\alpha)$.
Since the class $\cF_{\ve}$ is invariant by rescaling, this worst case 
MSE must be proportional to the only scale in the problem, i.e., 
$\sigma^2$. We get
\begin{eqnarray}
\sup_{p_0\in\cF_{\ve}}\stMSE(\sigma^2;p_0,\alpha) = M(\ve,\alpha)\sigma^2\, .
\label{eq:M1def}
\end{eqnarray}
The function $M$ can be computed explicitly by evaluating
the mean square error on the worst case distribution $p^{\#}$
\cite{DJ94a,DJ94b,JohnstoneBook}. 
A straightforward calculation (see also 
\cite[Supplementary Information]{DMM09}, and \cite{NSPT}) yields
\begin{eqnarray}
M(\ve,\alpha) =   \eps\, (1 + \alpha^2) + (1-\eps)[2(1+\alpha^2)\, \Phi(-\alpha)-2\alpha\,\phi(\alpha)]
\end{eqnarray}
where $\phi(z) = e^{-z^2/2}/\sqrt{2\pi}$ is the Gaussian density 
and $\Phi(z) = \int_{-\infty}^z\phi(u)\, \de u$ is the Gaussian distribution.
It is also not hard to show that that $M(\ve,\alpha)$ is the
slope of the soft thresholding MSE at $\sigma^2=0$ 
in a plot like the one in Fig.~\ref{fig:mmse_vs_st}.

\begin{figure}
\includegraphics[width=8cm,angle=0]{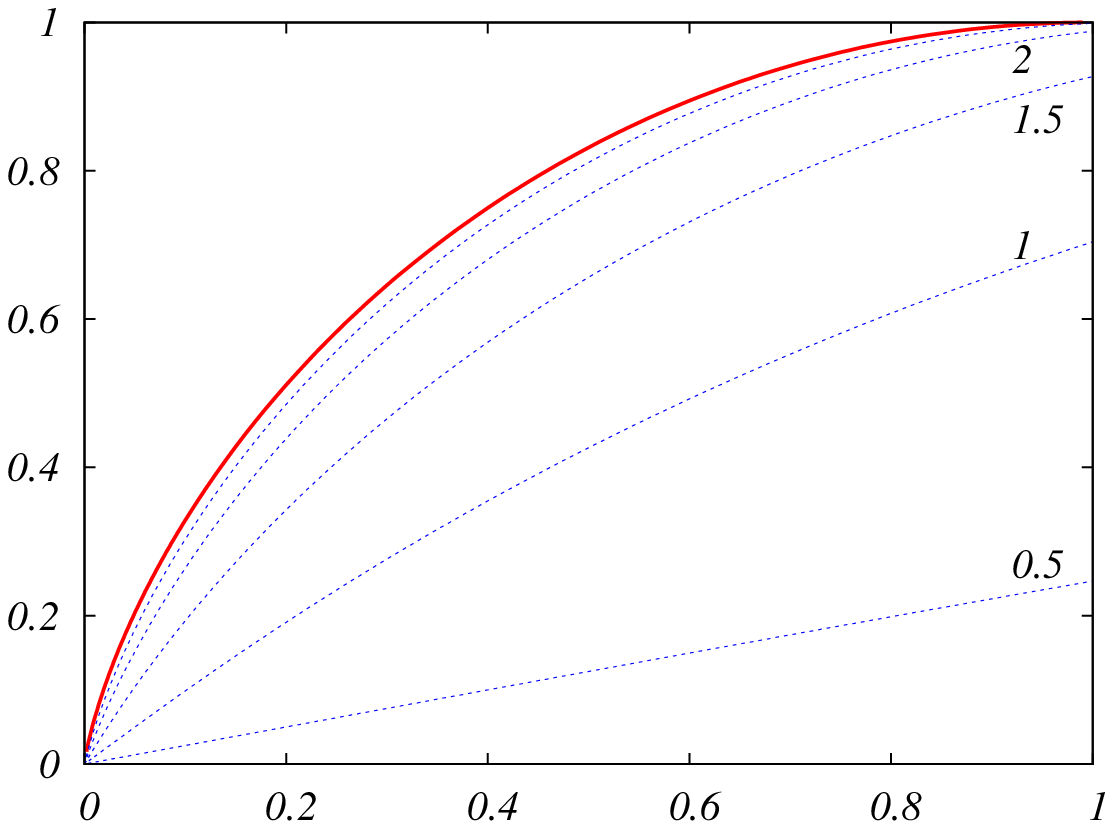}\phantom{AA}
\includegraphics[width=8cm,angle=0]{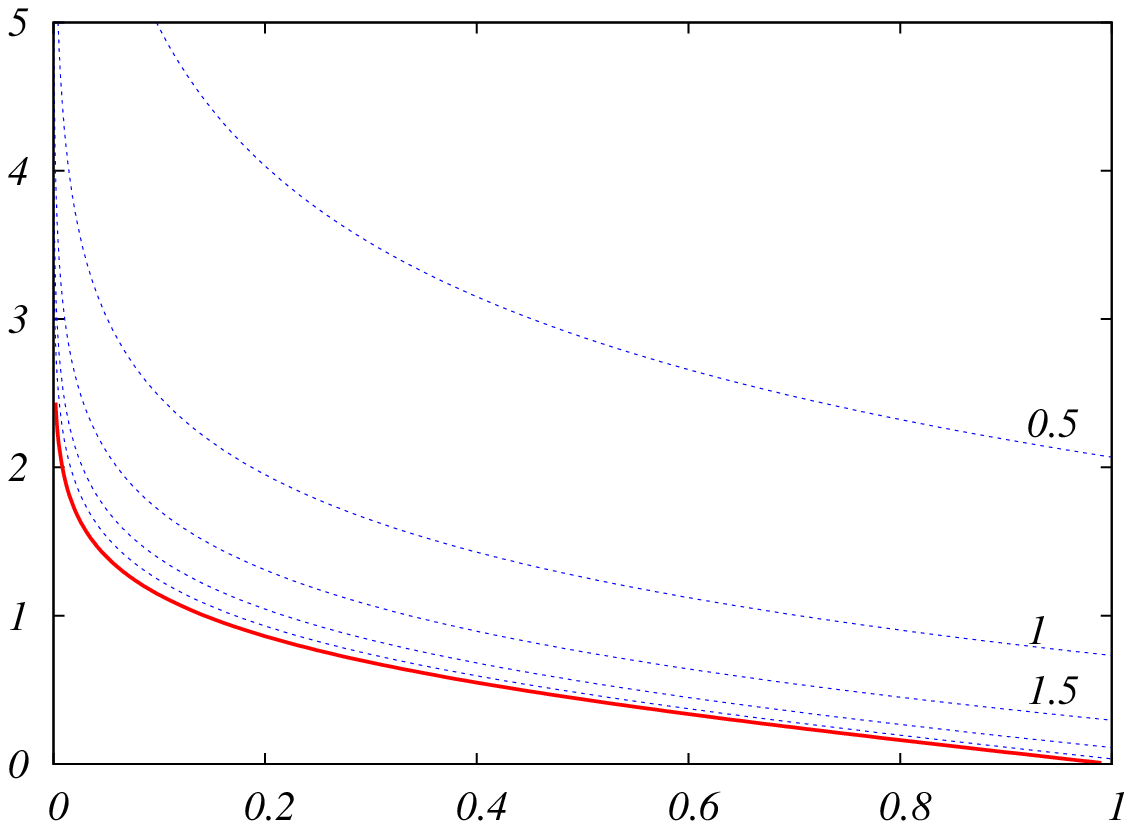}
\put(-350,-7){$\ve$}
\put(-110,-7){$\ve$}
\put(-488,80){$M^{\#}(\ve)$}
\put(-244,80){$\alpha^{\#}(\ve)$}
\caption{{\small Left frame (red line): minimax mean square error under soft 
thresholding for estimation of $\ve$-sparse random variable in 
Gaussian noise. Blue lines correspond to signals of bounded second moment
(labels on the curves refer to the maximum allowed value of
$[\int x^2\, p_0(\de x)]^{1/2}$).
Right frame (red line): Optimal threshold level for the same estimation 
problem. Blue lines again refer to the case of bounded second moment.}}
\label{fig:MmaxScalar}
\end{figure}
Minimizing the above expression over $\alpha$,
we  obtain the soft thresholding minimax risk, and 
the corresponding optimal threshold value
\begin{eqnarray}
M^\#(\ve) \equiv \min_{\alpha\in\reals_+} M(\ve,\alpha)\, ,
\;\;\;\;\;\;\;\;
\alpha^\#(\ve) \equiv \arg\min_{\alpha\in\reals_+} M(\ve,\alpha)\, .
\label{eq:Mhashdef}
\end{eqnarray}
The functions $M^{\#}(\ve)$ and $\alpha^\#(\ve)$ are plotted 
in Fig.~\ref{fig:MmaxScalar}. For comparison we also plot the analogous
functions when the class $\cF_{\ve}$ is replaced by
$\cF_{\ve}(a) = \{p_0\in\cF_{\ve}:\, \int x^2\, p_0(\de x)\le a^2\}$
of sparse random variables with bounded second moment.
Of particular interest is the behavior of these curves in the
very sparse limit $\ve\to 0$,
\begin{eqnarray}
M^{\#}(\ve) = 2\ve \log(1/\ve)\,\big\{1+o(1)\big\}\, ,\;\;\;\;\;
\alpha^{\#}(\ve) =  \sqrt{2\log(1/\ve)}\,
\big\{1+o(1)\big\}\, .\;\;\;\;\label{eq:AsymptoticMSE_ST}
\end{eqnarray}
\begin{figure}
\phantom{a}\hspace{4cm}\includegraphics[width=8cm,angle=0]{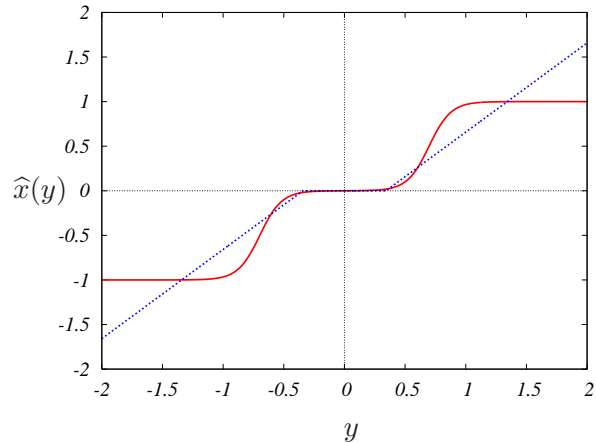}
\put(-105,-10){$y$}
\put(-230,80){$\xh(y)$}
\caption{{\small Red line: The MMSE estimator for the three-point
distribution (\ref{eq:ThreePoint}) with $\ve = 0.1$,
when the noise has 
standard deviation $\sigma = 0.3$.
Blue line: the minimax soft threshold estimator 
for the same setting. The corresponding mean square errors 
are plotted in Fig.~\ref{fig:mmse_vs_st}.}}
\label{fig:Est}
\end{figure}
Getting back to Fig.~\ref{fig:mmse_vs_st}, the reader will notice that
there is a significant gap between the minimal MSE and the MSE
achieved by soft-thresholding. This is the price paid by using
an estimator that is \emph{uniformly good} over the class 
$\cF_{\ve}$ instead of one that is tailored for the distribution $p_0$
at hand. Figure~\ref{fig:Est} compares the two estimators for 
$\sigma=0.3$.
One might wonder whether \emph{all} this price has to be paid,
i.e. whether we can reduce the gap by using a more complex
function instead of the soft threshold $\eta(y;\theta)$.
The answer is both yes and no. On one hand, there 
exist provably superior --in minimax sense--
estimators over $\cF_{\ve}$. Such estimators are of course more complex
than simple soft thresholding. On the other hand,
better estimators have the 
same minimax risk 
$M^{\#}(\ve)=(2\log(1/\ve))^{-1}\,\big\{1+o(1)\big\}$ in 
the very sparse limit, i.e. they improve only the $o(1)$
term as $\ve\to 0$ \cite{DJ94a,DJ94b,JohnstoneBook}.
%
%
\section{Inference via message passing}
\label{sec:MessagePassing}

The task of extending the theory of the previous section
to the vector case (\ref{eq:FirstModel}) might appear
daunting. It  turns out that such extension is instead possible 
in specific high-dimensional limits. The key step 
consists in introducing an appropriate message passing algorithm to
solve the optimization problem (\ref{eq:LASSO}) and then 
analyzing its behavior.
%
%
\subsection{The min-sum algorithm}

\begin{figure}
\begin{center}
\includegraphics[width=4cm,angle=90]{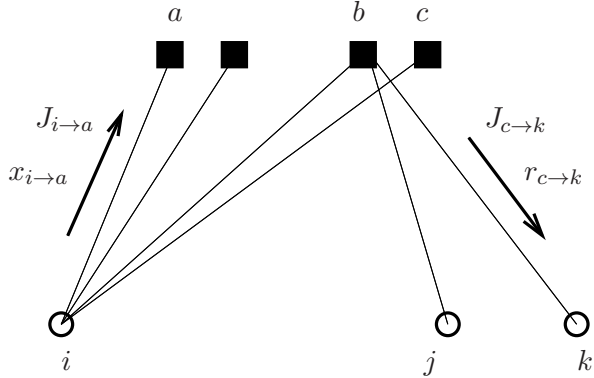}
\put(-200,-12){$i$}
\put(-63,-12){$j$}
\put(-5,-12){$k$}
\put(-160,120){$a$}
\put(-90,120){$b$}
\put(-66,120){$c$}
\put(-210,80){$J_{i\to a}$}
\put(-220,60){$x_{i\to a}$}
\put(-40,80){$J_{c\to k}$}
\put(-25,60){$r_{c\to k}$}
\end{center}
\caption{{\small A portion of the factor graph from
Fig.~\ref{fig:FactorGraph} with notation for messages.}}
\label{fig:FactorMess}
\end{figure}
We start by considering the min-sum algorithm. Min-sum is a 
popular optimization algorithm for graph-structured 
cost functions (see for instance \cite{Pearl,Jordan,MezardMontanari,VanRoy} and 
references therein). 
In order to introduce the algorithm,
we consider a general cost 
function over $x=(x_1,\dots,x_n)$,
that decomposes according to a factor graph as the one shown in 
Fig.~\ref{fig:FactorGraph}:
\begin{eqnarray}
\cost(x) = \sum_{a\in F} \cost_a(x_{\da}) +\sum_{i\in V}\cost_i(x_i)\, .
\label{eq:Decomposable}
\end{eqnarray}
Here $F$ is the set of $m$ \emph{factor nodes} 
(squares in Fig.~\ref{fig:FactorGraph})
and  $V$ is the set of $n$ \emph{variable nodes} (circles in the same figure).
Further $\da$ is the set of neighbors of node $a$ and 
$x_{\da}= (x_i\, :\, i\in\da)$. The min-sum algorithm 
is an iterative algorithm of the belief-propagation type. Its basic variables
are messages: a message is associated to each directed edge in the underlying
factor graph. In the present case, messages are functions on the optimization
variables, and we will denote them as $J_{i\to a}^t(x_i)$
(from variable to factor),
$\hJ_{a\to i}^t(x_i)$ (from factor to variable), with $t$ indicating the 
iteration number. 
Figure \ref{fig:FactorMess} describes the association of
messages to directed edges in the factor graph.
Messages are meaningful up to an additive constant, 
and therefore we will use the special symbol $\normeq$ to denote identity
up to an  additive constant independent of the argument $x_i$.
At the $t$-th iteration they are updated as 
follows\footnote{The reader will notice that for a dense matrix $A$, 
$\di = [n]$ and $\da=[m]$. We will nevertheless stick to the more
general notation, since it is somewhat more transparent.}
\begin{eqnarray}
J^{t+1}_{i\to a}(x_i) & \normeq &\cost_i(x_i) + \sum_{b\in\di\setminus a}
\hJ^t_{b\to i}(x_i)\, ,\label{eq:MinSum1}\\
\hJ^t_{a\to i}(x_i) & \normeq &
\min_{x_{\da\setminus i}}\Big\{\cost_a(x_{\da})+ \sum_{j\in\da\setminus i}
J^t_{j\to a}(x_j)\Big\}\, .\label{eq:MinSum2}
\end{eqnarray}
Eventually, the optimum is approximated by
\begin{eqnarray}
\xh^{t+1}_i &=& \arg\min_{x_i\in\reals} J^{t+1}_{i}(x_i)\, ,\\
J^{t+1}_{i}(x_i) & \normeq &\cost_i(x_i) + \sum_{b\in\di}
\hJ^t_{b\to i}(x_i)\, 
\end{eqnarray}
There exists a vast literature justifying the use of 
algorithms of this type, applying them on concrete problems,
and developing modifications of the basic iteration with better
properties 
\cite{Pearl,Jordan,MezardMontanari,VanRoy,WainwrightJordan,KollerFriedman}.
Here we limit ourselves to recalling that the iteration 
(\ref{eq:MinSum1}), (\ref{eq:MinSum2}) can be regarded as a dynamic 
programming iteration that computes the minimum cost when the 
underlying graph is a tree. Its application to loopy graphs 
(i.e., graphs with closed loops) is
not generally guaranteed to converge.

At this point we notice that the LASSO cost function
Eq.~(\ref{eq:LASSO}) can be decomposed as
in Eq.~(\ref{eq:Decomposable}),
\begin{eqnarray}
\cost_{A,y}(x) & \equiv \frac{1}{2}\sum_{a\in F}(y_a-A_a^Tx)^2 +  \lambda
\sum_{i\in V}|x_i|\, .
\end{eqnarray}
The min-sum updates read
\begin{eqnarray}
J^{t+1}_{i\to a}(x_i) & \normeq &\lambda|x_i| + \sum_{b\in\di\setminus a}
\hJ^t_{b\to i}(x_i)\, ,\label{eq:LassoMinSum1}\\
\hJ^t_{a\to i}(x_i) & \normeq &
\min_{x_{\da\setminus i}}\Big\{\frac{1}{2}(y_a-A_a^Tx)^2
+ \sum_{j\in \da\setminus i}
J^t_{j\to a}(x_j)\Big\}\, .\label{eq:LassoMinSum2}
\end{eqnarray}
%
%
%
\subsection{Simplifying min-sum by quadratic approximation}

Unfortunately, an exact 
implementation of the min-sum iteration appears extremely
difficult because it requires to keep track of $2mn$ messages, each being
a function on the real axis. A possible approach consists in developing
\emph{numerical} 
approximations to the messages. This line of research was initiated 
in \cite{Baraniuk}.
 
Here we will overview an alternative approach that consists 
in deriving \emph{analytical} approximations
\cite{DMM09,DMM_ITW_I,NSPT}. Its advantage is
that it leads to a remarkably simple algorithm, which will be discussed 
in the next section. In order to justify this algorithm 
we will first derive a simplified message passing algorithm,
whose messages are simple real numbers (instead of functions),
and then (in the next section) reduce the number of messages from 
$2mn$ to $m+n$.

Throughout the derivation we shall assume that the 
matrix $A$ is normalized in such a way that its columns 
have zero mean and unit $\ell_2$ norm.
 Explicitly,
we have $\sum_{a=1}^mA_{ai} = 0$ and $\sum_{a=1}^mA_{ai}^2 = 1$. 
In fact it is only sufficient that these conditions are satisfied 
asymptotically for large system sizes. Since however we are only
presenting a heuristic argument, we defer a precise formulation of
this assumption until Section \ref{sec:StateEvolution}.
We also assume that its entries have roughly the same magnitude
$O(1/\sqrt{m})$. Finally, we assume that 
$m$ scales linearly with $n$.
These assumptions are verified by many examples of 
sensing matrices in compressed sensing, e.g. random matrices
with i.i.d. entries or random Fourier sections.
Modifications of the basic algorithm that cope with strong violations
of these assumptions are discussed in 
\cite{BayatiLogReg}.

It is easy to see by induction that the messages $J_{i\to a}^t(x_i)$,
$\hJ^t_{a\to i}(x_i)$ remain, for any $t$, convex functions,
provided they are initialized as convex functions at $t=0$. 
In order to simplify the min-sum equations, we will approximate them by 
quadratic functions. Our first step 
consists in noticing that, as a consequence of
Eq.~(\ref{eq:LassoMinSum2}), the function  $\hJ^t_{a\to i}(x_i)$
depends on its argument only through the combination 
$A_{ai}x_i$. Since $A_{ai}\ll 1$, we can approximate this dependence
through a Taylor expansion (without loss of generality
setting $\hJ_{a\to i}^t(0)=0$):
\begin{eqnarray}
\hJ_{a\to i}^t(x_i) \normeq -\alpha^t_{a\to i}(A_{ai}x_i)+
\frac{1}{2}\beta^t_{a\to i}(A_{ai}x_i)^2+O(A_{ai}^3x_i^3)\, .
\label{eq:hJmess}
\end{eqnarray}
The reason for stopping this expansion at third order should become clear in
a moment.
Indeed substituting in Eq.~(\ref{eq:LassoMinSum1}) we get
\begin{align}
J_{i\to a}^{t+1}(x_i) \normeq \lambda|x_i|-
\Big(\sum_{b\in\di\setminus a}A_{bi}\alpha^t_{b\to i}\Big)\,x_i+
\frac{1}{2}
\Big(\sum_{b\in\di\setminus a}
A_{bi}^2\beta^t_{a\to i}\Big)x_i^2+O(nA_{\cdot i}^3x_i^3)\, .
\label{eq:Jmess1}
\end{align}
Since $A_{ai}= O(1/\sqrt{n})$, the last term is negligible.
At this point we want to approximate $J_{i\to a}^{t}$ by its second 
order Taylor expansion around its minimum. The reason for this is that
only this order of the expansion matters when plugging these messages
in Eq.~(\ref{eq:LassoMinSum2}) to compute $\alpha_{a\to i}^{t}$,
$\beta^t_{a\to i}$. We thus define the quantities
$x^t_{i\to a}$, $\gamma_{i\to a}^t$  as parameters of this 
Taylor expansion:
\begin{eqnarray}
J^t_{i\to a}(x_i) \normeq \frac{1}{2\gamma_{i\to a}^t}(x_i-x_{i\to a}^t)^2
+O((x_i-x_{i\to a}^t)^3)\, .\label{eq:Jmess2}
\end{eqnarray}
Here we include also the case in which the minimum of $J^t_{i\to a}(x_i)$
is achieved at $x_i=0$ (and hence the function is not differentiable
at its minimum) by letting $\gamma^t_{i\to a}=0$ in that case.
Comparing Eqs.~(\ref{eq:Jmess1}) and (\ref{eq:Jmess2}),
and recalling the definition of $\eta(\,\cdot\,;\,\cdot\,)$,
cf. Eq.~(\ref{eq:eta-def}), we get
\begin{eqnarray}
x^{t+1}_{i\to a} = \eta(\sa_1;\sa_2)\, ,\;\;\;\;\;\;\;\;\;
\gamma^{t+1}_{i\to a} = \eta'(\sa_1;\sa_2)\, ,\label{eq:MessPass1}
\end{eqnarray}
where $\eta'(\,\cdot\,;\,\cdot\,)$ denotes the derivative of $\eta$
with respect to its first argument and
we defined
\begin{align}
\sa_1 \equiv \frac{\sum_{b\in\di\setminus a}A_{bi}\alpha^t_{b\to i}}{\sum_{b\in\di\setminus a}A^2_{bi}\beta^t_{b\to i}}\, ,
\;\;\;\;\;\;
\sa_2\equiv \frac{\lambda}{\sum_{b\in\di\setminus a}A^2_{bi}\beta^t_{b\to i}}\, .
\end{align}
Finally, by plugging the parametrization (\ref{eq:Jmess2})
in Eq.~(\ref{eq:LassoMinSum2}) and comparing with Eq.~(\ref{eq:hJmess}),
we can compute the parameters $\alpha^t_{a\to i}$, $\beta^t_{a\to i}$.
A long but straightforward calculation yields
\begin{eqnarray}
\alpha^{t}_{a\to i} & = & \frac{1}
{1+\sum_{j\in\da\setminus i}A_{aj}^2\gamma^t_{j\to a}}
\Big\{y_a-\sum_{j\in\da\setminus i}A_{aj}x^{t}_{j\to a}\Big\}\, ,\\
\beta^{t}_{a\to i} & = & \frac{1}
{1+\sum_{j\in\da\setminus i}A_{aj}^2\gamma^t_{j\to a}}\, .
\label{eq:MessPass4}
\end{eqnarray}
Equations (\ref{eq:MessPass1}) to (\ref{eq:MessPass4}) define a
message passing algorithm that is considerably simpler than the original 
min-sum algorithm: each message consists of a pair of real numbers,
namely $(x^t_{i\to a},\gamma^t_{i\to a})$ for variable-to-factor
messages and $(\alpha_{a\to i},\beta_{a\to i})$ for factor-to-variable
messages. In the next section we will simplify it further and construct an 
algorithm (AMP) with several interesting properties.
Let us pause a moment for making two observations:
\begin{enumerate}
\item The soft-thresholding operator that played an important role
in the scalar case, cf. Eq.~(\ref{sec:Scalar}), reappeared in
Eq.~(\ref{eq:MessPass1}). Notice however that the threshold value that
follows as a consequence of our derivation is not the naive one, namely 
equal to the regularization parameter $\lambda$, but rather a rescaled
one.
\item Our derivation leveraged on the assumption that
the matrix entries $A_{ai}$ are all of the same order, namely
$O(1/\sqrt{m})$. It would be interesting to repeat the above derivation
under different assumptions on the sensing matrix.
\end{enumerate}
%
%
\section{Approximate message passing}
\label{sec:AMP}
The algorithm derived above is still complex in that its memory
requirements scale proportionally to the \emph{product} of
the number of dimensions of the signal and of the number of 
measurements. Further, its computational
complexity per iteration scales quadratically as well.
In this section we will introduce a simpler algorithm,
and subsequently discuss its derivation from the 
one in the previous section.

\subsection{The AMP algorithm, some of its properties, \dots}

The AMP (for approximate message passing) algorithm 
is parameterized by two sequences of scalars:
the thresholds $\{\theta_t\}_{t\ge 0}$ and the `reaction terms'
$\{\sb_t\}_{t\ge 0}$. Starting with initial condition
$x^0 = 0$, it
constructs a sequence of estimates
$x^t\in\reals^n$, and residuals $r^t\in\reals^m$,
according to the following iteration
\begin{eqnarray}
x^{t+1}&=\eta(x^t + A^Tr^t\, ;\theta_t),\label{eq:dmm1}\\
r^t &= y - Ax^t+\sb_t\, r^{t-1}\, ,\label{eq:dmm2}
\end{eqnarray}
for all $t\ge 0$ (with convention $r^{-1}=0$). 
Here and below, given a scalar function
$f:\reals\to\reals$, and a vector $u\in\reals^\ell$,
we adopt the convention of denoting by $f(u)$ the vector
$(f(u_1),\dots,f(u_\ell))$.

The choice of parameters  $\{\theta_t\}_{t\ge 0}$ and
$\{\sb_t\}_{t\ge 0}$ is tightly constrained by the connection with
the min-sum algorithm, as it will be discussed below, but 
the connection with the LASSO is more general. Indeed,
as formalized by the proposition below, general sequences
$\{\theta_t\}_{t\ge 0}$ and
$\{\sb_t\}_{t\ge 0}$ can be used as far as $(x^t,z^t)$ converges.
\begin{proposition}\label{propo:Easy}
Let $(x^*,r^*)$ be a  fixed point of the iteration 
(\ref{eq:dmm1}),
(\ref{eq:dmm2}) for $\theta_t=\theta$, $\sb_t=\sb$ fixed. Then
$x^*$ is a minimum of the $\LASSO$ cost function
(\ref{eq:LASSO}) for 
\begin{eqnarray}
\lambda = \theta(1-\sb)\, .\label{eq:GeneralLambdaTheta}
\end{eqnarray}
\end{proposition}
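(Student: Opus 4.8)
The plan is to verify the first-order optimality (stationarity) condition for the LASSO cost function at $x^*$. Since $\cost_{A,y}(z)=\frac12\|y-Az\|_2^2+\lambda\|z\|_1$ is convex, a point $x^*$ is a global minimum if and only if $0\in\partial\cost_{A,y}(x^*)$, i.e. there exists a subgradient $v\in\partial\|x^*\|_1$ with $A^T(Ax^*-y)+\lambda v=0$. Recall that $v_i=\sign(x^*_i)$ when $x^*_i\neq 0$ and $v_i\in[-1,1]$ when $x^*_i=0$. So the whole proof reduces to extracting such a $v$ from the fixed-point relations.

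First I would write out what it means for $(x^*,r^*)$ to be a fixed point: from (\ref{eq:dmm2}) with $r^t=r^{t-1}=r^*$ we get $r^*=y-Ax^*+\sb\, r^*$, hence $(1-\sb)r^* = y-Ax^*$, i.e. $r^* = (y-Ax^*)/(1-\sb)$. From (\ref{eq:dmm1}) we get $x^* = \eta(x^*+A^Tr^*;\theta)$. The key step is then to use the elementary characterization of the soft-thresholding function: for any $u,x\in\reals$ and $\theta>0$, the identity $x=\eta(u;\theta)$ is equivalent to saying $u-x = \theta\, v$ for some $v$ with $v=\sign(x)$ if $x\neq0$ and $v\in[-1,1]$ if $x=0$. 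I would state this as a one-line lemma (it is immediate from the definition (\ref{eq:eta-def})): if $x=\eta(u;\theta)\neq0$ then $u = x+\theta\sign(x)$, so $u-x=\theta\sign(x)$; if $x=\eta(u;\theta)=0$ then $|u|\le\theta$, so $u-x=u\in[-\theta,\theta]$.

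Applying this coordinatewise to $x^* = \eta(x^*+A^Tr^*;\theta)$ with $u = x^*+A^Tr^*$, we obtain $A^Tr^* = u - x^* = \theta\, v$ where $v\in\partial\|x^*\|_1$ (i.e. $v_i=\sign(x^*_i)$ on the support and $v_i\in[-1,1]$ off it). Now substitute $r^* = (y-Ax^*)/(1-\sb)$: this gives $\frac{1}{1-\sb}A^T(y-Ax^*) = \theta\, v$, hence $A^T(y-Ax^*) = \theta(1-\sb)\, v = \lambda\, v$ using (\ref{eq:GeneralLambdaTheta}). Rearranging, $A^T(Ax^*-y)+\lambda v = 0$ with $v\in\partial\|x^*\|_1$, which is exactly the stationarity condition, so $x^*$ minimizes $\cost_{A,y}$. (One should note $\theta>0$ is implicitly assumed; if $\theta=0$ then $\lambda=0$ and the statement degenerates, or one simply assumes $\theta>0$ as is implicit in the soft-threshold setup.)

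I do not expect any real obstacle here — this is a short verification rather than a deep argument. The only point requiring a little care is the correct bookkeeping of the soft-thresholding subdifferential identity in the non-differentiable case $x^*_i=0$, making sure the inclusion $v_i\in[-1,1]$ comes out with the right sign and scaling, and being explicit that convexity of $\cost_{A,y}$ is what upgrades "stationary point" to "global minimum." Everything else is pure substitution of the two fixed-point equations.
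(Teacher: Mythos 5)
Your proof is correct and follows essentially the same route as the paper: extract $(1-\sb)r^*=y-Ax^*$ from the residual update, read the soft-thresholding fixed point $x^*=\eta(x^*+A^Tr^*;\theta)$ as $A^Tr^*=\theta v$ with $v\in\partial\|x^*\|_1$, and substitute to obtain the LASSO stationarity condition with $\lambda=\theta(1-\sb)$. Your explicit statement of the soft-threshold subdifferential lemma and of convexity upgrading stationarity to global optimality only makes the paper's terser argument more detailed, not different.
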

\begin{proof}
From Eq.~(\ref{eq:dmm1}) we get the fixed point condition
\begin{eqnarray}
x^*+\theta v = x^*+A^Tr^*\, ,
\end{eqnarray}
for $v\in\reals^n$ such that $v_i=\sign(x^*_i)$ if $x^*_i\neq 0$
and $v_i\in[-1,+1]$ otherwise.
In other words, $v$ is a subgradient of the $\ell_1$-norm
at $x^*$, $v\in \partial\|x^*\|_1$. Further from Eq.~(\ref{eq:dmm2})
we get $(1-\sb)r^* = y-Ax^*$. Substituting in the above equation, we get
\begin{eqnarray*}
\theta(1-\sb)v^* = A^T(y-Ax^*)\, ,
\end{eqnarray*}
which is just the stationarity condition for 
the LASSO cost function if $\lambda=\theta(1-\sb)$.
\end{proof}

As a consequence of this proposition, if we find  sequences 
$\{\theta_t\}_{t\ge 0}$, $\{\sb_t\}_{t\ge 0}$ that converge,
and such that the estimates $x^t$ converge as well,
then we are guaranteed that the limit is a LASSO optimum.
The connection with the message passing min-sum algorithm
(see Section \ref{sec:Derivation}) implies an unambiguous prescription for $\sb_t$:
\begin{eqnarray}
\sb_t = \frac{1}{m}\, \|x^t\|_0\, ,\label{eq:CoeffB}
\end{eqnarray}
where $\|u\|_0$ denotes the $0$ pseudo-norm of vector $u$,
i.e. the number of its non-zero components.
The choice of the sequence of thresholds $\{\theta_t\}_{t\ge 0}$
is somewhat more flexible. Recalling the discussion of the
scalar case, it appears to be a good choice to use 
$\theta_t = \alpha\tau_t$ where $\alpha>0$ and $\tau_t$
is the root mean square error of the un-thresholded
estimate $(x^t+A^Tr^t)$. It can be shown that the latter is (in an 
high-dimensional
setting) well approximated by $(\|r^t\|_2^2/m)^{1/2}$. We thus obtain
the prescription
\begin{eqnarray}
\theta_t = \alpha\htau_t\, ,\;\;\;\;\;\;\;\;\;\;\;
\htau_t^2 = \frac{1}{m}\, \|r^t\|_2^2\, .\label{eq:ThresholdChoice1}
\end{eqnarray}
Alternative estimators can be used instead of $\htau_t$ as defined above.
For instance,  the median of $\{|r_i^t|\}_{i\in[m]}$, 
can be used to define the  
alternative estimator:
\begin{eqnarray}
\htau_t^2 = \frac{1}{\Phi^{-1}(3/4)}
|r^t|_{(m/2)}\, ,\label{eq:ThresholdChoice2}
\end{eqnarray}
where $|u|_{(\ell)}$ is the $\ell$-th 
largest magnitude among the entries of a vector $u$, 
and $\Phi^{-1}(3/4)\approx 0.6745$ denotes the median of the absolute values of 
a Gaussian random variable.

By Proposition \ref{propo:Easy}, 
if the iteration converges to $(\hx,\hr)$,
then this is  a minimum of the LASSO cost function,
with regularization  parameter
\begin{eqnarray}
\lambda = \alpha\,  \frac{\|\hr\|_2}{\sqrt{m}}
\left(1-\frac{\|\hx\|_0}{m}\right)\, 
\end{eqnarray}
(in case the threshold is chosen as per
Eq.~(\ref{eq:ThresholdChoice1})).
While the relation between $\alpha$ and $\lambda$
is not fully explicit (it requires to find the 
optimum $\hx$), in practice $\alpha$ is as useful as  $\lambda$:
both play the role  of knobs that adjust the level of sparsity
of the  seeked solution.

We conclude by noting that the AMP algorithm (\ref{eq:dmm1}),
(\ref{eq:dmm2}) is quite close to iterative soft thresholding (IST), 
a well known algorithm for the same problem that proceeds by
\begin{eqnarray}
x^{t+1}&=&\eta(x^t + A^Tr^t\, ;\theta_t)\, ,\label{eq:ist1}\\
r^t &= &y - Ax^t\, .\label{eq:ist2}
\end{eqnarray}
The only (but important) difference lies in the introduction of
the term $\sb_tr^{t-1}$ in the second equation, cf. Eq.~(\ref{eq:dmm2}). 
This can be regarded as a momentum term
with a very specific prescription on its size, cf. Eq.~(\ref{eq:CoeffB}).
A similar term --with motivations analogous to the one 
presented below-- is popular under the name of `Onsager term' in
statistical physics \cite{Onsager,TAP,SpinGlass}.
%
%
\subsection{\dots and its derivation}
\label{sec:Derivation}

In this section we present an heuristic derivation of the AMP
iteration in Eqs.~(\ref{eq:dmm1}), (\ref{eq:dmm2}) starting from the 
standard message passing formulation given by
Eq.~(\ref{eq:MessPass1}) to (\ref{eq:MessPass4}).
Our objective is to develop an intuitive
understanding of the AMP iteration, as well as of the prescription
(\ref{eq:CoeffB}). Throughout our argument, we treat 
$m$ as scaling linearly with $n$. 
A full justification of the derivation 
presented here is beyond the scope of this
review: the actual rigorous analysis of the AMP algorithm goes through
an indirect and  very technical mathematical proof 
\cite{BM-MPCS-2010}.

We start by noticing that the sums
$\sum_{j\in\da\setminus i}A_{aj}^2\gamma^t_{j\to a}$ and 
$\sum_{b\in\di\setminus a}A^2_{bi}\beta^t_{b\to i}$ are sums of $\Theta(n)$
terms, each of order $1/n$
(because $A_{ai}^2=O(1/n)$). Notice that the terms in these sums 
are not independent: nevertheless by analogy to what 
happens in the case of sparse graphs 
\cite{MoT06,MontanariSparse,RiU08,AldousSteele}, one can hope that dependencies are weak. It is then reasonable to 
think that a law of large numbers applies and that therefore
these sums can be replaced by quantities that do not depend on the
instance or on the row/column index.

We then let $r^t_{a\to i} = \alpha^t_{a\to i}/\beta^t_{a\to i}$ and
rewrite the message passing iteration, cf. 
Eqs.~(\ref{eq:MessPass1}) to (\ref{eq:MessPass1}), as
\begin{align}
r_{a\to i}^t &= y_a - \sum_{j\in[n]\bs i}A_{aj}x_{j\to a}^t\, ,\label{eq:mp-repeated}\\
x_{i\to a}^{t+1}&=\eta\Big(\sum_{b\in[m]\bs a}A_{bi}r_{b\to i}^t;\theta_t
\Big)\, ,\label{eq:mp-repeated-bis}
\end{align}
where $\theta_t\approx
\lambda/\sum_{b\in\di\setminus a}A^2_{bi}\beta^t_{b\to i}$
is --as mentioned-- treated as independent of $b$.

Notice that on the right-hand side of both equations 
above, the messages appear
in sums over $\Theta(n)$ terms. Consider for instance the  messages
$\{r_{a\to i}^t\}_{i\in [n]}$ for a fixed node $a\in [m]$.
These depend on $i\in [n]$ only because the term  excluded
from the sum on the right hand side of Eq.~(\ref{eq:mp-repeated})
changes. It is therefore natural to guess that
$r^{t}_{a\to i}=r^t_a+O(n^{-1/2})$ and
$x^{t}_{i\to a}=x^t_i+O(m^{-1/2})$, where
$r^t_a$ only depends on the index $a$ (and not on $i$),
and $x^t_i$ only depends on $i$ (and not on $a$).

A naive approximation would consist in neglecting
the $O(n^{-1/2})$ correction
but this approximation 
turns out to be inaccurate even in the
large-$n$ limit. We instead set
\begin{eqnarray*}
r_{a\to i}^t = r_a^t+\dr_{a\to i}^t\, ,\;\;\;\;\;\;\;
x_{i\to a}^t = x_i^t+\dx_{i\to a}^t\, .
\end{eqnarray*}
Substituting in Eqs.~(\ref{eq:mp-repeated}) and 
(\ref{eq:mp-repeated-bis}), we get
\begin{align*}
r_{a}^t+\dr_{a\to i}^t &= y_a - \sum_{j\in[n]}A_{aj}
(x_{j}^t+\dx_{j\to a}^t)
+A_{ai}(x_{i}^t+\dx_{i\to a}^t)\, ,\\
x_{i}^{t+1}+\dx_{i\to a}^{t+1}&=
\eta\Big(\sum_{b\in[m]}A_{bi}(r_{b}^t+\dr_{b\to i}^t)-
A_{ai}(r_{a}^t+\dr_{a\to i}^t);\,\theta_t\Big)\, .
\end{align*}
We will now drop the terms that are negligible without writing
explicitly the error terms. First of all notice that single terms of
the type $A_{ai}\dr_{a\to i}^t$ are of order $1/n$ and
can be safely neglected. Indeed $\dr_{a\to i} = O(n^{-1/2})$
by our ansatz, and $A_{ai} = O(n^{-1/2})$ by definition.
We get
\begin{align*}
r_{a}^t+\dr_{a\to i}^t &= y_a - \sum_{j\in[n]}A_{aj}
(x_{j}^t+\dx_{j\to a}^t)
+A_{ai}x_{i}^t\, ,\\
x_{i}^{t+1}+\dx_{i\to a}^{t+1}&=
\eta\Big(\sum_{b\in[m]}A_{bi}(r_{b}^t+\dr_{b\to i}^t)-
A_{ai}r_{a}^t;\theta_t\Big)\, .
\end{align*}
We next expand the second equation to linear order in $\dx_{i\to a}^t$ and
$\dr_{a\to i}^t$:
\begin{align*}
z_{a}^t+\dr_{a\to i}^t &= y_a - \sum_{j\in[n]}A_{aj}
(x_{j}^t+\dx_{j\to a}^t)
+A_{ai}x_{i}^t\, ,\\
x_{i}^{t+1}+\dx_{i\to a}^{t+1}&=
\eta\Big(\sum_{b\in[m]}A_{bi}(r_{b}^t+\dr_{b\to i}^t);\theta_t\Big)-
\eta'\Big(\sum_{b\in[m]}A_{bi}(r_{b}^t+\dr_{b\to i}^t);\theta_t\Big)
A_{ai}z_{a}^t\, .
\end{align*}
The careful reader might be puzzled by the fact that 
the soft thresholding function $u\mapsto \eta(u;\theta)$
is non-differentiable at $u\in \{+\theta,-\theta\}$.
However, the rigorous analysis carried out in \cite{BM-MPCS-2010}
through a different (and more technical) methods
reveals that almost-everywhere differentiability is sufficient 
here. 

Notice that the last term on the right hand side of
the first equation above is the only one dependent on $i$,
and we can therefore identify this term with $\dr_{a\to i}^t$.
We obtain the decomposition
\begin{align}
r_{a}^t &= y_a - \sum_{j\in[n]}A_{aj}
(x_{j}^t+\dx_{j\to a}^t)\, ,\label{eq:Z1}\\
\dr_{a\to i}^t &= A_{ai}x_{i}^t\, .\label{eq:Z2}
\end{align}
Analogously for the second equation we get
\begin{align}
x_{i}^{t+1}&=
\eta\Big(\sum_{b\in[m]}A_{bi}(r_{b}^t+\dr_{b\to i}^t);\theta_t\Big)\, ,
\label{eq:X1}\\
\dx_{i\to a}^{t+1} & =-
\eta'\Big(\sum_{b\in[m]}A_{bi}(r_{b}^t+\dr_{b\to i}^t);\theta_t\Big)
A_{ai}r_{a}^t\, .\label{eq:X2}
\end{align}

Substituting Eq.~(\ref{eq:Z2}) in Eq.~(\ref{eq:X1}) to eliminate
$\dr_{b\to i}^t$ we get
\begin{align}
x_{i}^{t+1}&=
\eta\Big(\sum_{b\in[m]}A_{bi}r_{b}^t+\sum_{b\in[m]}A_{bi}^2x_{i}^t;
\theta_t\Big)\, ,
\end{align}
and using the normalization of $A$, we get
$\sum_{b\in [m]}A_{bi}^2\to 1$, whence
\begin{align}
x^{t+1}&=
\eta(x^t+A^Tr^t;\theta_t)\, .
\end{align}

Analogously substituting Eq.~(\ref{eq:X2}) in (\ref{eq:Z1}),
we get
\begin{align}
z_{a}^t &= y_a - \sum_{j\in[n]}A_{aj}x_{j}^t+
\sum_{j\in[n]}A_{aj}^2\eta'(x^{t-1}_j+(A^Tr^{t-1})_j;\theta_{t-1})r_{a}^{t-1}\, .\label{eq:ZZZZ}
\end{align}
Again, using the law of large numbers  and the normalization of $A$,
we get
\begin{align}
\sum_{j\in[n]}A_{aj}^2\eta'(x^{t-1}_j+(A^Tr^{t-1})_j;\theta_{t-1})
\approx \frac{1}{m}\sum_{j\in[n]}\eta'(x^{t-1}_j+(A^Tr^{t-1})_j;\theta_{t-1})
=\frac{1}{m}\|x^t\|_0\, ,
\end{align}
whence substituting in (\ref{eq:ZZZZ}),
we obtain Eq.~(\ref{eq:dmm2}), with the prescription 
(\ref{eq:CoeffB}) for the Onsager term.
This finishes our derivation.

%
%
\section{High-dimensional analysis}
\label{sec:LargeSystem}

The AMP algorithm enjoys several unique properties.
In particular it admits an \emph{asymptotically exact}
analysis along sequences of instances of diverging size. 
This is quite remarkable, since all analysis available for other 
algorithms that solve the LASSO hold only `up to undetermined
constants'.

In particular in the large system limit (and with 
the exception of a `phase transition' line), AMP can be shown to converge 
exponentially fast to the LASSO optimum. Hence the analysis 
of AMP yields asymptotically exact predictions
on the behavior of the LASSO, including in particular the asymptotic
mean square error per variable.

%
%
\subsection{Some numerical experiments with AMP}
\label{sec:NumExp}

\begin{figure}
\includegraphics[width=8cm,angle=0]{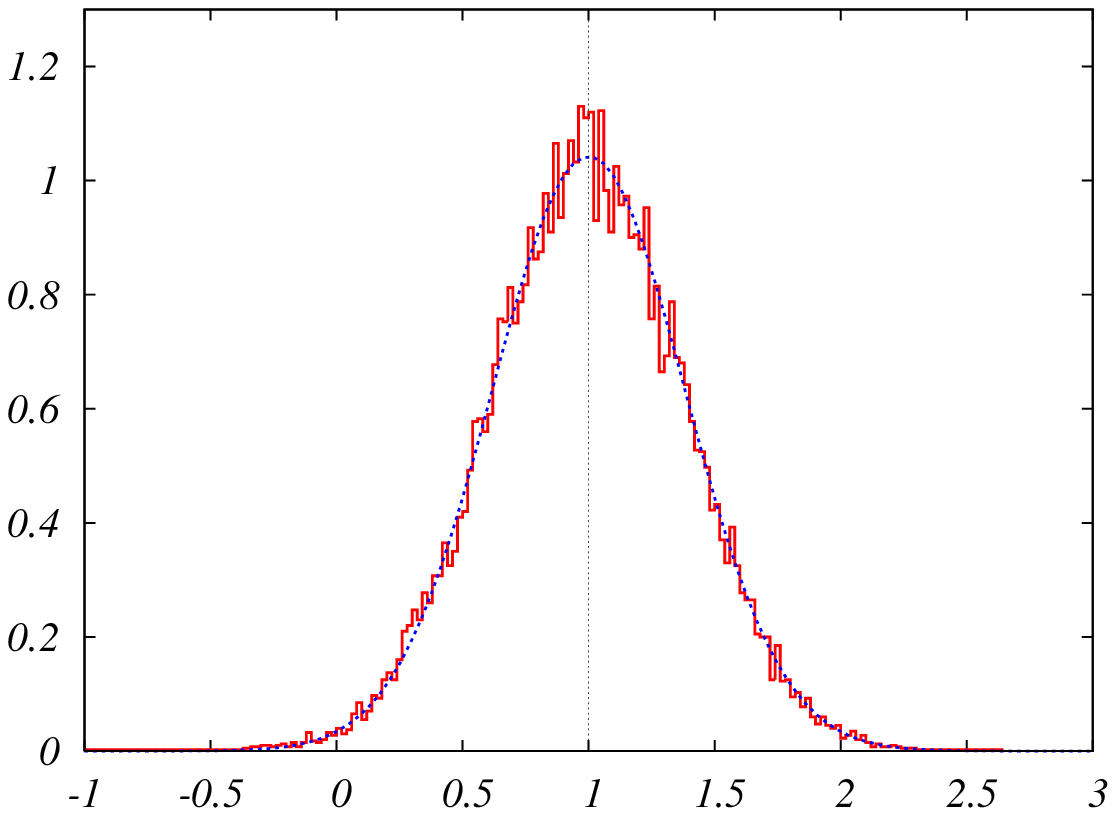}
\includegraphics[width=8cm,angle=0]{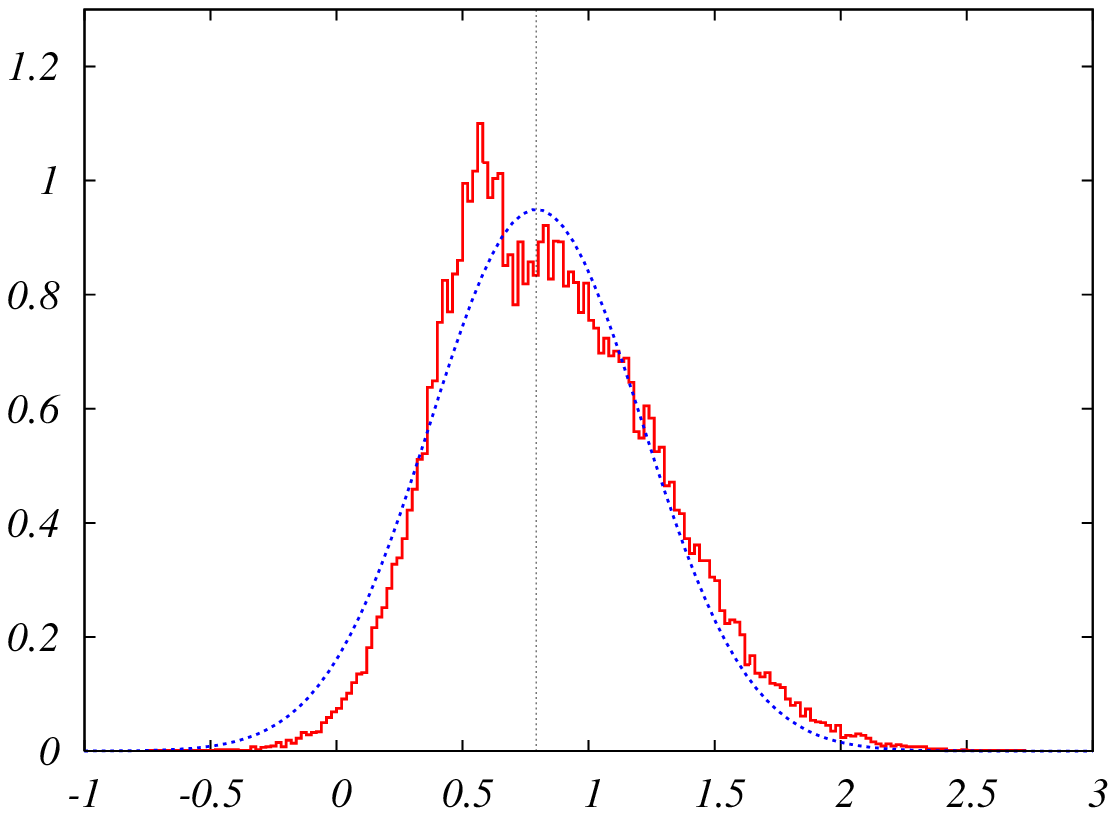}
\put(-140,-8){$(x^t+A^Tr^t)_i$}
\put(-360,-8){$(x^t+A^Tr^t)_i$}
\caption{{\small Distributions of un-thresholded estimates for
AMP (left) and IST (right), after $t=10$ iterations.
These data were obtained using
sensing matrices with $m=2000$, $n=4000$ and i.i.d. entries 
uniform in $\{+1/\sqrt{m},-1/\sqrt{m}\}$. The signal $x$
contained $500$ non-zero entries uniform in $\{+1,-1\}$. 
A total of $40$ instances was used to build the histograms.
Blue lines are Gaussian fits and vertical lines represent the fitted mean.}}
\label{fig:GaussHisto}
\end{figure}
How is it possible that an \emph{asymptotically exact}
analysis of AMP can be carried out? Figure \ref{fig:GaussHisto}
illustrates the key point. It shows the distribution of 
un-thresholded estimates $(x^t+A^Tr^t)_i$ for coordinates $i$
such that the original signal had value $x_i = +1$.
These estimates were obtained using the AMP algorithm 
(\ref{eq:dmm1}), (\ref{eq:dmm2}) with choice
(\ref{eq:CoeffB}) of $\sb_t$ (plot on the left) and
the iterative soft thresholding algorithm (\ref{eq:ist1}), (\ref{eq:ist2})
(plot on the right). The 
same instances (i.e. the same matrices $A$ and measurement vectors $y$)
were used in the two cases, but the resulting distributions
are dramatically different. In the case of AMP, the distribution 
is close to Gaussian, with mean on the correct value, $x_i=+1$.
For iterative soft thresholding the estimates do not have the correct mean
and are not Gaussian.

\begin{figure}
\includegraphics[width=3.2in]{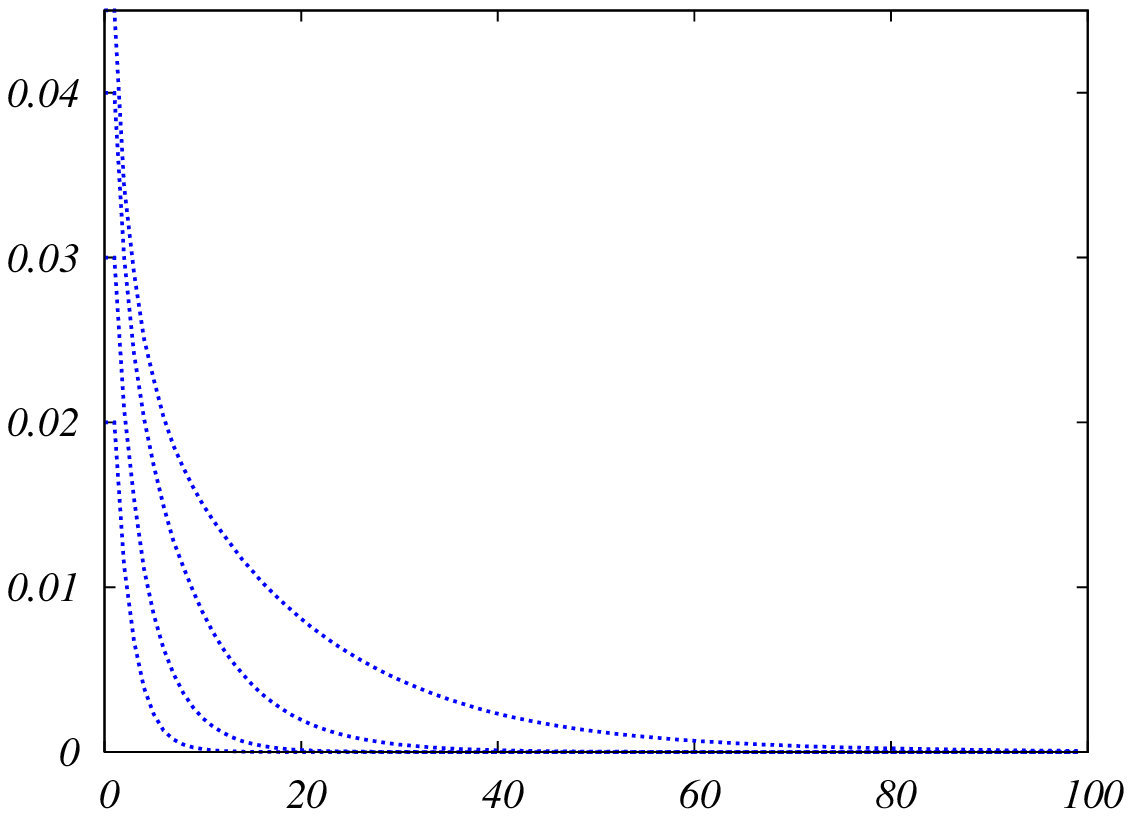}
\includegraphics[width=3.2in]{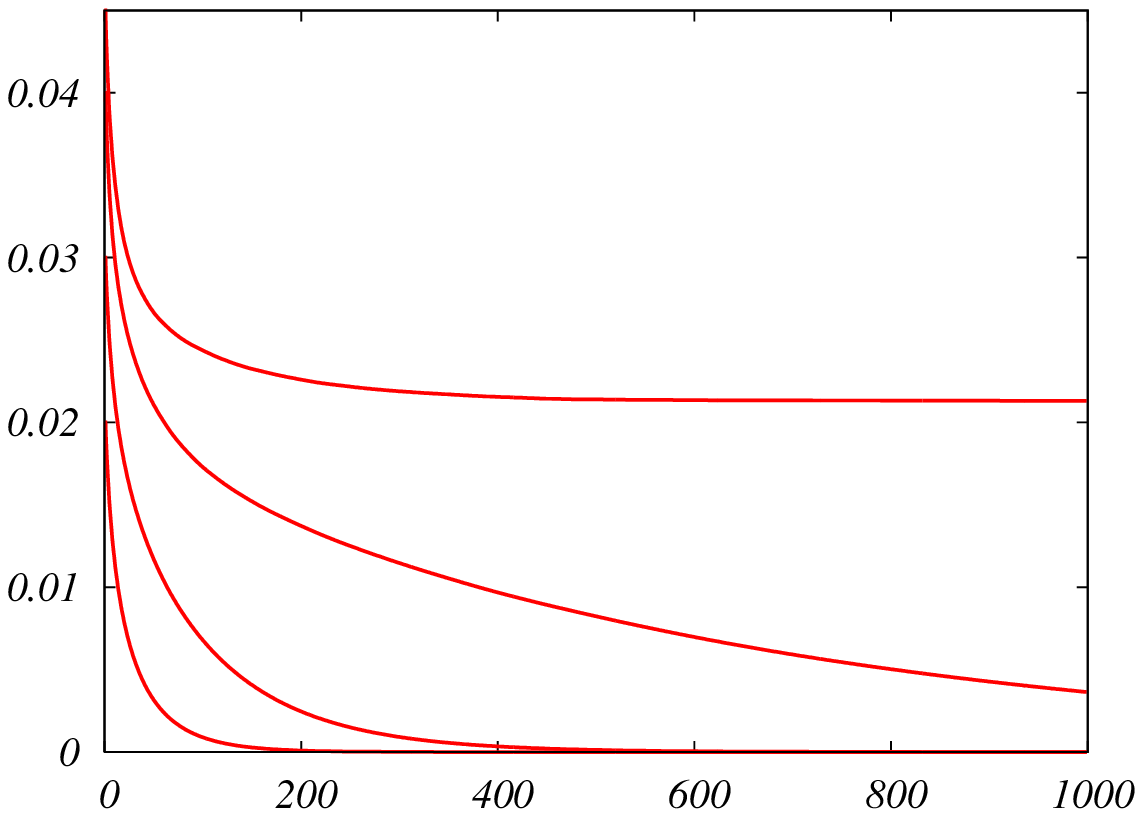}
\put(-360,-5){iterations}
\put(-130,-5){iterations}
\put(-480,90){MSE}
\put(-235,90){MSE}
  \caption{Evolution of the mean square error as a function of the number
of iterations for AMP (left) and iterative soft thresholding (right),
for random measurement matrices $A$, with i.i.d. entries
$A_{ai}\in\{+1/\sqrt{m},-1/\sqrt{m}\}$ uniformly.
Notice the different scales used for the horizontal axis!
Here $n=8000$, $m=1600$. Different curves depends to
different levels of sparsity. The number of
non-zero entries of the signal $x$ is, for the various curves,  
$\|x\|_0 = 800$, $1200$, $1600$, $1800$ (from bottom to top).}
\label{fig:ConvergenceExp}
\end{figure}

This phenomenon appears here as an empirical observation, valid
for a specific iteration number $t$, and specific dimensions $m,n$. 
In the next section 
we will explain that it  can be proved rigorously in the limit of a 
large number of dimensions, for all values of iteration number $t$. 
Namely,  as $m,n\to\infty$ at $t$ fixed, 
the empirical distribution of  $\{(x^t+A^Tr^t)_i-x_i\}_{i\in[n]}$
converges to a gaussian distribution,
when $x^t$ and $r^t$ are computed using AMP.
The  variance of this distribution 
depends on $t$, and the its evolution with $t$
can be computed exactly.
 Viceversa,
for iterative soft thresholding, the distribution of the 
same quantities remains non-gaussian.

This dramatic difference remains true for any $t$, even when 
AMP and IST converge the same minimum.
Indeed even at the fixed point,  the resulting residual $r^t$ 
is different in the two algorithms, as a consequence of the 
introduction Onsager term.

More importantly, the two algorithms differ 
dramatically in the rate of convergence.
One can interpret the vector $(x^t+A^Tr^t)-x$ as `effective noise'
after $t$ iterations. Both AMP and IST
 `denoise' the vector $(x^t+A^Tr^t)$ using the soft thresholding operator.
As discussed in Section \ref{sec:Scalar}, 
the soft thresholding operator is essentially optimal for denoising in
gaussian noise. This suggests that AMP should have superior 
performances (in the sense of faster
convergence to the LASSO minimum)
with respect to simple IST.

Figure \ref{fig:ConvergenceExp} presents the results 
of a small experiment confirming this expectation.
Measurement matrices $A$ with dimensions $m=1600$, $n=8000$,
were generated randomly with i.i.d. entries
$A_{ai}\in\{+1/\sqrt{m},-1/\sqrt{m}\}$ uniformly at random.
We consider here the problem of reconstructing a signal $x$
with entries $x_i\in\{+1,0,-1\}$ from noiseless measurements $y = Ax$, 
for different levels of sparsity.
Thresholds were set according to the prescription
(\ref{eq:ThresholdChoice1}) with $\alpha=1.41$ for AMP
(the asymptotic theory of \cite{DMM09} yields the prescription
$\alpha \approx 1.40814$) and $\alpha = 1.8$ for IST 
(optimized empirically). For the latter
algorithm, the matrix $A$ was rescaled in order to get an
operator norm $\|A\|_2 = 0.95$.

Convergence to the original signal $x$ is slower and slower 
as this becomes less and less sparse\footnote{Indeed 
basis pursuit (i.e. reconstruction via $\ell_1$ minimization)
fails  with high probability if $\|x\|_0/m\gtrsim 0.243574$,
see \cite{DonohoCentrally} and Section \ref{sec:NSPT}.}.
Overall, AMP appears to be at least 10 times faster even 
on the sparsest vectors (lowest curves in the figure).
%
%
\subsection{State evolution}
\label{sec:StateEvolution}

State evolution describes the asymptotic limit
of the AMP estimates as $m,n\to \infty$, for any fixed $t$.
The word `evolution' refers to the fact that one
obtains an `effective' evolution with $t$. The word
`state' refers to the fact that the algorithm behavior is captured in this 
limit by a single parameter (a state) $\tau_t\in\reals$. 

We will consider sequences of instances
of increasing sizes, along which the AMP algorithm behavior 
admits a non-trivial limit. 
An instance is completely determined by the measurement matrix $A$, 
the signal $x$, and the noise vector $w$, the vector
of measurements $y$ being  given by $y=Ax+w$,
cf. Eq.~(\ref{eq:FirstModel}).
While rigorous results have been proved
so far only in the case in which 
the sensing matrices $A$ have i.i.d. Gaussian entries, 
it is nevertheless useful to collect a few basic properties that
the sequence needs to satisfy in order for state evolution to hold.
\begin{definition}\label{def:Converging}
The sequence of instances $\{x(n), w(n), A(n)\}_{n\in\naturals}$
indexed by $n$ is said to be a \emph{converging sequence}
if
$x(n)\in\reals^{n}$, $w(n)\in\reals^m$, $A(n)\in\reals^{m\times n}$
with $m=m(n)$ is  such that $m/n\to\delta\in(0,\infty)$,
and in addition the following conditions hold:
\begin{itemize}
\item[$(a)$] The empirical distribution of the entries of $x(n)$
converges weakly to a probability measure $p_{0}$ on $\reals$
with bounded second moment. Further
$n^{-1}\sum_{i=1}^nx_{i}(n)^2\to \E_{p_{0}}\{X_0^{2}\}$.
\item[$(b)$] The empirical distribution of the entries of $w(n)$
converges weakly to a probability measure $p_{W}$ on $\reals$
with bounded second moment. Further
$m^{-1}\sum_{i=1}^mw_{i}(n)^2\to \E_{p_{W}}\{W^{2}\}\equiv\sigma^2$.
\item[$(c)$]If $\{e_i\}_{1\le i\le n}$, $e_i\in\reals^n$ denotes the canonical
basis, then $\lim_{n\to \infty}\max_{i\in [n]}\|A(n)e_i\|_2=1$,\\
$\lim_{n\to \infty}\min_{i\in [n]}\|A(n)e_i\|_2= 1$.
\end{itemize}
\end{definition}
As mentioned above, rigorous results have been proved only 
for a subclass of converging sequences, namely under the assumption
that the matrices $A(n)$ have i.i.d. Gaussian entries.
Notice that such matrices satisfy condition $(c)$ by elementary tail 
bounds on $\chi$-square random variables. The same condition
is satisfied by matrices with i.i.d. subgaussian entries thanks to
concentration inequalities \cite{Ledoux}.

On the other hand, numerical simulations show that the same limit 
behavior should apply within a much broader domain, including for
instance random matrices with i.i.d. entries under an appropriate
moment condition. This 
\emph{universality} phenomenon is well-known in random matrix theory
whereby asymptotic results initially established for Gaussian matrices
were subsequently proved for  broader classes of matrices.
Rigorous evidence in this direction is presented in \cite{KM-2010}.
This paper shows that
the normalized cost $\min_{x\in\reals^{n}}\cost_{A(n),y(n)}(x)/n$ 
has a limit for $n\to\infty$,  which is universal with respect to
random matrices $A$ with i.i.d. entries.
(More precisely, it is universal provided $\E\{A_{ij}\}=0$,
$\E\{A_{ij}^2\}=1/m$ and $\E\{A_{ij}^6\}\le C/m^{3}$ for some $n$-independent
constant $C$.)

For a converging sequence of instances  $\{x(n), w(n), A(n)\}_{n\in\naturals}$,
and an arbitrary sequence of thresholds $\{\theta_t\}_{t\ge 0}$
(independent of $n$), the AMP iteration
(\ref{eq:dmm1}), (\ref{eq:dmm2}) admits a high-dimensional limit 
which can be characterized exactly, provided Eq.~(\ref{eq:CoeffB}) 
is used for fixing the Onsager term. This limit is given in terms 
of the trajectory of a simple one-dimensional iteration termed
\emph{state evolution} which we will describe next.

Define the sequence $\{\tau_t^2\}_{t\ge 0}$
by setting  $\tau_{0}^2 =\sigma^2+\E\{X_0^2\}/\delta$
(for $X_0\sim p_{0}$ and $\sigma^2\equiv \E\{W^2\}$, $W\sim p_W$)
and letting, for all $t\ge 0$:
\begin{eqnarray}
\tau_{t+1}^2 & = & \seF(\tau_t^2,\theta_t)\, ,\label{eq:1-dim-SE}\\
\seF(\tau^2,\theta) &\equiv &\sigma^2+\frac{1}{\delta}\,
\E\{\,[\eta(X_0+\tau Z;\theta)-X_0]^2\}\,,
\end{eqnarray}
where $Z\sim\normal(0,1)$ is independent of $X_0\sim p_0$. Notice that the function $\seF$ depends implicitly on the law $p_{0}$.
Further, the state evolution $\{\tau_t^2\}_{t\ge 0}$  depends
on the specific converging sequence through the law
$p_0$, and the second moment of the noise  
$\E_{p_W}\{W^2\}$, cf. Definition \ref{def:Converging}.

We say a function $\psi:\reals^k\to\reals$ is \emph{pseudo-Lipschitz} if there exist a constant $L>0$ such that for all $x,y\in\reals^k$: $|\psi(x)-\psi(y)|\le L(1+\|x\|_2+\|y\|_2)\|x-y\|_2$.
(This is a special case of the definition used
in \cite{BM-MPCS-2010} where such a function is called pseudo-Lipschitz
\emph{of order 2}.)

The following theorem was conjectured in \cite{DMM09}, and proved in
\cite{BM-MPCS-2010}. It shows that the behavior of $\AMP$ can be tracked by the
above state evolution  recursion.
\begin{theorem}[\cite{BM-MPCS-2010}]\label{prop:state-evolution} Let $\{x(n), w(n), A(n)\}_{n\in\naturals}$ be a converging sequence of
instances with the entries of $A(n)$ i.i.d. normal with mean $0$ and variance
$1/m$, while the signals $x(n)$ and noise
vectors $w(n)$ satisfy the hypotheses of Definition
\ref{def:Converging}. Let $\psi_1:\reals\to \reals$,
$\psi_2:\reals\times\reals\to \reals$ be  pseudo-Lipschitz functions. 
Finally, let $\{x^t\}_{t\ge 0}$, $\{r^t\}_{t\ge 0}$ be the sequence of 
estimates and residuals produced by AMP, cf. Eqs.~(\ref{eq:dmm1}), 
(\ref{eq:dmm2}).
Then, almost surely
\begin{eqnarray}
\lim_{n\to\infty}\frac{1}{m}\sum_{a=1}^m\psi_1
\big(r_{a}^{t}\big) & = &\E\Big\{\psi_1\big(\tau_t Z\big)\Big\}\, ,\label{eq:state-evolution-1}\\
\lim_{n\to\infty}\frac{1}{n}\sum_{i=1}^n\psi_2
\big(x_{i}^{t+1},x_{i}\big) &= &\E\Big\{\psi_2\big(\eta(X_0+\tau_t Z;\theta_t),X_0\big)\Big\}\, ,\label{eq:state-evolution-2}
\end{eqnarray}
where $Z\sim\normal(0,1)$ is independent of $X_0\sim p_{0}$.
\end{theorem}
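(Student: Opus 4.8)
The plan is to prove the state evolution statement by analyzing the AMP iteration as a special case of a general ``recursion'' driven by an i.i.d.\ Gaussian matrix, via a conditioning argument on the sigma-algebra generated by past iterates. First I would set up the standard AMP-to-recursion translation: writing $b^t = x^t + A^T r^t$ and $r^t = y - A x^t + \sb_t r^{t-1}$, one rewrites the iteration so that at each step a ``fresh'' multiplication by $A$ (respectively $A^T$) produces the next iterate, while the Onsager correction term $\sb_t r^{t-1}$ is precisely what cancels the spurious bias coming from reusing $A$. The key object is the conditional distribution of $A$ given the history of all iterates $\{x^0,\dots,x^t, r^0,\dots,r^{t-1}\}$: since $A$ is i.i.d.\ Gaussian, this conditional law is again Gaussian, but now constrained to satisfy a finite set of linear equations recording the past matrix-vector products. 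Decomposing $A$ into its conditional mean plus an independent Gaussian component orthogonal to the constraint space gives, by induction on $t$, that $b^t$ has the form $X_0 + \tau_t Z + (\text{vanishing correction})$ with $Z$ standard Gaussian independent of $X_0$, where $\tau_t$ obeys exactly the recursion $\tau_{t+1}^2 = \seF(\tau_t^2,\theta_t)$.

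The argument proceeds by a careful induction that simultaneously controls (i) the empirical distribution of $b^t - x$ (and of $r^t$), showing it converges to the claimed Gaussian law in the sense of all pseudo-Lipschitz test functions, and (ii) a collection of ``scalar'' limits — inner products and norms of the iterates — that feed into the next step of the induction. Concretely, for the second coordinate statement one applies $\psi_2$ to $(x_i^{t+1}, x_i) = (\eta(b^t_i; \theta_t), x_i)$ and uses the inductive Gaussian characterization of $b^t$ together with the law of large numbers over coordinates; the pseudo-Lipschitz growth condition is exactly what is needed to pass the empirical convergence of $b^t$ through $\psi_2$, since $\eta$ is Lipschitz and $\psi_2$ grows at most quadratically. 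The Onsager term enters the induction in a crucial way: without the $\sb_t r^{t-1}$ correction, the conditional-mean part of $A^T r^t$ would contribute a non-negligible deterministic shift proportional to the previous iterate, and one checks that the prescription $\sb_t = \|x^t\|_0 / m$ is precisely $\frac1m \sum_i \eta'(b^{t-1}_i;\theta_{t-1})$ in the limit, which is the coefficient that makes this shift cancel.

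The main technical obstacle — and the reason the proof in \cite{BM-MPCS-2010} is ``indirect and very technical'' — is handling the conditioning rigorously in the high-dimensional limit: the conditional distribution of $A$ lives on an affine subspace whose dimension grows with $t$, the ``correction'' terms must be shown to vanish in an appropriate $L^2$ / almost-sure sense uniformly enough to survive being plugged back into the next iteration, and one must control the non-differentiability of $\eta$ at $\pm\theta$ (this is why only almost-everywhere differentiability, not smoothness, is needed, but it still requires care since the argument of $\eta$ is a random Gaussian whose density near the kink must be bounded). A clean way to organize this is to first prove the theorem for a general recursion $h^{t+1} = A q^t$, $m^{t} = A^T h^t - (\cdots)$ with Lipschitz nonlinearities and arbitrary (non-random) initialization, establishing the state-evolution limit by the conditioning-plus-induction scheme above, and only then specialize: verify that AMP with the Onsager prescription is an instance of this general recursion, that the hypotheses of Definition~\ref{def:Converging} supply the needed convergence of initial data, and that the pseudo-Lipschitz class is closed under the operations appearing in the induction. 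The remaining steps — identifying $\tau_0^2 = \sigma^2 + \E\{X_0^2\}/\delta$ from the initialization $x^0 = 0$, and checking that the scalar recursion closes on a single parameter — are then routine bookkeeping.
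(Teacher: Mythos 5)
Your outline is correct, but it does not follow the route taken in this chapter, for the simple reason that the chapter does not prove Theorem \ref{prop:state-evolution} at all: the result is imported from \cite{BM-MPCS-2010}, and what the chapter supplies instead is a heuristic derivation of state evolution in which the matrix $A$ is replaced by a fresh independent copy $A(t)$ at every iteration, the observations by $y^t=A(t)x+w$, and the Onsager term is dropped; a central limit theorem argument then makes $B(t)(x^t-x)$ approximately i.i.d.\ Gaussian of variance $\ttau_t^2/\delta$, yielding the recursion $\tau_{t+1}^2=\seF(\tau_t^2,\theta_t)$, after which the chapter only asserts that for the true iteration (fixed $A$) the Onsager term produces an asymptotic cancelation of the correlations between $A$ and $x^t$. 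What you sketch is, instead, the strategy of the actual proof in \cite{BM-MPCS-2010}: Bolthausen-style conditioning of the Gaussian matrix on the sigma-algebra of past iterates, decomposition of the conditional law into its mean (fixed by the linear constraints recorded by previous matrix--vector products) plus an independent Gaussian acting on the orthogonal complement, an induction that simultaneously tracks pseudo-Lipschitz empirical averages and the scalar inner products feeding the next step, a general two-sequence recursion with Lipschitz nonlinearities later specialized to AMP, and the identification of $\sb_t=\|x^t\|_0/m$ with $\frac{1}{m}\sum_i\eta'(\,\cdot\,;\theta_{t-1})$ as the coefficient cancelling the bias proportional to the previous iterate (an identity that is in fact exact, since $\eta'\in\{0,1\}$). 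The trade-off between the two routes is clear: the fresh-matrix heuristic buys the recursion and the Gaussian picture in a few lines but cannot give the almost-sure statement for a fixed matrix, which is precisely where the difficulty lies; your conditioning route is the one that does, at the price of the technical work you correctly flag but do not carry out --- the growing-dimension conditioning, uniform control of the vanishing corrections so that they survive re-insertion into the next iterate, the moment bounds needed for pseudo-Lipschitz (order $2$) test functions, and the treatment of the non-differentiability of $\eta$ at $\pm\theta$. As a blind reconstruction of the strategy of the cited reference your proposal is faithful; as a self-contained proof it remains a program, with the induction hypothesis and the error estimates still to be supplied.
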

It is worth pausing for a few remarks.
\begin{remark}
\emph{
Theorem \ref{prop:state-evolution} holds for any choice of the sequence of 
thresholds $\{\theta_t\}_{t\ge 0}$. It 
does not require --for instance-- that the latter
converge. Indeed \cite{BM-MPCS-2010} proves a more general result
that holds for a a broad class of approximate message passing 
algorithms. The more general theorem establishes the validity 
of} state evolution \emph{in this broad context.}
 
\emph{For instance, the soft thresholding functions $\eta(\,\cdot\,;\theta_t)$ 
can be replaced by a generic sequence of Lipschitz continuous functions,
provided the coefficients $\sb_t$
in Eq.~(\ref{eq:dmm2}) are suitably modified.}
\end{remark}

\begin{remark}
\emph{This theorem does not require the vectors $x(n)$ to be
sparse. The use of other functions instead of the soft thresholding functions 
$\eta(\,\cdot\,;\theta_t)$ in the algorithm
can be useful for estimating such non-sparse vectors.}

\emph{Alternative nonlinearities, 
can also be useful when additional information on the entries of 
$x(n)$ is available.}
\end{remark}

\begin{remark}
\emph{While the theorem requires the matrices $A(n)$ to be random,
neither the signal $x(n)$ nor the noise vectors $w(n)$ need to be random.
They are generic deterministic sequences of vectors under the
conditions of Definition \ref{def:Converging}.}

\emph{The fundamental reason for this universality is that the matrix $A$
is both row and column exchangeable. Row exchangeability
guarantees universality with respect to the signals $x(n)$,
while column exchangeability guarantees universality with respect to
the noise $w(n)$. To see why, observe that, by row exchangeability 
(for instance), $x(n)$ can be replaced by the random vector obtained by 
randomly permuting its entries. Now, the distribution of such a random 
vector is very close (in appropriate sense) to the one of a random 
vector with i.i.d. entries whose distribution matches the empirical 
distribution of $x(n)$.}
\end{remark}

Theorem \ref{prop:state-evolution} strongly supports
both the use of soft thresholding, and the choice of the threshold 
level in Eq.~(\ref{eq:ThresholdChoice1}) or  (\ref{eq:ThresholdChoice2}).
Indeed Eq.~(\ref{eq:state-evolution-1}) states that the components of $r^t$
are approximately i.i.d.  $\normal(0,\tau_t^2)$, and hence both 
definitions of $\htau_t$ in 
Eq.~(\ref{eq:ThresholdChoice1}) or  (\ref{eq:ThresholdChoice2})
provide consistent estimators of $\tau_t$. Further,
Eq.~(\ref{eq:state-evolution-1}) implies that the
components of the deviation $(x^t+A^Tr^t-x)$ are also approximately 
i.i.d.  $\normal(0,\tau_t^2)$. In other words, the estimate
$(x^t+A^Tr^t)$ is equal to the actual signal plus 
noise of variance $\tau_t^2$, as illustrated in Fig.~\ref{fig:GaussHisto}.
According to our discussion of scalar estimation
in Section \ref{sec:Scalar}, the correct way of reducing the noise 
is to apply soft thresholding with threshold level $\alpha\tau_t$.

\begin{figure}
\centering
  \includegraphics[width=3.5in]{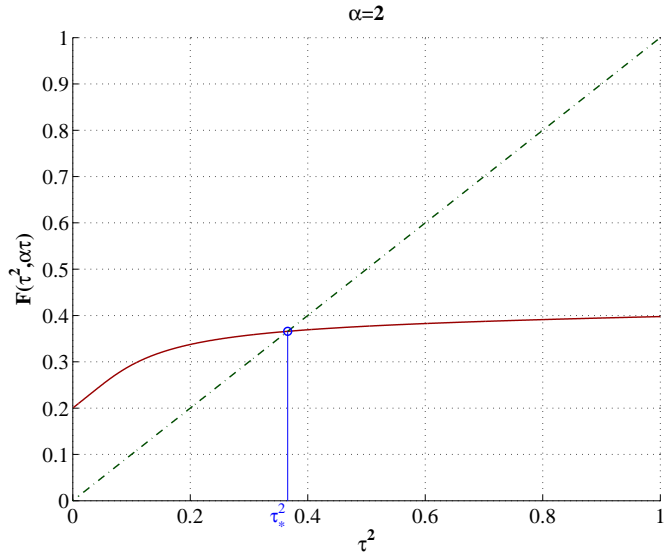}
  \caption{Mapping $\tau^2\mapsto\seF(\tau^2,\alpha\tau)$ for $\alpha=2$, $\delta=0.64$, $\sigma^2=0.2$,  $p_{0}(\{+1\}) = p_{0}(\{-1\}) = 0.064$ and
$p_{0}(\{0\}) = 0.872$. }
  \label{fig:tau2->F(tau2)}
\end{figure}
The choice $\theta_t = \alpha\tau_t$ with $\alpha$ fixed
has another important advantage.
In this case, the sequence $\{\tau_t\}_{t\ge 0}$ is
determined by the one-dimensional  recursion
\begin{eqnarray}
\tau_{t+1}^2 = \seF(\tau_t^2,\alpha\tau_t)\, .
\end{eqnarray}
The function $\tau^2\mapsto\seF(\tau^2,\alpha\tau)$ 
depends on the distribution of $X_0$ as well as on the other
parameters of the problem. An example is plotted in
Fig.~(\ref{fig:tau2->F(tau2)}). It turns out that the behavior 
shown here is generic: the function is always non-decreasing and
concave. This remark allows to easily prove the following.
\begin{proposition}[\cite{NSPT}]\label{propo:UniqFP}
Let $\alpha_{\rm min}= \alpha_{\rm min}(\delta)$ be the unique
non-negative solution of the equation 
\begin{eqnarray}
(1+\alpha^2)\Phi(-\alpha)-\alpha\phi(\alpha) = \frac{\delta}{2}\, ,
\label{eq:AlphaMin}
\end{eqnarray}
with $\phi(z) \equiv e^{-z^2/2}/\sqrt{2\pi}$ the standard Gaussian density
and $\Phi(z) \equiv\int_{-\infty}^{z}\phi(x)\,\de x$.

For any $\sigma^2>0$, $\alpha>\alpha_{\rm min}(\delta)$,
the  fixed point equation
$\tau^2 = \seF(\tau^2,\alpha\tau)$ admits a unique solution.
Denoting by $\tau_*=\tau_*(\alpha)$ this solution, we
have $\lim_{t\to\infty}\tau_t=\tau_*(\alpha)$.  
\end{proposition}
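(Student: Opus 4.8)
The plan is to exploit the two structural features of the map $g_\alpha(\tau^2) \equiv \seF(\tau^2,\alpha\tau)$ that are asserted just above the statement: that $g_\alpha$ is non-decreasing and concave on $\reals_{\ge 0}$, together with the fact that $g_\alpha(0) = \sigma^2 > 0$. First I would record these two properties with a short computation: writing out $\seF(\tau^2,\alpha\tau) = \sigma^2 + \frac{1}{\delta}\E\{[\eta(X_0+\tau Z;\alpha\tau)-X_0]^2\}$ and substituting $u = X_0/\tau$, one sees that the expectation is $\tau^2$ times a function of $\alpha$ and the (rescaled) law of $X_0/\tau$; monotonicity in $\tau^2$ and concavity follow by differentiating under the expectation and using that $\eta(\cdot\,;\theta)$ is $1$-Lipschitz with $|\eta(y;\theta)|$ convex-type estimates in $\theta$. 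Since this calculation is exactly the ``straightforward'' part and is essentially the one behind $M(\ve,\alpha)$ in Section~\ref{sec:Scalar}, I would cite \cite{NSPT} for it rather than reproduce it.

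Given monotonicity and concavity with $g_\alpha(0)>0$, the existence and uniqueness of a fixed point is a one-dimensional convexity argument. The function $h(\tau^2) \equiv g_\alpha(\tau^2) - \tau^2$ satisfies $h(0) = \sigma^2 > 0$, is concave, and I claim $h(\tau^2) < 0$ for $\tau^2$ large: indeed $g_\alpha$ is concave and non-decreasing, so its slope at infinity is some $s_\infty \ge 0$, and a direct evaluation of $\lim_{\tau\to\infty}\partial_{\tau^2}\seF(\tau^2,\alpha\tau)$ gives precisely $\frac{1}{\delta}\big[(1+\alpha^2)2\Phi(-\alpha) - 2\alpha\phi(\alpha) + \cdots\big]$-type expression whose value is strictly less than $1$ exactly when $\alpha > \alpha_{\rm min}(\delta)$ — this is where equation~(\ref{eq:AlphaMin}) enters, as the borderline case $s_\infty = 1$. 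Hence $h$ is eventually strictly negative, so by continuity it has at least one zero; and a concave function that starts positive and ends negative crosses zero exactly once (if it had two zeros it would have to be $\le 0$ between them by concavity, yet $h(0)>0$ forces both zeros to be positive, and then concavity between $0$ and the first zero combined with $h$ negative beyond the second zero is contradictory unless the zeros coincide). I would state this crossing lemma cleanly and verify the slope-at-infinity computation carefully, since the precise identification of the threshold $\alpha_{\rm min}$ with~(\ref{eq:AlphaMin}) is the one genuinely quantitative point.

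For the convergence $\tau_t \to \tau_*$, I would use the standard monotone-iteration argument for a concave increasing map with a unique fixed point. Because $g_\alpha$ is non-decreasing, the sequence $\tau_t^2$ is monotone: if $\tau_0^2 \le g_\alpha(\tau_0^2) = \tau_1^2$ then $\tau_t^2$ is non-decreasing, and if $\tau_0^2 \ge \tau_1^2$ it is non-increasing; in either case it is also bounded (below by $\sigma^2>0$ trivially, and above because $h(\tau^2)<0$ for large $\tau^2$ means $g_\alpha$ maps a large enough interval $[0,R]$ into itself). A bounded monotone sequence converges, and its limit must be a fixed point of $g_\alpha$, hence equals $\tau_*$ by uniqueness. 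The case analysis on the sign of $\tau_1^2 - \tau_0^2$ is routine once monotonicity of $g_\alpha$ is in hand.

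\textbf{Main obstacle.} The only real work is the slope-at-infinity computation and its matching with~(\ref{eq:AlphaMin}): one must show that $\lim_{\tau^2\to\infty} g_\alpha'(\tau^2) < 1 \iff \alpha > \alpha_{\rm min}(\delta)$, which amounts to evaluating $\lim_{\tau\to\infty}\frac{1}{\delta}\E\{\partial_{\tau^2}[\eta(X_0+\tau Z;\alpha\tau)-X_0]^2\}$. The subtlety is that $X_0$ is fixed of order $1$ while $\tau\to\infty$, so $X_0$ becomes negligible relative to $\tau Z$ and the limit reduces to the ``pure noise'' expectation $\frac{1}{\delta}\E\{\partial_{s}[\eta(\sqrt{s}Z;\alpha\sqrt s)]^2\}\big|$, i.e.\ to the derivative of the $X_0=0$ contribution, which is $\frac{2}{\delta}[(1+\alpha^2)\Phi(-\alpha) - \alpha\phi(\alpha)]$; this being $< 1$ is precisely~(\ref{eq:AlphaMin}). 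Making the interchange of limit and expectation rigorous (dominated convergence using the $1$-Lipschitz bound on $\eta$ and the bounded second moment of $p_0$) is the step I would write out with care; everything else is soft.
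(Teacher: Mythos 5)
Your proposal is correct and follows essentially the same route as the paper, which deduces the proposition from exactly the properties you use --- $\tau^2\mapsto\seF(\tau^2,\alpha\tau)$ non-decreasing and concave with value $\sigma^2>0$ at $\tau^2=0$, the asymptotic slope $\frac{2}{\delta}\big[(1+\alpha^2)\Phi(-\alpha)-\alpha\phi(\alpha)\big]<1$ precisely when $\alpha>\alpha_{\rm min}(\delta)$, and monotone iteration for convergence --- deferring the detailed verification to \cite{NSPT}. One small repair: in your uniqueness parenthetical, concavity places $h$ \emph{above} the chord between two zeros (so $h\ge 0$ there, not $\le 0$); the clean argument is that if $h(a)=0$ with $a>0$, then writing $a$ as a convex combination of $0$ and any $t>a$ and using $h(0)=\sigma^2>0$ forces $h(t)<0$, so the first zero is the only one.
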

It can also be shown that, under the choice 
$\theta_t=\alpha\tau_t$, convergence is exponentially fast
unless the problem parameters take some `exceptional' values
(namely on the phase transition boundary discussed below).
%
%
\subsection{The risk of the LASSO}\label{sec:Results}

State evolution provides a scaling limit of the AMP 
dynamics in the high-dimensional setting. By showing that AMP
converges to the LASSO estimator, one can transfer this information
to a scaling limit result of the LASSO estimator itself.

Before stating the limit,
we have to describe a \emph{calibration} mapping between 
the AMP parameter $\alpha$ (that defines the 
sequence of thresholds $\{\theta_t\}_{t\ge 0}$) and the LASSO regularization 
parameter $\lambda$. The connection was first introduced in \cite{NSPT}.

We define the function $\alpha\mapsto \lambda(\alpha)$
on $(\alpha_{\rm min}(\delta),\infty)$, by
\begin{eqnarray}
\lambda(\alpha) \equiv \alpha\tauinf\left[1 - \frac{1}{\delta}
\prob\big\{|X_0+\tauinf Z|\ge \alpha\tauinf\big\}\right]\, ,
\label{eq:calibration}
\end{eqnarray}
where $\tauinf=\tauinf(\alpha)$ is the state evolution fixed point 
defined as per Proposition \ref{propo:UniqFP}.
Notice that this relation corresponds to the scaling limit
of the general relation (\ref{eq:GeneralLambdaTheta}),
provided we assume that the solution of the LASSO optimization
problem (\ref{eq:LASSO}) 
is indeed described by the fixed point of state evolution
(equivalently, by its $t\to\infty$ limit). 
This follows by noting that $\theta_t\to\alpha\tau_*$
and that $\|x\|_0/n\to \E\{\eta'(X_0+\tauinf Z;\alpha\tauinf)\}$.
While this is just an interpretation of
the definition (\ref{eq:calibration}),
the result presented next implies that the interpretation is
indeed correct.

In the following we will need to invert the function $\alpha\mapsto
\lambda(\alpha)$.
We thus define $\alpha:(0,\infty)\to(\alpha_{\rm min},\infty)$
in such a way that
\begin{eqnarray*}
\alpha(\lambda) \in \big\{\, a\in (\alpha_{\rm min},\infty)\, :\,
\lambda(a) =\lambda\big\}\, .\label{eq:AlphaOfLambda}
\end{eqnarray*}
The fact that the right-hand side
is non-empty, and therefore the function $\lambda\mapsto\alpha(\lambda)$
is well defined, is part of the main result of this section.
\begin{theorem}\label{thm:Risk}
Let $\{x(n), w(n), A(n)\}_{n\in\naturals}$ be a converging sequence of
instances with the entries of $A(n)$ i.i.d. normal with mean $0$ and variance
$1/m$. Denote by $\hx(\lambda)$ the \LASSO\, estimator for
instance $(x(n), w(n), A(n))$, with $\sigma^2,\lambda> 0$, and
let $\psi:\reals\times\reals\to \reals$ be a pseudo-Lipschitz function.
Then, almost surely
\begin{eqnarray}\label{eq:asymptotic-result}
\lim_{n\to\infty}\frac{1}{n}\sum_{i=1}^n\psi
\big(\hx_{i},x_{i}\big) = \E\Big\{\psi\big(\eta(X_0+\tau_* Z;\theta_*),X_0\big)\Big\}\, ,
\end{eqnarray}
where $Z\sim\normal(0,1)$ is independent of $X_0\sim p_{0}$,
$\tau_*=\tau_*(\alpha(\lambda))$ and $\theta_*=\alpha(\lambda)
\tau_*(\alpha(\lambda))$.

Further, the function $\lambda\mapsto\alpha(\lambda)$ 
is well defined and unique on $(0,\infty)$.
\end{theorem}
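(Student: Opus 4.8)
\textbf{Proof plan for Theorem \ref{thm:Risk}.}
The strategy is to transfer the state-evolution result (Theorem \ref{prop:state-evolution}) from the AMP iterates to the LASSO optimum, exploiting Proposition \ref{propo:Easy} to identify fixed points. I would proceed in three stages. First, I would show that for fixed $\alpha>\alpha_{\rm min}(\delta)$, running AMP with $\theta_t=\alpha\htau_t$ produces iterates $x^t$ that converge, as $t\to\infty$ and then $n\to\infty$ (or with the limits taken in a suitable joint fashion), to a vector whose empirical behavior is captured by the state-evolution fixed point $\tau_*(\alpha)$. The key input here is Proposition \ref{propo:UniqFP}: since the map $\tau^2\mapsto\seF(\tau^2,\alpha\tau)$ is non-decreasing and concave with a unique fixed point, the scalar recursion converges, and combined with the exponential convergence remark this controls $\htau_t\to\tau_*$. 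Feeding this into Theorem \ref{prop:state-evolution} with $\psi_1,\psi_2$ chosen appropriately gives that the empirical distribution of $(x^t_i,x_i)$ converges to that of $(\eta(X_0+\tau_*Z;\alpha\tau_*),X_0)$.

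Second, and this is the crux, I would argue that the AMP limit actually equals the LASSO estimator $\hx(\lambda)$ when $\lambda=\lambda(\alpha)$ as in Eq.~(\ref{eq:calibration}). By Proposition \ref{propo:Easy}, any fixed point $(x^*,r^*)$ of the AMP iteration with $\theta_t\to\theta_*=\alpha\tau_*$ and $\sb_t\to\sb_*=\|x^*\|_0/m$ is a minimizer of $\cost_{A,y}(\,\cdot\,)$ with $\lambda=\theta_*(1-\sb_*)$; using $\|x^*\|_0/n\to\E\{\eta'(X_0+\tau_*Z;\alpha\tau_*)\}=\prob\{|X_0+\tau_*Z|\ge\alpha\tau_*\}$ and $\sb_*=(\|x^*\|_0/n)/\delta$ recovers exactly formula (\ref{eq:calibration}). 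The subtle point is that AMP need not literally reach a fixed point in finite $n$; one must show the iterates converge (in the relevant empirical-distribution / normalized-$\ell_2$ sense) to \emph{the} LASSO optimum, not merely to \emph{some} stationary point. This requires (i) that the LASSO has a unique minimizer almost surely for Gaussian $A$ (a standard consequence of general position of the columns, or of strict convexity arguments for $\lambda>0$), and (ii) a stability/contraction estimate showing that the distance between $x^t$ and $\hx(\lambda)$ — measured say by $\|x^t-\hx\|_2^2/n$ — goes to zero as $t\to\infty$ uniformly enough in $n$. This last ingredient is the genuinely hard part and is where the heavy machinery of \cite{BM-MPCS-2010} enters; I would cite it rather than reprove it.

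Third, I would establish that $\lambda\mapsto\alpha(\lambda)$ is well defined on $(0,\infty)$. For this I would analyze the function $\alpha\mapsto\lambda(\alpha)$ on $(\alpha_{\rm min},\infty)$: show it is continuous, that $\lambda(\alpha)\to 0$ (or some finite limit) as $\alpha\downarrow\alpha_{\rm min}$, that $\lambda(\alpha)\to\infty$ as $\alpha\to\infty$, and that it is strictly monotone (hence invertible). Monotonicity and the boundary behavior follow from differentiating (\ref{eq:calibration}) and using properties of $\tau_*(\alpha)$ as a function of $\alpha$ (itself obtained by implicit differentiation of the fixed-point equation $\tau_*^2=\seF(\tau_*^2,\alpha\tau_*)$, using the concavity/monotonicity from Proposition \ref{propo:UniqFP}). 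Surjectivity onto $(0,\infty)$ then gives existence of $\alpha(\lambda)$, and strict monotonicity gives uniqueness. Once calibration is in place, the final display (\ref{eq:asymptotic-result}) is just the $t\to\infty$ version of (\ref{eq:state-evolution-2}) combined with the identification $x^t\to\hx(\lambda)$, extended from the special test functions to general pseudo-Lipschitz $\psi$ by a routine approximation argument (pseudo-Lipschitz functions of converging empirical distributions with bounded second moments converge, since the second moments also converge). The main obstacle, to reiterate, is step two: controlling the convergence of the AMP trajectory to the LASSO optimum rather than to an arbitrary fixed point, which rests on the uniqueness of the LASSO minimizer together with a quantitative convergence estimate.
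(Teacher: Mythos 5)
Your skeleton coincides with the strategy the chapter itself sketches around Theorem \ref{thm:Risk} (which it states without proof): run AMP with $\theta_t=\alpha\tau_t$, use Theorem \ref{prop:state-evolution} and Proposition \ref{propo:UniqFP} to control the $t\to\infty$ behavior, use Proposition \ref{propo:Easy} together with $\|x^t\|_0/n\to\E\{\eta'(X_0+\tauinf Z;\alpha\tauinf)\}$ to recover the calibration (\ref{eq:calibration}), and finally invert $\alpha\mapsto\lambda(\alpha)$. So the architecture is the intended one.

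The genuine gap is your step two, and it is not repaired by the citation you lean on. \cite{BM-MPCS-2010} proves state evolution for the AMP trajectory, i.e.\ Theorem \ref{prop:state-evolution}; it does not prove that the AMP iterates approach the LASSO minimizer, and Proposition \ref{propo:Easy} speaks only about exact fixed points, which AMP need not reach at finite $n$. Hence the statement $\lim_{t\to\infty}\lim_{n\to\infty}\|x^t-\hx(\lambda)\|_2^2/n=0$ is precisely what remains to be proved, and the mechanism you sketch for it --- a.s.\ uniqueness of the minimizer plus a contraction/stability estimate uniform in $n$ --- does not go through as stated: for $\delta<1$ the cost (\ref{eq:LASSO}) is flat along the null space of $A$ (convex but nowhere strongly convex), uniqueness via columns in general position carries no quantitative modulus, and no contraction of AMP toward $\hx$ in a norm uniform in $n$ is available. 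The argument that actually closes this step (in the companion work of Bayati and Montanari on the LASSO risk, a reference distinct from \cite{BM-MPCS-2010} and not reproduced in this chapter) runs differently: one shows that $x^t$ satisfies the stationarity (subgradient) condition of (\ref{eq:LASSO}) up to an error whose normalized norm vanishes as $t\to\infty$ after $n\to\infty$, and then invokes a structural lemma asserting that any approximately stationary point whose active set has size at most a suitable fraction of $n$ must be close to $\hx$ in normalized $\ell_2$; that lemma rests on lower bounds for the minimum singular values of submatrices of $A$ with $O(n)$ columns (Gaussianity is used again here) and on state evolution itself to control the size of the active set. Without this ingredient your plan identifies the AMP limit only as \emph{some} point compatible with the calibration, not as $\hx(\lambda)$. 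A smaller caveat on step three: strict monotonicity of $\alpha\mapsto\lambda(\alpha)$ is not a routine differentiation of (\ref{eq:calibration}), since the sign of the derivative of $\tauinf(\alpha)$ must itself be extracted from the fixed-point equation; existence is more easily obtained from continuity and the boundary behavior of $\lambda(\alpha)$, with uniqueness handled as a separate claim.
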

The assumption of a converging problem sequence is
important for the result to hold, while the hypothesis of Gaussian
measurement matrices $A(n)$ is necessary for the proof technique
to be applicable. On the other hand,
the restrictions $\lambda,\sigma^2>0$, and $\prob\{X_0\neq 0\}>0$
(whence $\tau_*\neq0$ using Eq.~\eqref{eq:calibration})
are made in order to avoid technical complications due
to degenerate cases. Such cases can be resolved by continuity arguments.

Let us emphasize that some of the remarks made in the
case of state evolution, cf. Theorem \ref{prop:state-evolution}, hold 
for the last theorem as well. More precisely.
\begin{remark}
\emph{Theorem \ref{thm:Risk} does not require
either the signal $x(n)$ or the noise vectors $w(n)$ to be random.
They are generic deterministic sequences of vectors under the
conditions of Definition \ref{def:Converging}.}

\emph{In particular, it does not require the vectors $x(n)$ to be
sparse. Lack of sparsity will reflect in a large risk
as computed through the mean square error computed through
Eq.~(\ref{eq:asymptotic-result}).} 

\emph{On the other hand, when restricting $x(n)$ to be $k$
sparse for $k = n\ve$ (i.e. to be in the class $\cF_{n,k}$), 
one can derive asymptotically exact estimates for 
the minimax risk over this class. This will be further 
discussed in Section \ref{sec:NSPT}.} 
\end{remark}
\begin{remark}
\emph{As a special case, for noiseless measurements $\sigma=0$,
and as $\lambda\to 0$, the above formulae describe the 
asymptotic risk (e.g. mean square error) for the
basis pursuit estimator, minimize $\|x\|$ subject to 
$y=Ax$. For sparse signals  $x(n)\in \cF_{n,k}$, $k=n\rho\delta$,
the risk vanishes below a certain phase transition
line $\rho<\rho_{\rm c}(\delta)$: this point is further
discussed in Section \ref{sec:NSPT}.}
\end{remark}

Let us now discuss some limitations of this result.
Theorem \ref{thm:Risk} assumes that the entries of matrix $A$ are 
i.i.d. Gaussians.
Further, our result is asymptotic,
and one might wonder how accurate it is for
instances of moderate dimensions.

Numerical simulations were carried out in \cite{NSPT,OurLASSO_Exp} and
suggest that the result is universal over a broader class of 
matrices and that is relevant already for
$n$ of the order of a few hundreds.
As an illustration, we present in Figs.~\ref{fig:GaussianMatrices}
and \ref{fig:PM1Matrices} the outcome of such simulations
for two types of random matrices.
Simulations with real data can be found in \cite{OurLASSO_Exp}.
We generated the signal vector randomly with entries in $\{+1,0,-1\}$
and $\prob(x_{0,i}=+1) = \prob(x_{0,i}=-1) = 0.064$. The noise vector
$w$ was generated by using i.i.d. $\normal(0,0.2)$ entries.

We solved the LASSO problem (\ref{eq:LASSO}) and
computed estimator $\hx$ using \texttt{CVX}, a package for
specifying and solving convex programs \cite{CVX} and \texttt{OWLQN},
a package for solving large-scale versions of LASSO \cite{OWLQN}. We used
several values of $\lambda$ between $0$ and $2$ and $n$ equal to $200$,
$500$, $1000$, and $2000$. The aspect ratio of matrices was fixed in
all cases to $\delta=0.64$.
For each case, the point $(\lambda,\MSE)$ was plotted and the results
are shown in the figures. Continuous lines corresponds to the
asymptotic prediction by Theorem
\ref{thm:Risk} for $\psi(a,b) = (a-b)^2$, namely
\begin{eqnarray*}
\lim_{n\to\infty}\frac{1}{n}\|\hx-x\|_2^2 = \E\big\{
\big[\eta(X_0+\tau_*Z;\theta_*)-X_0\big]^2\big\} =
\delta(\tau_*^2-\sigma^2)\, .
\end{eqnarray*}
 The agreement is remarkably good already for $n,m$ of
the order of a few hundreds, and deviations are consistent with statistical
fluctuations.

The two figures correspond to different entries distributions:
$(i)$ Random Gaussian matrices with aspect ratio $\delta$ and i.i.d.
$\normal(0,1/m)$ entries (as in Theorem \ref{thm:Risk});
$(ii)$ Random $\pm1$ matrices with aspect ratio $\delta$. Each entry is
independently equal to $+1/\sqrt{m}$ or $-1/\sqrt{m}$ with equal probability.
The resulting MSE curves are hardly distinguishable.
Further evidence towards universality will be discussed in 
Section~\ref{sec:Universality}.

Notice that the asymptotic prediction has
a minimum as a function of $\lambda$. The location of this minimum
 can be used to select the regularization parameter.

\begin{figure}
\centering
  \includegraphics[width=4.in]{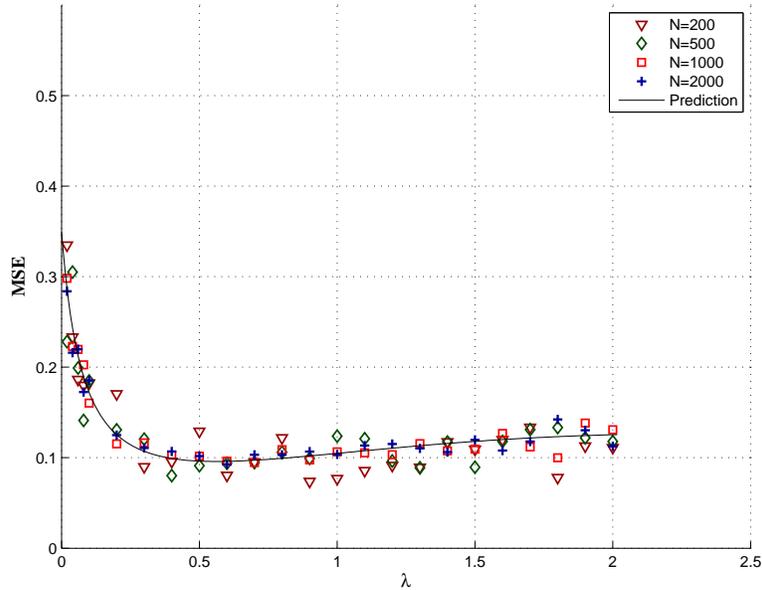}
  \caption{Mean square error (MSE) as a function of the regularization
parameter $\lambda$ compared to the asymptotic prediction for $\delta=0.64$ and $\sigma^2=0.2$. Here the measurement matrix $A$ has i.i.d.  $\normal(0,1/m)$ entries. Each point in this plot is generated by finding the LASSO predictor $\hx$ using a measurement vector $y=Ax+w$ for an independent signal vector $x$, an independent noise vector $w$, and an independent matrix $A$.}\label{fig:GaussianMatrices}
\end{figure}

\begin{figure}
\centering
  \includegraphics[width=4.in]{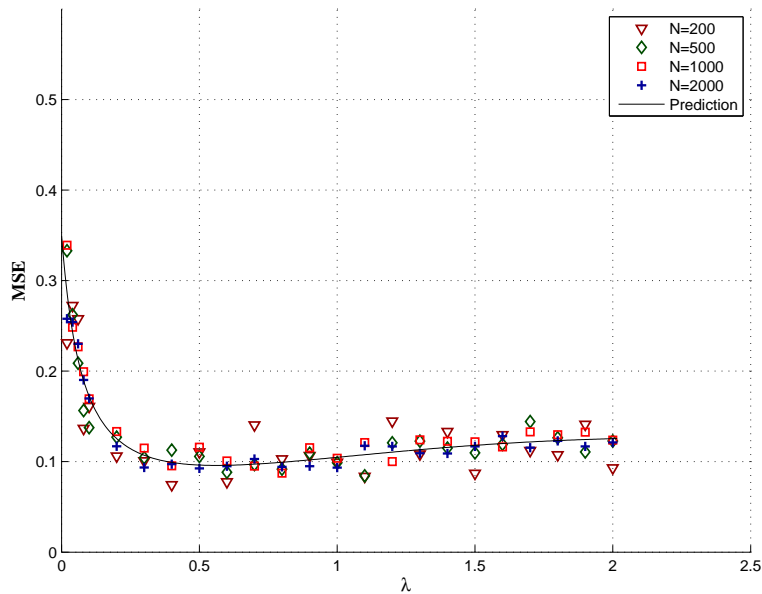}
 \caption{As in Fig.~\ref{fig:GaussianMatrices}, but  the
measurement matrix $A$ has i.i.d. entries that are equal to $\pm1/\sqrt{m}$
with equal probabilities.}\label{fig:PM1Matrices}
\end{figure}
%
%
\subsection{A decoupling principle}

There exists a suggestive interpretation of the state evolution result
in Theorem \ref{prop:state-evolution}, as well as of the 
scaling limit of the LASSO established in Theorem \ref{thm:Risk}:
\emph{
The estimation problem in the vector model $y=Ax+w$
reduces --asymptotically-- to $n$ uncoupled
scalar estimation problems $\widetilde{y}_i = x_i +\widetilde{w}_i$.}
However the noise variance is increased from $\sigma^2$ to
$\tau^2_t$ (or $\tauinf^2$ in the case of the LASSO),
due to `interference' between the original coordinates:
\begin{eqnarray}
y=Ax+w \;\;\;\;\;\; \Leftrightarrow \;\;\;\;
\left\{\begin{array}{l}
\widetilde{y}_1 = x_1 +\widetilde{w}_1\\
\widetilde{y}_2 = x_2 +\widetilde{w}_2\\
\vdots\\
\widetilde{y}_n = x_n +\widetilde{w}_n
\end{array}\right.\, .
\end{eqnarray}
An analogous phenomenon is well known in statistical physics 
and probability theory and takes sometimes the name of 
`correlation decay' \cite{WeitzCrit,Gamarnik07,MezardMontanari}. 
In the context of CDMA system analysis via replica method,
the same phenomenon was also called `decoupling principle'
\cite{TanakaCDMA,GuoVerdu}.

Notice that the AMP algorithm gives a precise realization of this
decoupling principle, since for each $i\in [n]$,
and for each number of iterations $t$, it produces an estimate,
namely $(x^t+A^Tr^t)_i$ that can be considered a realization
of the observation $\widetilde{y}_i$ above.
Indeed Theorem \ref{prop:state-evolution} (see also discussion below
the theorem) states that $(x^t+A^Tr^t)_i=x_i+\widetilde{w}_i$
with $\widetilde{w}_i$ asymptotically Gaussian with 
mean $0$ and variance $\tau_t^2$.

The fact that observations of distinct coordinates are asymptotically 
decoupled is stated precisely below.
\begin{corollary}[Decoupling principle, \cite{BM-MPCS-2010}]
\label{coro:Decoupling}
Under the assumption of Theorem \ref{prop:state-evolution}, fix  $\ell\ge 2$,
let $\psi:\reals^{2\ell}\to\reals$ be any Lipschitz function,
and denote by ${\sf E}$ expectation with respect to a
uniformly random subset of distinct indices
$J(1),\dots,J(\ell)\in [n]$. 

Further, for some fixed $t>0$, let $\widetilde{y}^t = x^t+A^Tr^t\in\reals^n$. 
Then, almost surely
\begin{eqnarray*}
\lim_{n\to\infty}{\sf E}\psi(\widetilde{y}^t_{J(1)},\dots,
\widetilde{y}^t_{J(\ell)},
x_{J(1)},\dots,x_{J(\ell)})
=\E\big\{\psi\big(
X_{0,1}+\tau_t Z_1,\dots,X_{0,\ell}+\tau_t Z_{\ell},X_{0,1},\dots,
X_{0,\ell}\big)\big\} ,\phantom{a}\label{eq:Kconv}
\end{eqnarray*}
for  $X_{0,i}\sim p_{0}$ and $Z_i\sim \normal(0,1)$,
$i=1,\dots,\ell$ mutually independent.
\end{corollary}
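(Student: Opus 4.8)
The plan is to deduce Corollary~\ref{coro:Decoupling} from Theorem~\ref{prop:state-evolution} by a symmetrization argument that exploits the exchangeability of the Gaussian ensemble. The key observation is that Theorem~\ref{prop:state-evolution} already handles sums of the form $\frac1n\sum_i\psi_2(x_i^{t+1},x_i)$ and, more to the point (via the discussion below the theorem), sums of the form $\frac1n\sum_i\varphi(\widetilde y_i^t,x_i)$ where $\widetilde y^t=x^t+A^Tr^t$; these are exactly the $\ell=1$ case of the corollary. So the real content is the passage from $\ell=1$ to general $\ell$, i.e.\ showing that the joint empirical distribution of $\ell$-tuples of coordinates factorizes in the limit.

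First I would reduce to the case where the $\ell$ indices $J(1),\dots,J(\ell)$ are sampled \emph{with} replacement rather than as a uniformly random set of distinct indices; since $\ell$ is fixed and $n\to\infty$, the probability of a collision is $O(1/n)$ and, using that $\psi$ is Lipschitz and the relevant quantities have bounded moments (from the bounded-second-moment hypotheses in Definition~\ref{def:Converging} together with the Gaussian $Z$), this changes the left-hand side by a vanishing amount. Next, the central step: I would augment the converging sequence. Observe that running AMP on the instance $(x(n),w(n),A(n))$ and then evaluating on a random $\ell$-tuple is equivalent to the following device. For a ``test'' point, note that $\widetilde y^t=x+ (\widetilde y^t - x)$, and Theorem~\ref{prop:state-evolution}, applied with test functions $\psi_2$ that are tensor products (or rather, applied to suitable one-dimensional marginals), tells us that the empirical distribution of the pairs $\{(\widetilde y_i^t - x_i,\,x_i)\}_{i\in[n]}$ converges weakly to the law of $(\tau_t Z, X_0)$ with $Z\perp X_0$. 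The point is that this is a statement about the \emph{one}-dimensional empirical measure $\widehat\mu_n \equiv \frac1n\sum_i \delta_{(\widetilde y_i^t-x_i,\,x_i)}$, and once $\widehat\mu_n \Rightarrow \mu \equiv \mathcal L(\tau_t Z,X_0)$ almost surely (plus convergence of second moments, so that we have convergence in $W_2$), the empirical measure of $\ell$-tuples sampled i.i.d.\ from $\widehat\mu_n$, namely $\widehat\mu_n^{\otimes\ell}$, converges weakly (and in $W_2$) to $\mu^{\otimes\ell}$. Testing the Lipschitz function $\psi$ against $\widehat\mu_n^{\otimes\ell}$ versus $\mu^{\otimes\ell}$ and invoking the $W_2$ convergence (which controls $\int\psi\,d\nu$ for $\psi$ Lipschitz with the usual growth) then yields the claim.

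More carefully, to get the $W_2$ (rather than just weak) convergence of $\widehat\mu_n$ that I need for the Lipschitz-but-unbounded $\psi$ in the corollary, I would extract from Theorem~\ref{prop:state-evolution} not only weak convergence but also convergence of the relevant second moments: taking $\psi_1(r)=r^2$ in \eqref{eq:state-evolution-1} gives $\frac1m\|r^t\|_2^2\to\tau_t^2$, and combined with $\|A^Tr^t\|$-type estimates and hypothesis $(a)$ of Definition~\ref{def:Converging} this pins down $\frac1n\sum_i(\widetilde y_i^t)^2$ and the cross term; alternatively one applies \eqref{eq:state-evolution-2} with $\psi_2$ a smooth truncation of $(a,b)\mapsto (a-b)^2$ and lets the truncation grow. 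With $\widehat\mu_n\to\mu$ in $W_2$ almost surely, the map $\nu\mapsto\int\psi\,d\nu^{\otimes\ell}$ is continuous in $W_2$ on the relevant class (since $\psi$ Lipschitz implies $|\psi(u)-\psi(v)|\le L(1+\|u\|+\|v\|)\|u-v\|$, a standard $W_2$-continuity estimate), giving $\int\psi\,d\widehat\mu_n^{\otimes\ell}\to\int\psi\,d\mu^{\otimes\ell}$, which is precisely the right-hand side of the corollary.

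The main obstacle I anticipate is not the tensorization step, which is soft, but rather nailing down the \emph{uniform integrability} needed to upgrade the weak convergence supplied by Theorem~\ref{prop:state-evolution} to the $W_2$-type convergence that legitimizes testing against the unbounded Lipschitz $\psi$ in the corollary's statement; this is exactly why the theorem is stated with pseudo-Lipschitz (growth-$2$) test functions, and one must check that applying it to a well-chosen family of pseudo-Lipschitz functions captures both the weak limit and the second moments of $\widehat\mu_n$ along almost every realization. A secondary, more bookkeeping-level obstacle is making precise the ``equivalent to sampling i.i.d.\ from $\widehat\mu_n$'' claim: one must verify that for the Lipschitz $\psi$ at hand, $\mathsf E\,\psi(\widetilde y^t_{J(1)},\dots,x_{J(\ell)})$ (expectation over the random index set) really equals $\int\psi\,d\widehat\mu_n^{\otimes\ell}$ up to the $O(1/n)$ distinct-versus-with-replacement correction controlled above, which follows from Fubini once the with-replacement reduction is in place. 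Everything else — the exchangeability of $A(n)$ that morally underlies why no correlations survive, and the almost-sure qualifier — is inherited directly from Theorem~\ref{prop:state-evolution}.
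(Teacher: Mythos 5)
The chapter itself gives no proof of Corollary~\ref{coro:Decoupling} (it is imported from \cite{BM-MPCS-2010}), so your argument can only be judged on its own terms; it is essentially correct and follows the natural route. Your three ingredients are all sound: (i) replacing the uniformly random \emph{distinct} indices by i.i.d.\ uniform ones at cost $O(1/n)$, controlled through the linear growth of the Lipschitz $\psi$ and the convergent empirical second moments; (ii) identifying the with-replacement expectation with $\int\psi\,d\nu_n^{\otimes\ell}$ for the empirical pair measure $\nu_n=\frac{1}{n}\sum_{i}\delta_{(\widetilde{y}^t_i,x_i)}$; (iii) tensorization plus Wasserstein continuity — note that a telescoping replacement of one pair-coordinate at a time shows $W_1(\nu_n,\mu)\to 0$ already suffices for a Lipschitz $\psi$, so your $W_2$ bookkeeping is more than enough. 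The one point you should make explicit is the $\ell=1$ input. The literal statement of Theorem~\ref{prop:state-evolution} controls only averages of $\psi_2(x^{t+1}_i,x_i)=\psi_2\big(\eta(\widetilde{y}^t_i;\theta_t),x_i\big)$, cf.\ Eq.~\eqref{eq:state-evolution-2}, and since $\eta(\,\cdot\,;\theta_t)$ is many-to-one (it collapses $[-\theta_t,\theta_t]$ to $0$) this does not by itself determine the joint empirical law of $(\widetilde{y}^t_i,x_i)$, nor its second moments; your suggestion of applying the theorem to ``suitable one-dimensional marginals'' or truncations of $(a,b)\mapsto(a-b)^2$ cannot repair this, because no choice of $\psi_2$ composed with $\eta$ sees the values of $\widetilde{y}^t_i$ inside the threshold window. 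What you actually need is the stronger form of state evolution proved in \cite{BM-MPCS-2010} (and asserted informally in the discussion below Theorem~\ref{prop:state-evolution} in this chapter): $\frac{1}{n}\sum_i\varphi(\widetilde{y}^t_i,x_i)\to\E\,\varphi(X_0+\tau_t Z,X_0)$ almost surely for pseudo-Lipschitz $\varphi$, which supplies both the weak limit and the second moments of $\nu_n$. With that as the $\ell=1$ input — which is precisely the source the corollary cites — the remainder of your proof goes through as written.
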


%
%
\subsection{An heuristic derivation of state evolution} 

The state evolution recursion has a simple heuristic
description that is useful to present here since it
clarifies the difficulties  involved in the proof.
In particular, this description brings up the key role
played by the `Onsager term' appearing in Eq.~(\ref{eq:dmm2}) \cite{DMM09}.

Consider again the recursion (\ref{eq:dmm1}),
(\ref{eq:dmm2}) but
introduce the following three modifications:
$(i)$ Replace the random matrix $A$ with a new
independent copy $A(t)$ at each iteration $t$;
$(ii)$ Correspondingly replace the observation vector $y$
with $y^t=A(t)x+w$;
$(iii)$ Eliminate the last term in the update equation for $r^t$.
We thus get the following dynamics:
\begin{align}
x^{t+1}& = \eta(A(t)^Tr^t+x^t;\theta_t)\, ,\\
r^{t}& = y^t-A(t)x^t\, ,
\end{align}
where $A(0),A(1),A(2),\dots$ are i.i.d. matrices of dimensions $m\times n$
with i.i.d. entries $A_{ij}(t)\sim \normal(0,1/m)$.
(Notice that, unlike in the rest of the article, we use here the
argument of $A$ to denote the iteration number, and not the matrix
dimensions.)

This recursion is most
conveniently written by eliminating $r^t$:
\begin{align}
x^{t+1}& = \eta\big(A(t)^Ty^t+(\identity-A(t)^TA(t))x^t;\theta_t\big)\, ,\nonumber\\
       & = \eta\big(x+A(t)^Tw+B(t)(x^t-x);\theta_t\big)\, ,
\label{eq:ModifiedRecursionNoAlg}
\end{align}
where we defined $B(t) = \identity-A(t)^{T}A(t)\in\reals^{n\times n}$.
Let us stress that this recursion does not correspond to any concrete algorithm,
since the matrix $A$ changes from iteration to iteration.
It is nevertheless useful for developing intuition.

Using the central limit theorem, it is easy
to show that each entry of $B(t)$ is approximately
normal, with  zero mean and variance $1/m$.
Further, distinct entries are approximately pairwise independent.
Therefore, if we let
$\ttau_t^2 = \lim_{n\to\infty}\|x^t-x\|_2^2/n$, we obtain that $B(t)(x^t-x)$
converges to a vector with i.i.d. normal entries with $0$ mean and
variance $n\ttau_t^2/m = \ttau_t^2/\delta$. Notice that this is true
because $A(t)$ is independent of $\{A(s)\}_{1\le s\le t-1}$ and,
in particular, of $(x^t-x)$.

Conditional on $w$, $A(t)^Tw$ is a vector of i.i.d. normal
entries with mean $0$  and variance $(1/m)\|w\|_2^2$
which converges by assumption to $\sigma^2$.
A slightly longer exercise shows that these entries are
approximately independent from the ones of $B(t)(x^t-x_0)$.
Summarizing, each entry of the vector in the argument
of $\eta$ in Eq.~(\ref{eq:ModifiedRecursionNoAlg})
converges to $X_0+\tau_t Z$ with $Z\sim\normal(0,1)$
independent of $X_0$, and
\begin{align}
\tau_t^2 &= \sigma^2+\frac{1}{\delta}\ttau_t^2\, ,\label{eq:FirstHeuristic}\\
\ttau_t^2 & =\lim_{n\to\infty}\frac{1}{n}\|x^t-x\|_2^2\, .\nonumber
\end{align}
On the other hand, by Eq.~(\ref{eq:ModifiedRecursionNoAlg}),
each entry of $x^{t+1}-x$ converges to $\eta(X_0+\tau_t\, Z;\theta_t)-X_0$,
and therefore
\begin{align}
\ttau_{t+1}^2 & =\lim_{n\to\infty}\frac{1}{n}\|x^{t+1}-x\|_2^2
=\E\big\{[\eta(X_0+\tau_t\, Z;\theta_t)-X_0]^2\big\}\, .\label{eq:SecondHeuristic}
\end{align}
Using together Eq.~(\ref{eq:FirstHeuristic}) and (\ref{eq:SecondHeuristic})
we finally obtain the state evolution recursion, Eq.~(\ref{eq:1-dim-SE}).

We conclude that state evolution would hold if the matrix $A$
was drawn independently from the same Gaussian distribution at
each iteration. In the case of interest, $A$ does not change across
iterations, and the above argument falls apart because
$x^t$ and $A$ are dependent. This dependency is non-negligible even
in the large system limit $n\to\infty$. This point can be clarified by
considering the IST algorithm given by Eqs.~(\ref{eq:ist1}),
(\ref{eq:ist2}).
Numerical studies of iterative soft thresholding
\cite{DonohoMalekiTuned,DMM09} show that its behavior is dramatically
different from the one of AMP and in particular
\emph{state evolution does not hold for IST}, even in the large
system limit.

This is not a surprise: the correlations between $A$ and $x^t$
simply cannot be neglected.
On the other hand, adding the Onsager term leads to an asymptotic cancelation
of these correlations. As a consequence, state evolution
holds for the AMP iteration.
%
%
\subsection{The noise sensitivity phase transition}
\label{sec:NSPT}
\begin{figure}
\begin{center}
\includegraphics[width=11cm,angle=0]{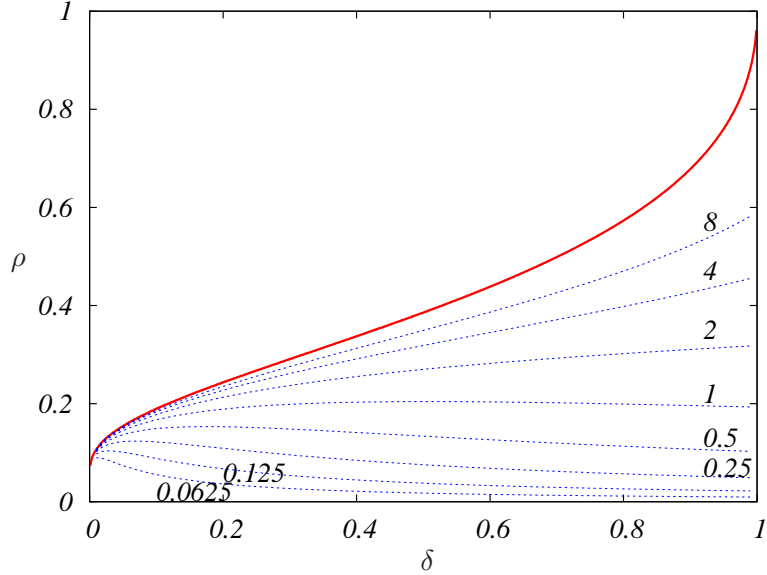}
\put(-145,-5){$\delta$}
\put(-300,110){$\rho$}
\end{center}
\caption{{\small Noise sensitivity phase transition in the plane 
$(\delta,\rho)$ (here $\delta=m/n$ is the undersampling ratio and
$\rho=\|x\|_0/m$ is the number of non-zero coefficients per measurement).
Red line: The phase transition boundary $\rho = \rho_{\rm c}(\delta)$.
Blue lines: Level curves for the LASSO minimax $M^*(\delta,\rho)$.
Notice that $M^*(\delta,\rho)\uparrow\infty$ as $\rho\uparrow
\rho_{\rm c}(\delta)$.}}
\label{fig:NSPT}
\end{figure}
The formalism developed so far allows to extend the minimax
analysis carried out in the scalar case in Section \ref{sec:Scalar}
to the vector estimation problem
\cite{NSPT}. We define the LASSO mean square error per
coordinate when the empirical distribution of the signal 
converges to $p_0$,  as
\begin{eqnarray}
\laMSE(\sigma^2;p_0,\lambda) = \lim_{n\to\infty}
\frac{1}{n}\,\E\big\{\|\hx(\lambda)-x\|_2^2
\big\}\, ,
\end{eqnarray}
where the limit is taken along a converging sequence.
This quantity can be computed using Theorem \ref{thm:Risk}
for any specific distribution $p_0$.

We consider again the sparsity class $\cF_{\ve}$ with $\ve=\rho\delta$.
Hence $\rho = \|x\|_0/m$ measures the number of non-zero 
coordinates per measurement. 
Taking the worst case MSE over this class,
and then the minimum over the regularization parameter 
$\lambda$, we get a result that depends on $\rho$, $\delta$,
as well as on the noise level $\sigma^2$. The dependence on $\sigma^2$
must be linear because the class $\cF_{\rho\delta}$ is scale invariant,
and we obtain therefore
\begin{eqnarray}
\inf_{\lambda} \sup_{p_0\in \cF_{\rho\delta}}\, \laMSE(\sigma^2;p_0,\lambda)
= M^*(\delta,\rho) \, \sigma^2\, ,
\end{eqnarray}
for some function $(\delta,\rho)\mapsto M^*(\delta,\rho)$.
We call this the LASSO minimax risk. It can be interpreted as the sensitivity
(in terms of mean square error) of the LASSO estimator to noise in the 
measurements.

It is clear that the prediction for $\laMSE(\sigma^2;p_0,\lambda)$
provided by Theorem \ref{thm:Risk} can be used to characterize 
the LASSO minimax risk. What is remarkable is that the resulting
formula is so simple. 
\begin{theorem}[\cite{NSPT}]\label{thm:NSPT}
Assume the hypotheses of Theorem \ref{thm:Risk}, and 
recall that $M^{\#}(\ve)$ denotes the soft thresholding minimax
risk over the class $\cF_{\ve}$ cf. Eqs.~(\ref{eq:M1def}),
(\ref{eq:Mhashdef}). Further let $\rho_{\rm c}(\delta)$
be the unique solution of $\rho = M^{\#}(\rho\delta)$.

Then for any $\rho<\rho_{\rm c}(\delta)$ the 
$\LASSO$ minimax risk is bounded
and given by
\begin{eqnarray}
M^*(\delta,\rho) = \frac{M^{\#}(\rho\delta)}{1-M^{\#}(\rho\delta)/\delta}\, .
\end{eqnarray}
Viceversa, for any $\rho\ge\rho_{\rm c}(\delta)$, we have
 $M^*(\delta,\rho) =\infty$.
\end{theorem}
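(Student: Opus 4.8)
The plan is to derive Theorem \ref{thm:NSPT} from Theorem \ref{thm:Risk} by carrying out the saddle-point computation $\inf_\lambda\sup_{p_0\in\cF_{\rho\delta}}\laMSE(\sigma^2;p_0,\lambda)$ explicitly, reducing it to the scalar soft-thresholding minimax problem already solved in Section \ref{sec:Scalar}. First I would use Theorem \ref{thm:Risk} with $\psi(a,b)=(a-b)^2$ to write $\laMSE(\sigma^2;p_0,\lambda) = \E\{[\eta(X_0+\tau_*Z;\theta_*)-X_0]^2\}$, where $\tau_*=\tau_*(\alpha(\lambda))$ and $\theta_*=\alpha(\lambda)\tau_*$. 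The key structural observation is that this has exactly the form of the scalar soft-thresholding MSE at noise level $\tau_*$, so by the definition of $\stMSE$ and $M(\ve,\alpha)$ in Section \ref{sec:Scalar}, $\E\{[\eta(X_0+\tau_*Z;\alpha\tau_*)-X_0]^2\} = \stMSE(\tau_*^2;p_0,\alpha)$, and taking the worst case over $p_0\in\cF_{\rho\delta}$ gives $M(\rho\delta,\alpha)\,\tau_*^2$ by Eq.~(\ref{eq:M1def}). Thus $\sup_{p_0}\laMSE = M(\rho\delta,\alpha)\tau_*^2$ once we reparametrize by $\alpha=\alpha(\lambda)$ instead of $\lambda$ — legitimate because $\lambda\mapsto\alpha(\lambda)$ is a bijection onto $(\alpha_{\rm min},\infty)$ by the last clause of Theorem \ref{thm:Risk}.

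The second step is to make $\tau_*^2$ explicit in terms of $M(\rho\delta,\alpha)$. At the state-evolution fixed point, $\tau_*^2 = \seF(\tau_*^2,\alpha\tau_*) = \sigma^2 + \frac{1}{\delta}\E\{[\eta(X_0+\tau_*Z;\alpha\tau_*)-X_0]^2\}$. Again the expectation, maximized over $\cF_{\rho\delta}$, equals $M(\rho\delta,\alpha)\tau_*^2$ — but one must be slightly careful: I would first fix the worst-case prior $p^{\#}$ (the $\pm\infty$ prior from Section \ref{sec:Scalar}, or an appropriate limiting/approximating sequence of priors in $\cF_{\rho\delta}$ at which both the risk and the fixed-point equation are simultaneously controlled), so that along this prior $\tau_*^2 = \sigma^2 + \frac{1}{\delta}M(\rho\delta,\alpha)\tau_*^2$, giving $\tau_*^2 = \sigma^2/(1 - M(\rho\delta,\alpha)/\delta)$ whenever $M(\rho\delta,\alpha)<\delta$. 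Plugging back, $\sup_{p_0}\laMSE = M(\rho\delta,\alpha)\sigma^2/(1-M(\rho\delta,\alpha)/\delta)$. Now minimize over $\alpha>\alpha_{\rm min}$: since $x\mapsto x/(1-x/\delta)$ is increasing on $[0,\delta)$, the minimizer is $\alpha=\alpha^{\#}(\rho\delta)$ from Eq.~(\ref{eq:Mhashdef}), yielding $M^*(\delta,\rho) = M^{\#}(\rho\delta)/(1-M^{\#}(\rho\delta)/\delta)$. One checks $\alpha^{\#}(\rho\delta)>\alpha_{\rm min}(\delta)$ is consistent with Eq.~(\ref{eq:AlphaMin}) precisely in the regime where the formula is finite.

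The third step is the dichotomy at $\rho_{\rm c}(\delta)$, defined by $\rho = M^{\#}(\rho\delta)$, i.e. $M^{\#}(\rho_{\rm c}\delta) = \rho_{\rm c}\delta\cdot(1/\delta)\cdot\delta$ — more directly, $M^{\#}(\rho\delta)=\delta$ is equivalent to $\rho=\rho_{\rm c}(\delta)$ after rewriting via the fixed-point relation, since along $p^{\#}$ one has $M^{\#}(\ve)=M(\ve,\alpha^{\#})$ and the equation $M^{\#}(\rho\delta)/\delta = 1$ is exactly the vanishing of the denominator. For $\rho<\rho_{\rm c}(\delta)$ the denominator is positive, the fixed point $\tau_*^2$ is finite, and Proposition \ref{propo:UniqFP} guarantees it is the relevant one; for $\rho\ge\rho_{\rm c}(\delta)$ the map $\tau^2\mapsto\seF(\tau^2,\alpha\tau)$ along $p^{\#}$ has slope at infinity $\ge 1$ for every $\alpha$, so no finite fixed point exists (or $\tau_*\to\infty$), forcing $M^*(\delta,\rho)=\infty$. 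The monotonicity/concavity of $\tau^2\mapsto\seF(\tau^2,\alpha\tau)$ noted before Proposition \ref{propo:UniqFP} is what makes the "slope $\ge 1 \Rightarrow$ no finite fixed point" argument clean.

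The main obstacle I anticipate is the interchange of the supremum over $p_0$ with the fixed-point equation: $\tau_*$ itself depends on $p_0$, so $\sup_{p_0}\laMSE(\sigma^2;p_0,\lambda)$ is not literally $\sup_{p_0}$ of a function evaluated at a fixed $\tau_*$. The clean way around this is to show that the worst-case prior is $p_0$-independent in the appropriate sense — namely that the three-point prior $p^{\#}$ (or its finite-support truncations in $\cF_{\rho\delta}$) simultaneously maximizes $\stMSE(\tau^2;p_0,\alpha)/\tau^2$ for \emph{every} $\tau$, which is exactly the scale-invariance content of Section \ref{sec:Scalar} (the ratio $\stMSE(\tau^2;p_0,\alpha)/\tau^2$ depends on $p_0$ only through its rescaled shape, and is maximized at $p^{\#}$ uniformly in $\tau$). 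Granting this, the fixed-point equation and the risk can be evaluated at the common worst-case prior, and the monotonicity of $x\mapsto x/(1-x/\delta)$ lets one conclude that the saddle point is attained there; the rest is the bookkeeping sketched above.
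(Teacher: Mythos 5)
Your overall strategy is the right one: the survey itself gives no proof (the theorem is quoted from \cite{NSPT}), and the route in that source is exactly what you sketch --- express $\laMSE$ through the state-evolution fixed point via Theorem \ref{thm:Risk}, recognize the scalar soft-thresholding minimax structure of Section \ref{sec:Scalar}, evaluate along the least-favorable three-point priors (atoms at $\pm\mu$, $\mu\to\infty$, approximating $p^{\#}$ inside $\cF_{\rho\delta}$), solve the resulting linear fixed-point equation $\tau_*^2=\sigma^2+M(\rho\delta,\alpha)\tau_*^2/\delta$, and use monotonicity of $x\mapsto x/(1-x/\delta)$ to reduce the optimization over the threshold to $M^{\#}(\rho\delta)$. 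The genuine error is in your third step. Your computation locates the blow-up at $M^{\#}(\rho\delta)=\delta$, and you then assert this is ``equivalent, after rewriting via the fixed-point relation'' to the condition $\rho=M^{\#}(\rho\delta)$ appearing in the statement; it is not, and the intermediate manipulation $M^{\#}(\rho_{\rm c}\delta)=\rho_{\rm c}\delta\cdot(1/\delta)\cdot\delta$ is not valid algebra. The two equations define different curves: at the parametric point $\alpha=1$ of Eqs.~(\ref{eq:Parametric1})--(\ref{eq:Parametric2}) one gets $(\delta,\rho)\approx(0.415,\,0.344)$ and indeed $M^{\#}(\rho\delta)\approx 0.415=\delta$, which is not $\rho$. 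The curve your derivation produces, $\{\delta=M^{\#}(\rho\delta)\}$, is the correct one (it matches the parametric form and the Donoho--Tanner curve, as the section asserts); the defining equation printed in the statement is evidently a slip for $\delta=M^{\#}(\rho\delta)$, and for $\rho$ strictly between the two curves the displayed formula would have a negative denominator, so the statement as literally printed cannot be proved. You should prove the $\{\delta=M^{\#}(\rho\delta)\}$ version and say so, rather than manufacture an equivalence.

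The second issue is the inf--sup exchange, which you only half address. You correctly note that $\tau_*$ depends on $p_0$ and invoke the uniform-in-$\tau$ least-favorability of $p^{\#}$, but the calibration $\lambda\mapsto\alpha(\lambda)$ of Eq.~(\ref{eq:calibration}) also depends on $p_0$, so replacing $\inf_\lambda$ by $\inf_\alpha$ before taking $\sup_{p_0}$ is not automatic. The clean repair is the standard two-sided minimax argument: (a) for the single value $\lambda^*$ calibrated to $\alpha^{\#}(\rho\delta)$ show $\laMSE(\sigma^2;p_0,\lambda^*)\le M^{\#}(\rho\delta)\sigma^2/(1-M^{\#}(\rho\delta)/\delta)$ for every $p_0\in\cF_{\rho\delta}$, using that the fixed point of the concave nondecreasing map $\tau^2\mapsto\seF(\tau^2,\alpha\tau)$ is dominated by the fixed point of its linear envelope $\sigma^2+M(\rho\delta,\alpha)\tau^2/\delta$; (b) for every $\lambda>0$ exhibit the three-point prior sequence forcing the risk up to (or beyond, when $M^{\#}(\rho\delta)\ge\delta$) that value. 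Point (b) also fixes a small inaccuracy in your divergence argument: for a fixed proper prior the large-$\tau$ slope of $\seF(\tau^2,\alpha\tau)$ is governed by $M(0,\alpha)/\delta$, not $M(\rho\delta,\alpha)/\delta$, so a finite fixed point can exist even when $M(\rho\delta,\alpha)\ge\delta$; divergence of $M^*$ must be argued along the prior sequence ($\sup_{p_0}\tau_*(p_0)^2=\infty$), not prior by prior.
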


Figure \ref{fig:NSPT} shows the location of the noise sensitivity 
boundary $\rho_{\rm c}(\delta)$ as well as the level lines of
$M^*(\delta,\rho)$ for $\rho<\rho_{\rm c}(\delta)$. Above 
$\rho_{\rm c}(\delta)$ the LASSO MSE is not uniformly bounded 
in terms of the measurement noise $\sigma^2$. Other estimators
(for instance one step of soft thresholding) can offer
better stability guarantees in this region. 

One remarkable fact is that the phase boundary $\rho=\rho_{\rm c}(\delta)$
coincides with the phase transition for $\ell_0-\ell_1$
equivalence derived earlier by Donoho \cite{DonohoCentrally} 
on the basis of random polytope geometry
results by  Affentranger-Schneider \cite{Affentranger}. The same 
phase transition was further studied in a series of papers by 
Donoho, Tanner and coworkers \cite{DoTa05,DoTa08}, 
in connection with the noiseless estimation
problem. For $\rho<\rho_{\rm c}$ estimating $x$ by $\ell_1$-norm minimization
returns the correct signal with high probability (over the 
choice of the random matrix $A$). For $\rho>\rho_{\rm c}(\delta)$,
$\ell_1$-minimization fails. 

Here this phase transition is derived from a completely 
different perspective as a special case of a stronger result.
We indeed use a new method --the state
evolution analysis of the AMP algorithm-- which offers
quantitative information about the noisy case as well,
namely it allows 
to compute the value of $M^*(\delta,\rho)$ for $\rho<\rho_{\rm c}(\delta)$.
Within the present approach, the line $\rho_{\rm c}(\delta)$
admits a very simple expresson. In parametric form, it is given
by
\begin{eqnarray}
\delta & = &\frac{2\phi(\alpha)}{\alpha+2(\phi(\alpha)-\alpha\Phi(-\alpha))}
\, ,\label{eq:Parametric1}\\
\rho & = &1-\frac{\alpha\Phi(-\alpha)}{\phi(\alpha)}\, ,\label{eq:Parametric2}
\end{eqnarray}
where $\phi$ and $\Phi$ are the Gaussian density and Gaussian distribution 
function, and $\alpha\in [0,\infty)$ is the parameter. Indeed 
$\alpha$ has a simple and practically important interpretation as well. 
Recall that the AMP algorithm uses a sequence of thresholds
$\theta_t = \alpha\htau_t$, cf. Eqs.~(\ref{eq:ThresholdChoice1})
and (\ref{eq:ThresholdChoice2}). How should the parameter $\alpha$ 
be fixed? A very simple prescription is obtained in the noiseless case.
In order to achieve exact reconstruction for all $\rho<\rho_{\rm c}(\delta)$
for a given an undersampling ratio $\delta$,  $\alpha$
should be such that $(\delta,\rho_{\rm c}(\delta)) = (\delta(\alpha),
\rho(\alpha))$ with functions $\alpha\mapsto\delta(\alpha)$,
$\alpha\mapsto\rho(\alpha)$ defined as in Eq.~(\ref{eq:Parametric1}),
(\ref{eq:Parametric2}). In other words, this parametric expression 
yields each point of the phase boundary as a function of the threshold
parameter used to achieve it via AMP.
%
%
\subsection{On universality}
\label{sec:Universality}

\begin{figure}
\centering
  \includegraphics[width=4.in]{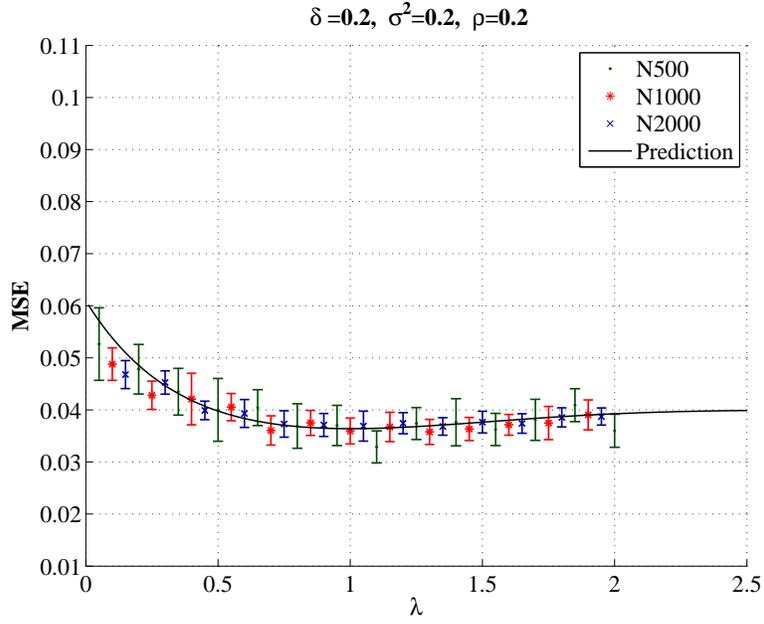}
 \caption{Mean square error for as a function of the regularization parameter 
$\lambda$ for a partial Fourier matrix (see text).
The noise variance is $\sigma^2=0.2$, the undersampling factor $\delta=0.2$
and the sparsity ratio $\rho=0.2$.  
Data points are obtained by averaging over $20$ realizations, 
and error bars are $95\%$ confidence intervals. The continuous line 
is the prediction of Theorem \ref{thm:Risk}.}
\label{fig:FourierMatrix}
\end{figure}

\begin{figure}
\centering
\includegraphics[width=4.in]{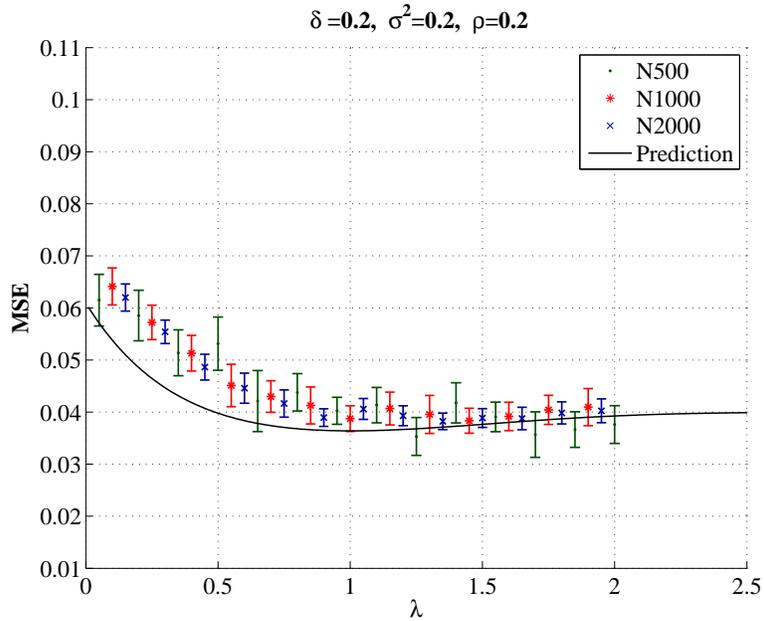}
 \caption{As in Fig.~\ref{fig:FourierMatrix}, but
for a measurement matrix $A$ which 
models the analog-to-digital converter of \cite{TroppADC}.}
\label{fig:ADCMatrix}
\end{figure}

The main results presented
in this section, namely Theorems \ref{prop:state-evolution},
\ref{thm:Risk} and \ref{thm:NSPT}, are proved for 
measurement matrices with i.i.d. Gaussian entries. 
As stressed above, it is expected that the same results hold for 
a much broader class of matrices. In particular, they should extend
to matrices with i.i.d. or weakly correlated entries.
For the sake of clarity, it is useful to put forward a formal 
conjecture, that generalizes Theorem 
\ref{thm:Risk}.
\begin{conj}\label{conj:Risk}
Let $\{x(n), w(n), A(n)\}_{n\in\naturals}$ be a converging sequence of
instances with the entries of $A(n)$ i.i.d. with mean 
 $\E\{A_{ij}\}=0$, variance
$\E\{A_{ij}^2\}=1/m$ and such that 
$\E\{A_{ij}^6\}\le C/m$ for some fixed constant $C$. 
Denote by $\hx(\lambda)$ the \LASSO\, estimator for
instance $(x(n), w(n), A(n))$, with $\sigma^2,\lambda> 0$, and
let $\psi:\reals\times\reals\to \reals$ be a pseudo-Lipschitz function.
Then, almost surely
\begin{eqnarray}\label{eq:UniversalFormula}
\lim_{n\to\infty}\frac{1}{n}\sum_{i=1}^n\psi
\big(\hx_{i},x_{i}\big) = \E\Big\{\psi\big(\eta(X_0+\tau_* Z;\theta_*),X_0\big)\Big\}\, ,
\end{eqnarray}
where $Z\sim\normal(0,1)$ is independent of $X_0\sim p_{0}$,
$\tau_*=\tau_*(\alpha(\lambda))$ and $\theta_*=\alpha(\lambda)
\tau_*(\alpha(\lambda))$ are given by the same formulae
holding for Gaussian matrices, cf. Section \ref{sec:Results}.
\end{conj}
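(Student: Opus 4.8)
Since Theorem~\ref{thm:Risk} already establishes Conjecture~\ref{conj:Risk} for Gaussian $A(n)$, the plan is to reduce the general case to the Gaussian one by a two-stage argument: first prove that the \emph{finite-iteration} AMP trajectory is universal over the broader matrix ensemble, and then use the exponential convergence of AMP to the LASSO optimum to pass to the $t\to\infty$ limit. The calibration map $\alpha(\lambda)$ of Section~\ref{sec:Results} is defined purely through the scalar state-evolution fixed point and therefore does not depend on the matrix ensemble, so its well-definedness transfers verbatim from Theorem~\ref{thm:Risk}.

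\textbf{Step 1: finite-$t$ universality of AMP.} Fix the iteration number $t$ and consider the empirical averages $\frac1n\sum_i\psi_2(x^{t+1}_i,x_i)$ and $\frac1m\sum_a\psi_1(r^t_a)$ of Theorem~\ref{prop:state-evolution}. For Gaussian $A$ these converge to the state-evolution predictions; I would show the limit is unchanged when the entries of $A$ are only required to have $\E\{A_{ij}\}=0$, $\E\{A_{ij}^2\}=1/m$ and $\E\{A_{ij}^6\}\le C/m$, by a Lindeberg-type interpolation: order the $mn$ entries of $A$, replace them one at a time by independent $\normal(0,1/m)$ variables, and bound the change in the test functional at each swap. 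Because $x^t$ and $r^t$ are, for fixed $t$, explicit piecewise-smooth functions of $A$ built from a bounded number of matrix–vector products and Lipschitz componentwise nonlinearities, the functional has controlled first, second and third derivatives in a single entry $A_{ai}$; the third-order Lindeberg remainder is then $O(mn\cdot m^{-3/2})$ times a derivative bound, which the sixth-moment hypothesis renders $o(1)$ after truncation of atypically large entries. The non-differentiability of $\eta(\,\cdot\,;\theta)$ at $\pm\theta$ is handled as in \cite{BM-MPCS-2010}, by smoothing the threshold and showing the smoothing error vanishes, almost-everywhere differentiability being enough. This step also realizes the row/column-exchangeability heuristic quoted above the statement: universality with respect to $x(n)$ and $w(n)$ follows once the matrix ensemble is exchangeable.

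\textbf{Step 2: from AMP to LASSO.} With finite-$t$ universality in hand, fix $\theta_t=\alpha\htau_t$ with $\alpha=\alpha(\lambda)$. By Proposition~\ref{propo:Easy} a fixed point of AMP is a LASSO optimum, and—as recalled after Proposition~\ref{propo:UniqFP}—away from the phase-transition line AMP converges exponentially fast, so $\frac1n\|x^t-\hx(\lambda)\|_2^2\le C\,e^{-ct}$ uniformly along the converging sequence. Combining the pseudo-Lipschitz bound on $\psi$ with this estimate gives $\frac1n\sum_i\psi(\hx_i,x_i)=\frac1n\sum_i\psi(x^{t+1}_i,x_i)+o_t(1)$; letting first $n\to\infty$ (using Step~1 together with $\htau_t\to\tau_*$) and then $t\to\infty$ yields the right-hand side of \eqref{eq:UniversalFormula}.

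\textbf{Main obstacle.} The hard part is Step~1: showing that the derivative bounds needed for the Lindeberg swaps are uniform enough that the cumulative $O(n^2 m^{-3/2})$ error is genuinely $o(1)$, and that truncating the large entries of $A$ does not perturb the AMP trajectory. Both require quantitative stability of the AMP map under perturbations of $A$, which is delicate because $x^t$ depends on all entries of $A$ simultaneously and the Onsager term must continue to cancel the leading correlations after truncation. A further gap is that the exponential-convergence statement invoked in Step~2 is itself currently proved only via Gaussian state evolution; one would either extend it to the general ensemble (e.g.\ by a local-contraction argument for the AMP map whose ingredients are themselves universal) or, alternatively, run the whole argument at the level of the LASSO stationarity (KKT) conditions, anchored to the cost-universality result of \cite{KM-2010} and a restricted-strong-convexity estimate for $\cost_{A,y}$ near $\hx$, trading the AMP stability analysis for a convex-geometry one.
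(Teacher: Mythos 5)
There is a fundamental mismatch here: the statement you were asked to prove is stated in the paper as a \emph{conjecture}, not a theorem. The paper offers no proof at all --- it explicitly calls proving Conjecture~\ref{conj:Risk} ``an outstanding mathematical challenge'' and supports it only with the partial evidence of the cost-universality result of \cite{KM-2010} and with numerical simulations. So your proposal cannot be measured against a paper proof; it can only be assessed as a self-contained argument, and as such it is an outline of a research program rather than a proof. You are candid about this, but it is worth being precise about where the outline falls short of an argument.

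In Step~1, the entire mathematical content of the claimed universality is hidden in the phrase ``the functional has controlled first, second and third derivatives in a single entry $A_{ai}$.'' For fixed $t$ the AMP iterate $x^t$ is a $t$-fold composition in which each entry of $A$ enters every coordinate through the matrix--vector products and through the Onsager coefficient $\sb_t=\|x^t\|_0/m$, which is itself a discontinuous functional of $A$; bounding the third derivative of $\frac1n\sum_i\psi_2(x^{t+1}_i,x_i)$ in a single entry uniformly over the swap sequence, and showing the cumulative error $O(mn\cdot m^{-3/2})$ does not blow up with $t$, is precisely the open technical problem, not a routine verification. In Step~2 there is a more serious logical gap: Proposition~\ref{propo:Easy} only says that \emph{fixed points} of AMP are LASSO optima, and the ``exponential convergence'' alluded to after Proposition~\ref{propo:UniqFP} is a statement about the scalar state-evolution sequence $\tau_t$, not a bound of the form $\frac1n\|x^t-\hx(\lambda)\|_2^2\le Ce^{-ct}$. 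Even in the Gaussian case, passing from finite-$t$ AMP to the LASSO optimum (the proof of Theorem~\ref{thm:Risk} in the cited literature) requires additional structural facts about $A$ --- control of singular values of submatrices, bounds on the sparsity of the optimizer --- which are themselves Gaussian results and would need their own universal versions; your alternative route through the KKT conditions plus \cite{KM-2010} faces the same issue, since cost universality alone does not localize the minimizer. So both steps of the plan coincide with the known obstructions rather than resolving them, and the statement remains, as the paper says, a conjecture.
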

The conditions formulated in this conjecture are motivated
by the universality result in \cite{KM-2010}, that provides
partial evidence towards this claim. 
Simulations (see for instance Fig.~\ref{fig:PM1Matrices} and
\cite{OurLASSO_Exp}) strongly support this claim.

While proving Conjecture \ref{conj:Risk} is an outstanding mathematical 
challenge, many measurement models of interest do not fit the 
i.i.d. model. Does the theory developed in these section say anything 
about such measurements? Systematic numerical simulations 
\cite{NSPT,OurLASSO_Exp} reveal that, even for highly structured matrices,
the same formula \ref{eq:UniversalFormula} is either 
surprisingly close to the actual empirical performances.

As an example, Fig.~\ref{fig:FourierMatrix} presents the empirical
mean square error for a partial Fourier measurement matrix $A$,
as a function of the regularization parameter $\lambda$.
The matrix is obtained by subsampling
the rows of the $N\times N$ Fourier matrix $F$, with entries
$F_{ij}=e^{2\pi ij\sqrt{-1}}$. More precisely we sample $n/2$ rows of
$F$ with replacement, construct two rows of $A$ by taking real and 
imaginary part, and normalize the columns of the resulting matrix.

Figure \ref{fig:ADCMatrix} presents analogous results for 
the random demodulator matrix  which is at the core
of the analog-to-digital converter (ADC) of \cite{TroppADC}.
Schematically, this is obtained by normalizing the columns
of $\widetilde{A} = HDF$, with 
$F$ a Fourier matrix, $D$ a random diagonal matrix with 
$D_{ii}\in\{+ 1,-1\}$ uniformly at random, and
$H$ an `accumulator':
\begin{eqnarray*}
H = \left[\begin{array}{cccc}
1111&  & & \\
&1111 & & \\
&& \cdots &\\
&&& 1111\\
\end{array}
\right]\, .
\end{eqnarray*}

Both these examples show good agreement between the asymptotic
prediction provided by Theorem \ref{thm:Risk}
and the empirical mean square error. Such an agreement 
is surprising given that in both cases the measurement matrix
is generated with a small amount of randomness, compared to
a Gaussian matrix. For instance, the ADC matrix only requires $n$
random bits. Although statistically significant discrepancies can be
observed (cf. for instance Fig.~\ref{fig:ADCMatrix}),
the present approach provides \emph{quantitative} predictions
of great interest for design purposes. For a more systematic
investigation, we refer to \cite{OurLASSO_Exp}.
%
%
\subsection{Comparison with other analysis approaches}

The analysis presented here is significantly different from
more standard approaches. We derived an \emph{exact}
characterization for the high-dimensional limit 
of the LASSO estimation problem under the assumption of 
converging sequences of random sensing matrices.

Alternative approaches assume an
appropriate `isometry', or `incoherence' condition to hold for $A$.
Under this condition upper bounds are proved for the 
mean square error. For instance  Candes, Romberg and Tao
\cite{CandesStable} prove that the mean square error is bounded
by $C\sigma^2$ for some constant $C$. 
 Work by Candes and Tao \cite{Dantzig} on the analogous
\emph{Dantzig selector}, upper bounds
the mean square error by $C\sigma^2 (k/n)\log n$,
with $k$ the number of non-zero entries of the signal $x$.

These type of results are very robust but present two limitations:
$(i)$ They do not allow to distinguish reconstruction methods
that differ by a constant factor (e.g. two different values of $\lambda$);
$(ii)$ The restricted isometry condition (or analogous ones)
is quite restrictive. For instance, it holds for random matrices 
only under very strong sparsity assumptions.
These restrictions are intrinsic to the worst-case
point of view developed in \cite{CandesStable,Dantzig}.

Guarantees have been proved for correct support recovery in
\cite{Zhao}, under an  incoherence assumption on $A$.
While support recovery is an interesting conceptualization for
some applications (e.g. model selection), the metric considered
in the present paper (mean square error) provides complementary
information and is quite standard
in many different fields.

Close to the spirit of the treatment presented here,
\cite{Goyal}
derived expressions for the mean square error under
the same model considered here. Similar results
were presented recently in \cite{KabashimaTanaka,BaronGuoShamai}.
These papers argue that a sharp asymptotic characterization
of the LASSO risk can provide valuable guidance in practical
applications. 
Unfortunately, these results were non-rigorous and were obtained
through the famously powerful `replica method' from statistical physics
\cite{MezardMontanari}.
The approach discussed here offers two advantages
over these recent
developments: $(i)$ It is completely \emph{rigorous}, thus putting
on a firmer basis this line of research;
$(ii)$ It is \emph{algorithmic} in that
the LASSO mean square error is shown to be equivalent to the one
achieved by a low-complexity message passing algorithm.

Finally, recently random models for the measurement matrix have been studied
in \cite{CandesPlan1,CandesPlan2}. The approach developed in these papers allows
to treat matrices that do not necessarily satisfy the restricted
isometry property or similar conditions, and applies to a general 
class of random matrices $A$ with i.i.d. rows. On the other
hand, the resulting bounds are not asymptotically sharp.
%
%
\section{Generalizations}
\label{sec:Generalizations}

The single most important advantage of the point of view based on graphical
models is that it offers a unified disciplined approach to
exploit structural information on the signal $x$. 
The use of such information can dramatically reduce the number of 
required compressed sensing measurements. 

`Model-based' compressed sensing \cite{ModelCS}
provides a general framework for specifying such information.
However, it focuses on `hard' combinatorial information
about the signal. Graphical models are instead a rich 
language for specifying `soft' dependencies or constraints,
and more complex models. These might include combinatorial constraints,
but vastly generalize them. 
Also, graphical models come with an algorithmic arsenal that
can be applied to leverage the potential of such more complex 
signal models.

Exploring such potential generalizations is --to a large extent-- a future 
research program which is still in its infancy. Here we will only discuss 
a few examples. 
%
%
\subsection{Structured priors\dots}
Block-sparsity is a simple example of combinatorial signal structure.
We decompose the signal as
$x = (x_{B(1)},$ $x_{B(2)},$ $\dots,$ $x_{B(\ell)})$ 
where $x_{B(i)}\in \reals^{n/\ell}$ is a block
for $\ell\in\{1,\dots,\ell\}$.
Only a  fraction $\ve\in (0,1)$ of the blocks is non-vanishing.
This type of model naturally arises in many applications:
for instance the case $\ell=n/2$ (blocks of size $2$)
can model signals with complex-valued entries. Larger 
blocks can correspond to shared sparsity patterns among many vectors,
or to clustered sparsity.

\begin{figure}
\begin{center}
\includegraphics[width=4cm,angle=90]{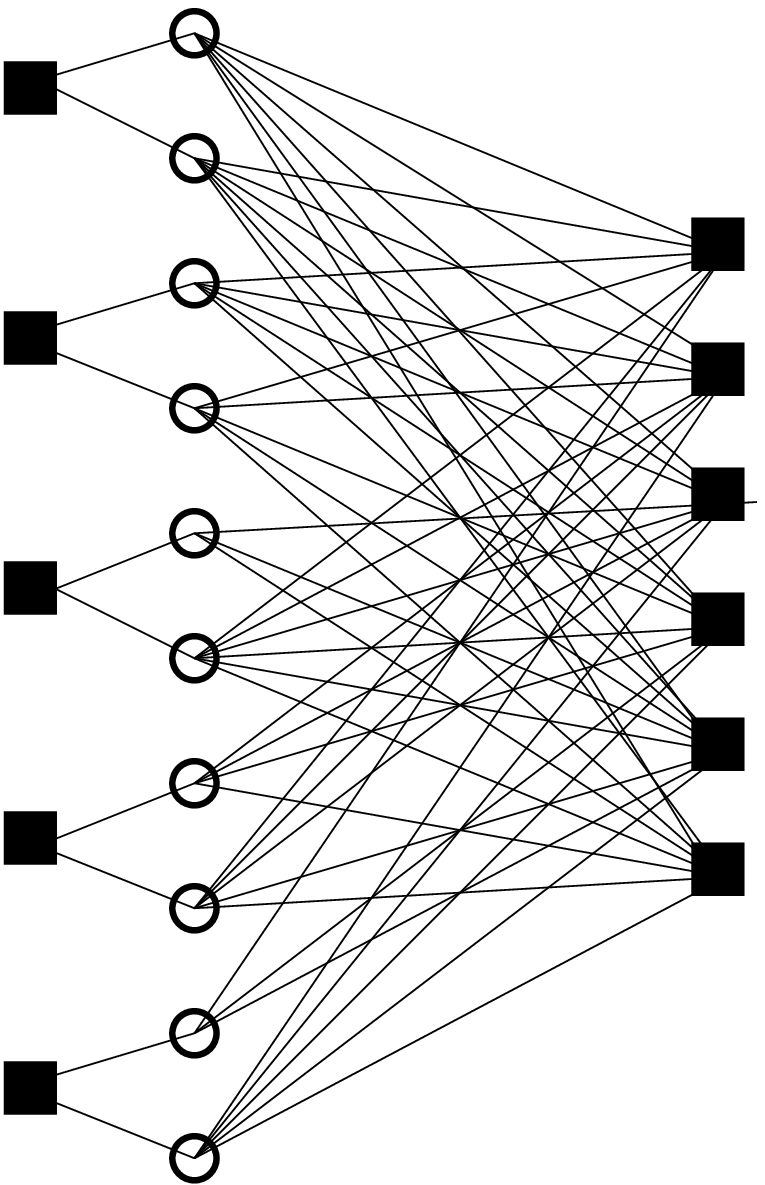}
\phantom{AAAAA}
\includegraphics[width=4cm,angle=90]{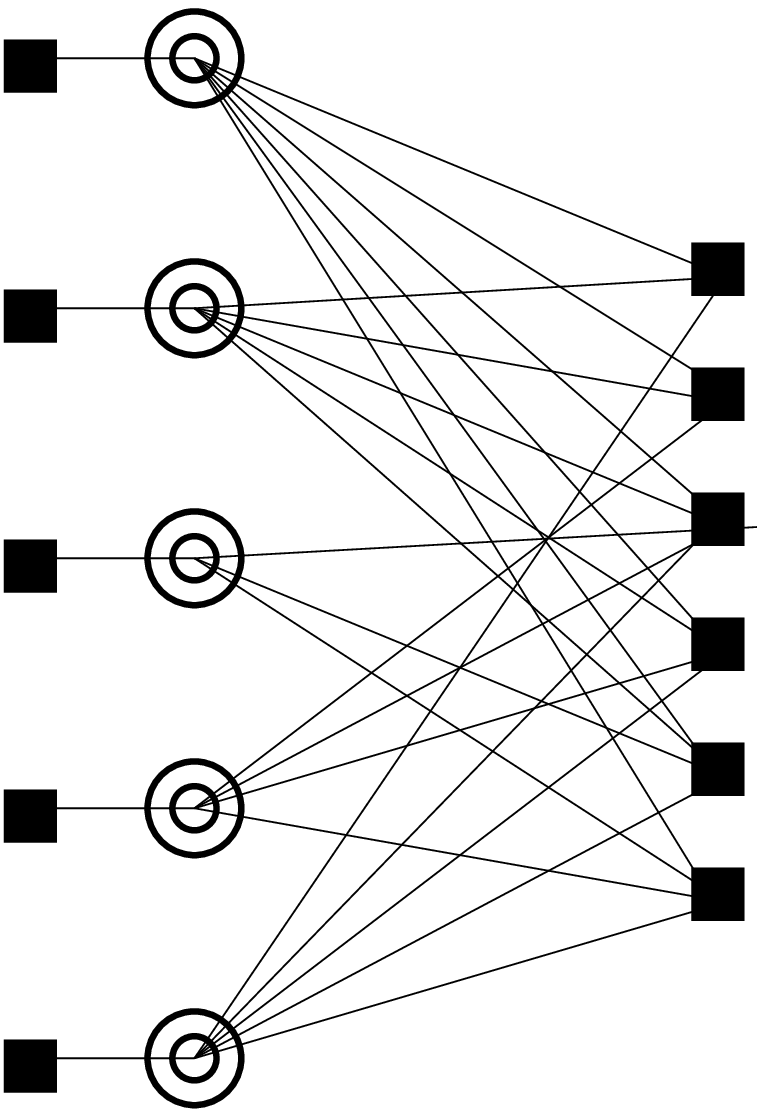}
\put(-403,38){$x_1$}
\put(-215,38){$x_n$}
\put(-184,40){$x_{B(1)}$}
\put(-8,40){$x_{B(\ell)}$}
\put(-135,117){$1$}
\put(-38,117){$m$}
\put(-358,117){$1$}
\put(-262,117){$m$}
\end{center}
\caption{{\small Two possible graphical representation 
of the block-sparse compressed sensing model
(and corresponding cost function 
(\ref{eq:GeneralModelFinal})). 
Upper squares correspond to measurements $y_a$, $a\in [m]$,
and lower squares to the block sparsity constraint
(in this case blocks have size $2$). 
On the left, circles correspond to variables $x_i\in\reals$,
$i\in [n]$. On they right, double circles correspond to blocks $x_{B(i)}
\in \reals^{n/\ell}$, $i\in [\ell]$.
}}
\label{fig:FactorGraphBlock}
\end{figure}
It is customary in this setting to replace the LASSO 
cost function with the following
\begin{align}
\cost^{{\rm Block}}_{A,y}(z) & \equiv \frac{1}{2}\|y-Az\|_2^2 +  \lambda
\sum_{i=1}^{\ell}\|z_{B(i)}\|_2\, .\label{eq:CostBlock}
\end{align}
The block-$\ell_2$ regularization promotes block sparsity. 
Of course, the new regularization can be interpreted in terms of a 
new assumed prior that factorizes over blocks. 

Figure \ref{fig:FactorGraphBlock} reproduces two possible graphical 
structures that encode the block-sparsity constraint. 
In the first case, this is modeled explicitly as a 
constraint over blocks of variable nodes, 
each block comprising $n/\ell$ variables. In the 
second case, blocks correspond explicitly to variables 
taking values in $\reals^{n/\ell}$.
Each of these graphs dictates a somewhat different 
message passing algorithm.

An approximate 
message passing algorithm suitable for this case is developed in
\cite{DonohoBlock}. Its analysis allows to generalize
$\ell_0-\ell_1$ phase transition curves reviewed
in Section \ref{sec:NSPT}
to the block sparse 
case. This quantifies precisely the benefit of minimizing
(\ref{eq:CostBlock}) over simple $\ell_1$ penalization.

\vspace{0.2cm}

As mentioned above, for a large class of signals sparsity is not uniform:
some subsets of entries are sparser than others.
Tanaka and Raymond \cite{TanakaRaymond-2010}, and 
Som, Potter and Schniter and \cite{Schniter-NonUniform-2010} studied
the case of signals with multiple level of sparsity. The simplest
example consists of a signal $x = (x_{B(1)},x_{B(2)})$, where 
$x_{B(1)}\in\reals^{n_1}$,
$x_{B(2)}\in\reals^{n_2}$, $n_1+n_2=n$. Block $i\in\{1,2\}$ has 
a fraction $\ve_i$ of non-zero entries, with $\ve_1\neq \ve_2$.
In the most complex case, one can consider a general factorized 
prior
\begin{eqnarray*}
p(\de x) = \prod_{i=1}^{n}p_{i}(\de x_i)\, ,
\end{eqnarray*}
where each $i\in[n]$ has a different sparsity parameter
$\ve_i\in (0,1)$, and $p_i\in\cF_{\ve_i}$.
In this case it is natural to use a weighted--$\ell_1$ 
regularization, i.e. to minimize
\begin{align}
\cost^{{\rm weight}}_{A,y}(z) & \equiv \frac{1}{2}\|y-Az\|_2^2 +  \lambda
\sum_{i=1}^{n}w_i\, |z_{i}|\, ,
\end{align}
for a suitable choice of the weights $w_1,$ $\dots,$ $w_n\ge 0$.
The paper \cite{TanakaRaymond-2010} studies the case $\lambda\to 0$
(equivalent to minimizing $\sum_iw_i|z_i|$ subject to $y = Az$),
using non-rigorous statistical mechanics techniques that are  equivalent to
the state evolution approach presented here.
Within a high-dimensional limit, it determines 
optimal tuning of the parameters $w_i$, for given sparsities $\ve_i$.
The paper \cite{Schniter-NonUniform-2010}
follows instead the state evolution approach explained in the present chapter.
The authors develop a suitable AMP iteration and compute
the optimal thresholds to be used by the algorithm.
These are in correspondence with the optimal weights $w_i$
mentioned above, and can be also interpreted within the
minimax framework developed in the previous pages. 

\vspace{0.2cm}

\begin{figure}
\begin{center}
\includegraphics[width=6cm,angle=90]{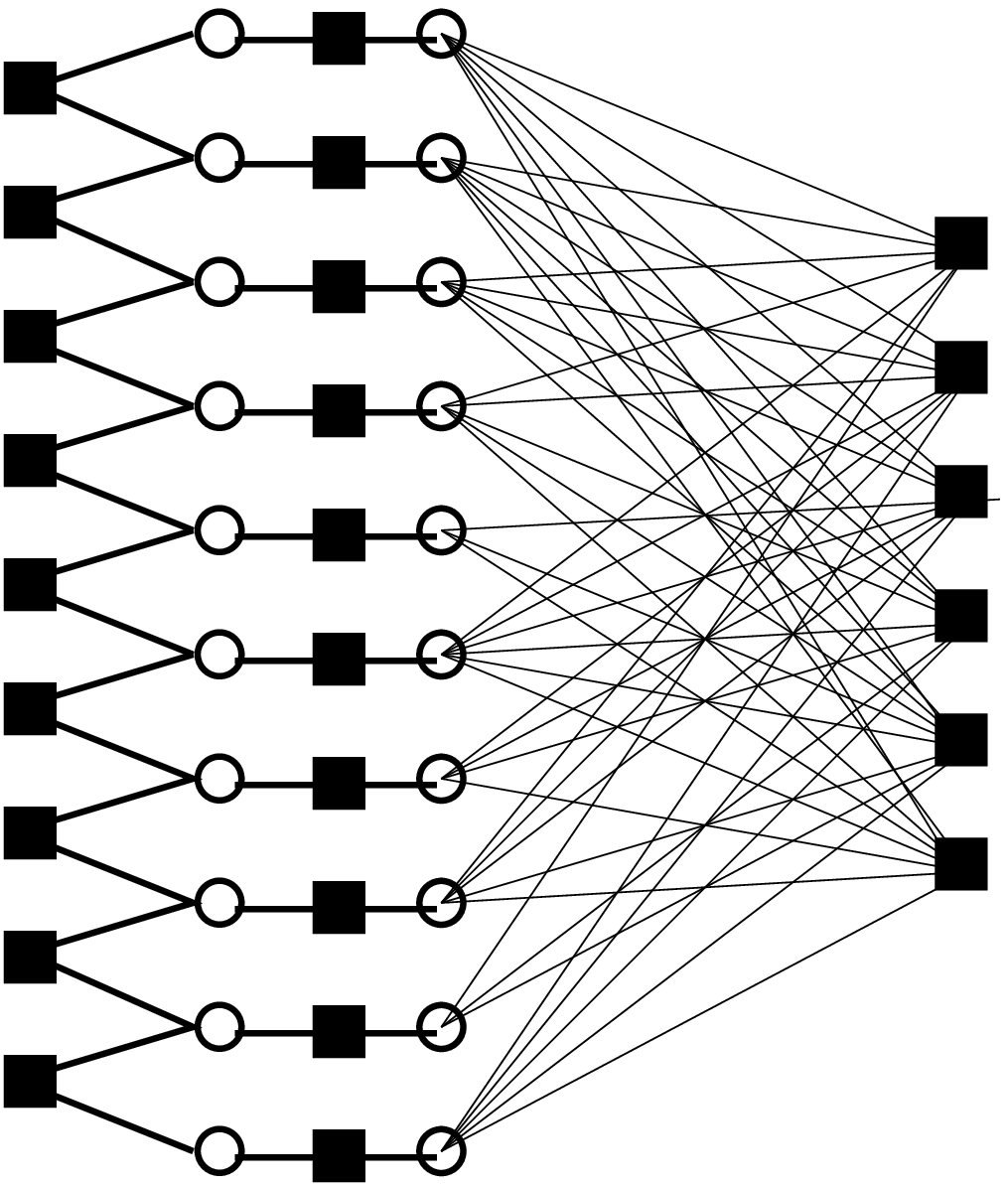}
\put(-214,75){$x_1$}
\put(0,75){$x_n$}
\put(-214,40){$s_1$}
\put(0,40){$s_n$}
\put(-165,172){$1$}
\put(-55,172){$m$}
\end{center}
\caption{{\small Graphical model for compressed sensing of
signals with clustered support. The support structure is described by 
an Hidden Markov Model comprising the lower
factor nodes (filled squares) and variable nodes (empty circles).
Upper variable nodes correspond to the signal entries $x_i$,
$i\in [n]$, and upper factor nodes to the measurements 
$y_a$, $a\in [m]$.}}
\label{fig:FactorGraphMC}
\end{figure}

The graphical model framework is particularly convenient
for exploiting prior information that is probabilistic in 
nature, see in particular \cite{CevherMRF,CevherReview}. 
A prototypical example was studied by Schniter \cite{SchniterTurbo} 
who considered the case in
which the signal $x$ is generated by an Hidden Markov Model
(HMM). As for the block-sparse model, this can be used to model 
signals in which the non-zero coefficients are clustered, although in this
case one can accomodate greater stochastic variability of the
cluster sizes.

In the simple case studied in detail in 
\cite{SchniterTurbo}, the underlying Markov chain 
has two states indexed by $s_i\in \{0,1\}$, and
\begin{eqnarray}
p(\de x)= \sum_{s_1,\dots, s_n} 
\Big\{\prod_{i=1}^np(\de x_i|s_i)
\cdot \prod_{i=1}^{n-1}p(s_{i+1}|s_i)\, \cdot p_1(s_1)\Big\}\, ,
\end{eqnarray} 
where $p(\,\cdot\,|0)$ and $p(\,\cdot\,|1)$ belong to two different
sparsity classes $\cF_{\ve_0}$, $\cF_{\ve_1}$.
For instance one can consider the case in which $\ve_0=0$ and
$\ve_1=1$, i.e. the support of $x$ coincides with the subset of coordinates
such that $s_i=1$.

\begin{figure}
\begin{center}
\includegraphics[width=6.5cm,angle=90]{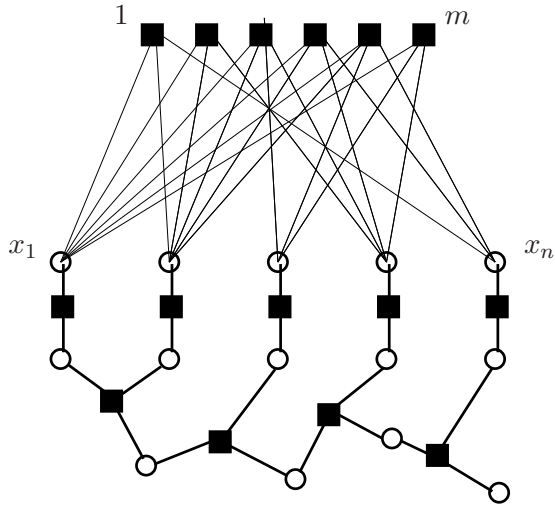}
\put(-190,95){$x_1$}
\put(5,95){$x_n$}
\put(-150,182){$1$}
\put(-25,182){$m$}
\end{center}
\caption{{\small Graphical model for compressed sensing of
signals with tree-structured prior. The support structure is
a tree graphical model, comprising factor nodes and variable nodes
in the lower part of the graph.
Upper variable nodes correspond to the signal entries $x_i$,
$i\in [n]$, and upper factor nodes to the measurements 
$y_a$, $a\in [m]$.}}
\label{fig:TreePrior}
\end{figure}

Figure \ref{fig:FactorGraphMC} reproduces the graphical structure 
associated with this type of models. This
can be partitioned in two components: a bipartite
graph corresponding to the compressed sensing measurements 
(upper part in Fig.~ \ref{fig:FactorGraphMC}) and
a chain graph corresponding to the Hidden Markov Model
structure of the prior (lower part in Fig.~ \ref{fig:FactorGraphMC}).

Reconstruction was performed in \cite{SchniterTurbo} using
a suitable generalization of AMP. Roughly speaking, 
inference is performed in the upper half of the graph using 
AMP and in the lower part using the standard forward-backward 
algorithm. 
Information is exchanged across the two component in a 
way that is very similar to what happens in turbo codes \cite{RiU08}.

\vspace{0.2cm}

The example of HMM priors clarifies the usefulness of
the graphical model structure in eliciting tractable
substructures in the probabilistic model and hence leading to
natural iterative algorithms. For an HMM prior, inference
can be performed efficiently because the underlying graph is a 
simple chain.

A broader class of priors for which inference is tractable is 
provided by Markov-tree distributions \cite{SchniterTree}. 
These are graphical models
that factors according to a tree graph (i.e. a graph without loops).
A cartoon of the resulting compressed sensing model is 
reproduced in Figure \ref{fig:TreePrior}.

The case of tree-structured priors is particularly relevant in imaging 
applications. Wavelet coefficients
of natural images are sparse (an important motivating remark
for compressed sensing) and non-zero entries tend to be localized
along edges in the image. As a consequence, they cluster in subtrees
of the tree of wavelet coefficients. A Markov-tree prior
can capture well this structure.

Again, reconstruction is performed exactly on the tree-structured prior
(this can be done efficiently using belief propagation),
while AMP is used to do inference over the compressed sensing measurements
(the upper part of Figure \ref{fig:TreePrior}).
%
%
\subsection{Sparse sensing matrices}

Throughout this review we focused for simplicity on dense measurement
matrices $A$. Several of the mathematical results presented in the previous
sections do indeed hold for dense matrices with i.i.d. 
components. Graphical models ideas are on the other hand
particularly useful for sparse measurements.

Sparse sensing matrices present  several 
advantages, most remarkably lower measurement and 
reconstruction complexities \cite{IndykCS}. While sparse constructions are
not suitable for all applications, they appear a promising
solution for networking applications, most notably in network
traffic monitoring \cite{Cormode,LuEtAl}.

\begin{figure}
\begin{center}
\includegraphics[width=3.75cm,angle=90]{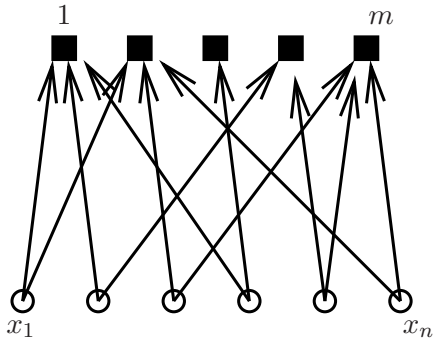}
\put(-154,-7){$x_1$}
\put(-4,-7){$x_n$}
\put(-135,110){$1$}
\put(-17,110){$m$}
\end{center}
\caption{{\small Sparse sensing graph arising in a networking application.
Each network flow (empty circles below) 
hashes into $k=2$ counters (filled squares).}}
\label{fig:Sparse}
\end{figure}
In an over-simplified example, one would like to monitor
the sizes of $n$ packet flows at a router.
It is a recurring empirical observation that most of the
flows consist of a few packets, while most of the traffic is 
accounted for by a few flows.
Denoting by $x_1$, $x_2$, \dots $x_n$ the flow sizes
(as measured, for instance, by the number of packets belonging to the flow),
it is desirable to maintain a small sketch of the
vector $x=(x_1,\dots,x_n)$. 

Figure \ref{fig:Sparse} describes a simple approach:
flow $i$ hashes into a small number --say $k$-- of memory spaces,
$\di = \{a_1(i),\dots, a_k(i)\}\subseteq [m]$. Each time a new packet
arrives for flow $i$, the counters in $\di$ are incremented.
If we let $y=(y_1,\dots , y_m)$ be the contents of the counters, we have
\begin{eqnarray}
y = Ax\, ,
\end{eqnarray}
where $x\ge 0$ and $A$ is a matrix with i.i.d. columns with $k$
entries per column equal to $1$ and all the other entries equal to $0$. 
While this simple scheme requires unpractically deep counters 
(the entries of $y$ can be large), \cite{LuEtAl} showed how to
overcome this problem by using a multi-layer graph.

Numerous algorithms were developed for compressed sensing reconstruction
with sparse measurement matrices
\cite{Cormode,XuHassibi,IndykCS,IndykCS2}. 
Most of these algorithms are based on 
greedy methods, which are essentially of message passing type.
Graphical models ideas can be used to construct such algorithms in
a very natural way. For instance,
the algorithm of \cite{LuEtAl} (see also \cite{LuAllerton1,LuAllerton2,Chandar}
for further analysis of the same algorithm) is closely related to the ideas presented in the rest of 
this chapter. It uses messages $x^{t}_{i\to a}$ (from variable nodes 
to function nodes) and $r_{a\to i}^t$ (from function nodes to variable nodes).
These are updated according to
\begin{eqnarray}
r^t_{a\to i} & = & y_a-\sum_{j\in\da\setminus i} x_{j\to a}^t\, ,\\
x^{t+1}_{i\to a} & = &
\left\{
\begin{array}{ll}
\min\big\{r^t_{b\to i} :\;\; b\in \di\setminus a\big\} &
\mbox{at even iterations $t$,}\\
\max\big\{r^t_{b\to i} :\;\; b\in \di\setminus a\big\} &
\mbox{at odd iterations $t$,}
\end{array}
\right.
\end{eqnarray}
where $\da$ denotes the set of neighbors of node $a$ 
in the factor graph.
These updates are very similar to Eqs.~(\ref{eq:mp-repeated}), 
(\ref{eq:mp-repeated-bis}) introduced earlier in our derivation of
AMP.

%
%
\subsection{Matrix completion}

`Matrix completion' is the task of inferring an (approximately) low rank 
matrix from observations on a small subset of its entries.
This problem has attracted considerable interest offer the last
two years due to its relevance in  a number of applied domains
(collaborative filtering, positioning, computer vision, etc.).

Significant progress has been achieved on the theoretical side.
The reconstruction question has been addressed in analogy with 
compressed sensing in
\cite{CaR09,CaP10,Gross09,Negahban09},
while  an alternative approach based on greedy methods was developed 
in \cite{KOM10,KOM10Noisy,KM11mp}.
While the present chapter does not treat 
matrix completion in any detail, it is interesting to
mention that graphical models ideas
can be useful in this context as well. 

\begin{figure}
\begin{center}
\includegraphics[width=3.25cm,angle=90]{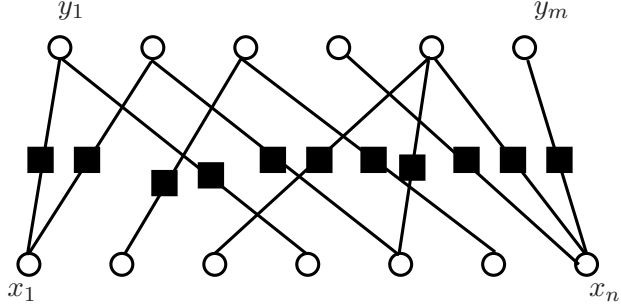}
\put(-224,-7){$x_1$}
\put(-4,-7){$x_n$}
\put(-205,100){$y_1$}
\put(-25,100){$y_m$}
\end{center}
\caption{{\small Factor graph describing the cost function
(\ref{eq:CostFunction}) for the matrix completion problem.
Variables $x_i,y_j\in\reals^r$ are to be optimized over.
The cost is a sum of pairwise terms (filled squares)
corresponding to the observed entries in $M$.}}
\label{fig:MatrixCompletion}
\end{figure}
Let $M\in\reals^{m\times n}$ be the matrix to be reconstructed, 
and assume that a subset $E\subseteq[m]\times [n]$ 
of its entries is observed. It is natural to try to accomplish
this task by minimizing the $\ell_2$ distance on observed entries.
For $X\in\reals^{m\times r}$, $Y\in\reals^{n\times r}$,
we introduce therefore the cost function
\begin{eqnarray}
\cost(X,Y) = \frac{1}{2}\,\|\cP_E(M-XY^T)\|_F^2
\end{eqnarray}
where $\cP_E$ is the projector that sets to zero the entries outside
$E$ (i.e. $\cP_E(L)_{ij}=L_{ij}$ if $(i,j)\in E$ and $\cP_E(L)_{ij}=0$ 
otherwise).
If we denote the rows of $X$ as $x_1,\dots, x_m\in\reals^r$
and the rows in $Y$  as $y_1,\dots, y_n\in\reals^r$,
the above cost function can be rewritten
as
\begin{eqnarray}
\cost(X,Y) = \frac{1}{2}\sum_{(i,j)\in E}\big(M_{ij}-\<x_i,y_j\>
\big)^2\, ,\label{eq:CostFunction}
\end{eqnarray}
with $\<\,\cdot\, ,\,\cdot,\>$ the standard scalar product on $\reals^r$.
This cost function factors accordingly the
bipartite graph $G$ with vertex sets $V_1=[m]$ and $V_2=[n]$
and edge set $E$. The cost decomposes as a sum of pairwise terms associated
with the edges of $G$.

Figure \ref{fig:MatrixCompletion} reproduces the graph $G$ that is 
associated to the cost function $\cost(X,Y)$. It is remarkable 
some properties of the reconstruction problem can be `read'
from the graph. For instance, in the simple case $r=1$,
the matrix $M$ can be reconstructed if and only if
$G$ is connected (banning for degenerate cases) \cite{KMO08}.
For higher values of the rank $r$, rigidity of the graph is
related to uniqueness of the solution of the reconstruction
problem \cite{Rigidity}.
Finally, message passing algorithms for this problem were studied
in \cite{IMP,KM11mp}.

%
%
\subsection{General regressions}

The basic  reconstruction method discussed in this 
review is the regularized least-squares regression
defined in Eq.~(\ref{eq:LASSO}), also known as the LASSO.
While this is by far the most interesting setting for
signal processing applications, for a number of
statistical learning problems, the linear model
(\ref{eq:FirstModel}) is not appropriate. Generalized linear
models provide a flexible framework to extend the ideas discussed here.

 An important example is logistic 
regression, which is particularly suited for the case 
in which the measurements $y_1,$ $\dots$ $y_m$ are $0$--$1$ valued. 
Within logistic regression, these are modeled
as independent Bernoulli random variables with
\begin{eqnarray}
p(y_a=1|x) = \frac{e^{A_a^Tx}}{1+e^{A_a^Tx}}\, ,
\end{eqnarray}
with $A_a$ a vector of `features' that characterizes the $a$-th 
experiment. The objective is to learn the vector $x$ of coefficients
that encodes the relevance of each feature.
A possible approach consists in minimizing the regularized 
(negative) log-likelihood, that is
\begin{align}
\cost^{{\rm LogReg}}_{A,y}(z) & \equiv -\sum_{a=1}^m y_a (A_a^Tz)+
\sum_{a=1}^m\log\big(1+e^{A_a^Tz}\big)+  \lambda
\|z\|_1\, ,
\end{align}
The papers \cite{RanganGen,BayatiLogReg} develops 
approximate message passing 
algorithms for solving optimization problems of this type.

%
%

\section*{Acknowledgements}

It is a pleasure to thank  
Mohsen Bayati, Jose Bento,
David Donoho and Arian Maleki,
with whom this research has been developed.
This work was partially supported by a Terman fellowship,
the NSF CAREER award CCF-0743978 and the NSF grant DMS-0806211.

%
%
\newpage

%
%
\bibliographystyle{amsalpha}

\newcommand{\etalchar}[1]{$^{#1}$}
\providecommand{\bysame}{\leavevmode\hbox to3em{\hrulefill}\thinspace}
\providecommand{\MR}{\relax\ifhmode\unskip\space\fi MR }
\providecommand{\MRhref}[2]{%
  \href{http://www.ams.org/mathscinet-getitem?mr=#1}{#2}
}
\providecommand{\href}[2]{#2}

\end{document}